\documentclass{article}
\usepackage[utf8]{inputenc}
\usepackage[margin=1in]{geometry}
\usepackage{macros}
\usepackage[noend]{algpseudocode}
\usepackage{algorithm}

\usepackage{thmtools}

\newcommand{\cut}[1]{}

\newcommand{\metric}{\mathcal{M}}
\renewcommand{\dist}{\operatorname{d}}
\newcommand{\lip}[1]{\|#1\|_{\mathrm{Lip}}}
\newcommand{\range}{\Omega}
\newcommand{\bin}{\omega}
\newcommand{\lsh}{\mathcal{H}}
\newcommand{\disp}{\beta}
\renewcommand{\dim}{\operatorname{dim}}

\renewcommand{\tilde}{\widetilde}

\newenvironment{subproof}[1][\proofname]{%
  \begin{proof}[#1]%
}{%
  \end{proof}%
}

\title{Near Neighbor Search via Efficient Average Distortion Embeddings}
\author{Deepanshu Kush \\ University of Toronto \and Aleksandar Nikolov\thanks{Supported by an NSERC Discovery Grant (RGPIN-2016-06333), and a Tier II Canada Research Chair.} \\ University of Toronto \and Haohua Tang \\ University of Toronto}
\date{}

\begin{document}
\maketitle
\begin{abstract}
A recent series of papers by Andoni, Naor, Nikolov, Razenshteyn, and Waingarten (STOC 2018, FOCS 2018) has given approximate near neighbour search (NNS) data structures for a wide class of distance metrics, including all norms. In particular, these data structures achieve approximation on the order of \(p\) for \(\ell_p^d\) norms with space complexity nearly linear in the dataset size \(n\) and polynomial in the dimension \(d\), and query time sub-linear in \(n\) and polynomial in \(d\). The main shortcoming is the \emph{exponential in \(d\) pre-processing time} required for their construction.

In this paper, we describe a more direct framework for constructing NNS data structures for general norms. More specifically, we show via an algorithmic reduction that an efficient NNS data structure for a metric \(\metric\) is implied by an efficient average distortion embedding of \(\metric\) into \(\ell_1\) or the Euclidean space. In particular, the resulting data structures require only \emph{polynomial pre-processing time}, as long as the embedding can be computed in polynomial time.

As a concrete instantiation of this framework, we give an NNS data structure for \(\ell_p\) with \emph{efficient pre-processing} that matches the approximation factor, space and query complexity of the aforementioned~data structure of Andoni et al. On the way, we resolve a question of Naor (Analysis and Geometry in Metric Spaces, 2014) and provide an explicit, efficiently computable embedding of \(\ell_p\), for \(p \ge 2\), into \(\ell_2\) with (quadratic) average distortion on the order of \(p\). Furthermore, we also give data structures for Schatten-\(p\) spaces with improved space and query complexity, albeit still requiring exponential pre-processing when \(p\ge 2\). We expect our approach to pave the way for constructing efficient NNS data structures for all norms.
\end{abstract}
\newpage
\section{Introduction}

Nearest neighbor search is a fundamental problem in computational geometry. In this problem, we are given an \(n\)-point subset \(P\)  of a metric space \(\metric\) with a distance function \(\dist_\metric\), and our goal is to pre-process \(P\) into a data structure that, given a query point \(q\in \metric\), finds a point \(x \in P\) minimizing \(\dist_\metric(x,q)\). The main parameters of a nearest neighbor search data structure are
\begin{itemize}
    \item the \emph{pre-processing} time required to construct the data structure given \(P\);
    \item the \emph{space} taken up by the data structure,  in words of memory;
    \item the \emph{query time} required to answer a nearest neighbor query.
\end{itemize}
A trivial solution is to store \(P\) as a list of points and to answer queries by linear search. Ignoring the time required to compute distances, this solution takes \(\Theta(n)\) space, but also requires \(\Theta(n)\) query time, which is prohibitively large when we have a large dataset and expect to answer many queries. In some cases, it is possible to use the geometry of the metric to design data structures with much more efficient query procedures and nearly the same space requirements. For instance, Lipton and Tarjan~\cite{LiptonT80} gave a data structure for the nearest neighbor problem in the $2$-dimensional Euclidean plane with \(O(n\log n)\) pre-processing, \(O(n)\) space, and \(O(\log n)\) query time. This result has been extended to \(d\)-dimensional Euclidean space (see e.g.~\cite{Meiser93}), and other \(d\)-dimensional normed spaces, but with exponential dependence of the space and/or query time on the dimension $d$. There is no known nearest neighbor data structure for the \(d\)-dimensional Euclidean space that achieves space that is polynomial in \(n\) and \(d\), and query time that is polynomial in \(d\), and sub-linear in \(n\) (i.e., \(O(\poly(d)\cdot n^{1-\alpha})\) for some \(\alpha > 0\)).\footnote{Here and in the rest of the paper, we use the notation \(\poly(A)\) to denote the class of polynomials in the expression \(A\).}  There is, furthermore, some evidence that no such data structure exists~\cite{Williams05}.

The nearest neighbor search problem finds a multitude of applications beyond computational geometry, in areas as diverse as databases, computer vision, and machine
learning. For example, it is used to find joinable tables in publicly
available data~\cite{MillerNZCPA18}; for object recognition~\cite{Mel97}
and shape matching~\cite{BelongieMP02} in computer vision; to solve analogical
reasoning tasks~\cite{MikolovSCCD13}; in machine learning,
the $k$-Nearest Neighbors classifier is a common baseline. In these applications, often both the dataset size \(n\) and the dimension \(d\) are large, making query time linear in \(n\) or exponential in \(d\) unacceptable. It makes sense then to relax this problem in the hope of allowing for efficient data structures in the high-dimensional regime. A common relaxation is to allow returning an \emph{approximately nearest} neighbor to the query point \(q\), i.e., a point \(x\in P\) for which \(\dist_\metric(x,q) \le c \min_{y \in P}\dist_\metric(y,q)\) for some approximation factor \(c > 1\). A long and fruitful line of work, recently surveyed in~\cite{AIR18}, has shown that it is possible to construct data structures for this approximate nearest neighbor problem over certain spaces such as the \(d\)-dimensional Euclidean or Manhattan distance that 
use space \(O(n^{1+\varepsilon}\cdot \poly(d))\) and support queries in time \(O(n^{\varepsilon}\cdot \poly(d))\), for a constant \(\varepsilon < 1\) that tends to \(0\) as the approximation factor \(c\) tends to infinity.

Rather than solving the nearest neighbor search problem directly, it is more convenient to fix a scale for the distance, and work with the \((c,r)\)-near neighbor search (\((c,r)\)-NNS) problem, defined below.
\begin{definition}
In the \((c,r)\)-near neighbor search  (\((c,r)\)-NNS) problem, we are given a set of \(n\) points \(P\) in a metric space \((\metric, \dist_\metric)\), and are required to build a data structure so that, given a query point \(q\in \metric\) with the guarantee that \(\dist_\metric (x^*,q) \le r\) for some \(x^* \in P\), we can use the data structure to output a point \(x\in P\) satisfying \(\dist_\metric(x,q)\le cr\) with probability at least \(\frac23\).
\end{definition}
It was shown by Indyk and Motwani~\cite{IM98} that the approximate nearest neighbor problem can be reduced to solving \(\poly(\log n)\) instances of the \((c,r)\)-NNS problem. Therefore, we focus on the latter problem from this point onward. 

Most (but not all) efficient data structures for the NNS problem in the high-dimensional regime are based on the idea of locality sensitive hashing (LSH), introduced by Indyk and Motwani~\cite{IM98}. A locality sensitive family of hash functions is a probability distribution \(\lsh\) over random functions \(h:\metric \to \range\) such that pairs of close points are much more likely to be mapped by \(h\) to the same value than far points are. In particular, pairs of points at distance at most \(r\) get mapped to the same value with probability at least \(p_1\), while pairs of points at distance at least \(cr\) get mapped to the same value with probability at most \(p_2\), with \(p_1 > p_2\).  Indyk and Motwani showed that an LSH family implies a data structure for the \((c,r)\)-NNS problem with space \(O(n^{1+\rho}\log_{1/p_2}(n))\), and query time \(O(n^\rho\log_{1/p_2}(n))\), where \(\rho = \frac{\log(1/p_1)}{\log(1/p_2)}\)~\cite{IM98}. Moreover, they constructed LSH families for the Hamming and Manhattan (i.e., \(\ell_1\)) distances with \(\rho \approx \frac1c\). Subsequent work also showed the existence of LSH families for the Euclidean distance (i.e., \(\ell_2\)), as well as the \(\ell_p\) norms for \(1 \le p \le 2\), and improved the parameters~\cite{AI06,DIIM04}. 
The LSH definition above has the property that the distribution \(\lsh\) is independent of the dataset \(P\). Sometimes, however, data structures with better trade-offs can be constructed by allowing \(\lsh\) to depend on \(P\)~\cite{AINR14,ALRW16a,AR15}. 

Until recently, relatively little was known about the NNS problem in high dimension beyond the \(\ell_p\) spaces for \(1 \le p\le 2\), and the \(\ell^d_\infty\) space,\footnote{Recall that the \(\ell_p^d\) norm on \(\R^d\) is defined by \(\|x\|_p = (\sum_{i = 1}^d |x_i|^p)^{1/p}\) for \(1\le p < \infty\), and \(\|x\|_\infty = \max_{i=1}^d |x_i|\).} for which Indyk gave an efficient deterministic decision tree data structure with approximation \(O(\log \log d)\)~\cite{I01}. Data structures for other spaces can be constructed by reducing to these special cases via bi-Lipschitz embeddings. I.e., if for some metric \(\metric\) we can find an efficiently computable injection \(f:\metric \to \ell_2^d\) such that \(\|f(x)-f(y)\|_2 \approx \dist_\metric(x,y)\) for all \(x,y\in \metric\), then we can use NNS data structures for \(\ell_2^d\) to solve the NNS problem in \(\metric\). The best approximation factor achievable by this approach depends on the \emph{distortion} of \(f\), which measures how well \(\|f(x)-f(y)\|_2\)  approximates  \(\dist_\metric(x,y)\) in the worst case, and is defined as \(\lip{f}\cdot  \lip{f^{-1}}\), where 
\[
\lip{f} \coloneqq \sup_{x\neq y, x,y \in \metric}\frac{\|f(x)-f(y)\|}{\dist_\metric(x,y)},
\hspace{4em}
\lip{f^{-1}} \coloneqq \sup_{x\neq y, x,y \in \metric}\frac{\dist_\metric(x,y)}{\|f(x)-f(y)\|_2}
\]
are the Lipschitz constants of \(f\) and its inverse, respectively. Although this approach does yield some non-trivial results (see~\cite{AIR18} for a survey), it only produces data structures with approximation \(c \ge d^{\frac12 - \frac1p}\) even in the special case of \(\ell_p^d\) with \(p > 2\), as this is the best possible distortion achievable by a bi-Lipschitz embedding into \(\ell_2\) (see, e.g., \cite{BL00}). It is natural to ask if the best approximation achievable by an efficient NNS data structure for a metric \(\metric\) is characterized by the optimal distortion of a bi-Lipschitz embedding into \(\ell_2\). More generally, a fundamental problem in high-dimensional computational geometry is to \emph{determine the geometric properties of a metric space that allow efficient and accurate NNS data structures.}

A recent line of work showed that the answer to the first question above is negative in a strong sense, and there exist efficient NNS data structures for a wide range of metrics with approximation much better than what is implied by bi-Lipschitz embeddings into \(\ell_2^d\) or \(\ell_1^d\)~\cite{spectral,daher}. These papers give data-dependent LSH families for any \(d\)-dimensional normed space with approximation factor that is sub-polynomial in the dimension \(d\). A sample theorem is the following.
\begin{theorem}[\cite{spectral}]\label{thm:spectral-ellp}
For any \(r > 0\), \(p \ge 2\) and any \(\varepsilon\in (0,1)\), there is some \(c \lesssim \frac{p}{\varepsilon}\) such that the following holds. For any set \(P\) of \(n\) points in \(\R^d\) such that for all \(x\in \R^d\) we have \(|B_{\ell_p^d}(x,cr) \cap P| \le \frac{n}{2}\), there exists a probability distribution on axis-aligned boxes \(S\) satisfying
\begin{align*}
    &\Pr\left[\frac{n}{4} \le |S \cap P| \le \frac{3n}{4}\right]
    = 1\\
    &\|x-y\|_p \le r \implies
    \Pr[|S \cap \{x,y\}| =1] \le \varepsilon.
\end{align*}
\end{theorem}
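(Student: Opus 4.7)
The plan is to construct $S$ as a random axis-aligned half-space $\{z \in \R^d : z_i \le \tau\}$, which is a (degenerate) axis-aligned box with all other sides equal to $\R$. First, translate so that the coordinate-wise median of $P$ is at the origin. For each coordinate $i$, define a spread parameter $T_i$ (initially, the interquartile range of $\{x_i : x \in P\}$). Sample a coordinate $i$ with probability $p_i \propto T_i^p / \sum_j T_j^p$, and sample $\tau$ uniformly in $[x_i^{(n/4)}, x_i^{(3n/4)}]$. Balancedness $n/4 \le |S\cap P| \le 3n/4$ then holds deterministically by construction.

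For a pair $x, y$ with $\|x-y\|_p \le r$, the conditional probability of separation given that coordinate $i$ is sampled is at most $|x_i - y_i|/T_i$. Hence
\[
\Pr[\text{cut}] \;\le\; \sum_i \frac{T_i^p}{\sum_j T_j^p}\cdot\frac{|x_i-y_i|}{T_i} \;=\; \frac{\sum_i T_i^{p-1}|x_i-y_i|}{\sum_j T_j^p}.
\]
H\"older's inequality with exponents $p$ and $p/(p-1)$ bounds the numerator by $\bigl(\sum_i T_i^p\bigr)^{(p-1)/p}\|x-y\|_p$, yielding $\Pr[\text{cut}] \le r / \bigl(\sum_j T_j^p\bigr)^{1/p}$. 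To achieve $\Pr[\text{cut}] \le \varepsilon$ with the stated approximation $c \lesssim p/\varepsilon$, it thus suffices to establish the lower bound $\bigl(\sum_j T_j^p\bigr)^{1/p} \gtrsim cr/p$.

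The technical heart, and the main obstacle I anticipate, is proving this lower bound from the spread hypothesis. The hypothesis readily yields $\sum_j M_j \ge \tfrac{n}{2}(cr)^p$ where $M_j := \sum_x |x_j|^p$, but the interquartile spread $T_j$ is only one-sidedly related to $M_j$: one has $T_j^p \lesssim M_j/n$, while the reverse inequality can fail dramatically when $M_j$ is dominated by a handful of outlier points in coordinate $j$. My plan for handling this is (i)~to redefine $T_j$ as a robust, truncated $p$-th moment, e.g.~the $p$-th moment of those coordinate values $x_j$ that lie within a robust threshold of the median; and (ii)~to exploit the spread condition not only at the median but at many shifted centers, so that if $M_j$ in some coordinate is dominated by outliers, then centering an $\ell_p$-ball around one of those outliers still rules out more than $n/2$ nearby points and forces the remaining $\ell_p$-mass into other coordinates. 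The extra factor of $p$ in $c \lesssim p/\varepsilon$ is expected to emerge from these truncation losses, ultimately reflecting the type-$2$ constant $\sqrt p$ of $\ell_p^d$ entering both the H\"older step and the moment comparison; if necessary, the half-space $S$ can be refined to a slab $\{z : \tau_1 \le z_i \le \tau_2\}$ with two random endpoints to gain additional flexibility against pathological coordinate distributions.
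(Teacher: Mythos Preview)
This theorem is quoted from \cite{spectral} and is not re-proved in the present paper; the introduction explicitly notes that the argument in \cite{spectral} proceeds via duality and does not yield an efficient sampler. The paper's own constructive machinery (the re-scaled Mazur map $\tilde M_{p,1}$ followed by the $\ell_1$ LSH) produces a data-dependent hash family with the same parameters, but its cells are preimages of $\ell_1$-boxes under $\tilde M_{p,1}$, and since $\tilde M_{p,1}(x)_i=\|x\|_p^{1-p}\,\mathrm{sign}(x_i)\,|x_i|^p$ depends on all coordinates through $\|x\|_p$, those preimages are \emph{not} axis-aligned boxes in $\R^d$. So neither route available in the paper matches your plan.

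Your direct half-space construction has a gap that is more structural than you anticipate. Take $d=5$ and let $P$ be five groups of $n/5$ points, group $j$ placed at (or within $\delta$ of) $Le_j$. Pairwise intergroup distances are of order $L$, so $P$ is $(cr,\tfrac12)$-dispersed for $cr$ a constant fraction of $L$. In every coordinate, four of the five groups sit near $0$ and one near $L$; hence the interquartile range---indeed \emph{any} per-coordinate statistic robust to an $n/5$ fraction of outliers---is $O(\delta)$, and $(\sum_j T_j^p)^{1/p}\lesssim\delta$. Your required inequality $(\sum_j T_j^p)^{1/p}\gtrsim cr/p$ thus fails whenever $\delta\ll L/p$, and the cut-probability bound becomes vacuous. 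Worse, at $\delta=0$ no axis-aligned half-space or slab is balanced at all: $\{x:x_j\le\tau\}$ always contains either $n/5$, $4n/5$, or all of $P$. A balanced axis-aligned region here must constrain at least two coordinates simultaneously (e.g.\ $\{x_1\le L/2,\,x_2\le L/2\}$ captures $3n/5$ points).

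Your proposed fixes do not escape this example. ``Robust truncated moments'' are precisely what vanish here, since the signal in each coordinate lives entirely in the top $n/5$ tail. Re-centering at an outlier such as $Le_1$ and invoking dispersedness there yields, by the symmetry of the example, the same picture rotated. And refining to a slab in a single coordinate still cannot be balanced. A construction that genuinely uses multi-coordinate boxes and an analysis that does not factor through per-coordinate spreads appears necessary; this is presumably why \cite{spectral} resorts to a non-constructive duality argument, and why the present paper abandons axis-alignment in favor of the nonlinear $\tilde M_{p,1}$.
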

Here, \(B_{\ell_p^d}(x,cr)=\{y\in \R^d: \|y-x\|_p \le cr\}\) is the \(\ell^d_p\)-ball of radius \(cr\) centered at \(x\). Moreover, here and in the rest of the paper, the notation \(A \lesssim B\) means that there exists an absolute constant \(C\), independent of all other parameters, such that \(A \le CB\). The notation \(A \gtrsim B\) is equivalent to \(B\lesssim A\).

Theorem~\ref{thm:spectral-ellp} gives a form of LSH: we can think of points inside \(S\) as being mapped to \(1\), and points outside mapped to \(0\). We still have the \(p_1\) condition: close points get mapped to the same value with probability at least \(1-\varepsilon\). Rather than guaranteeing that far points are less likely to be mapped to the same value, we have a data-dependent condition: if the dataset contains no dense clusters of points, then most pairs of points are mapped to different values. It is not hard to construct a randomized decision tree using Theorem~\ref{thm:spectral-ellp}, and \cite{spectral} showed how to use it to give a data structure for \((c,r)\)-NNS over \(\ell_p^d\) with approximation \(c\lesssim \frac{p}{\varepsilon}\), space \(O(n^{1+\varepsilon}\cdot\poly(d))\) and query time \(O(n^{\varepsilon}\cdot\poly(d))\). Similar results were also proved for the Schatten-\(p\) norms, which extend the \(\ell_p\) norms to matrices, and for arbitrary norms (with appropriate access to the norm ball) in~\cite{daher}.

Theorem~\ref{thm:spectral-ellp}, however, has a significant shortcoming: it does not guarantee that the distribution over axis-aligned boxes can be sampled efficiently given \(P\) as input. Indeed, the proof of the theorem in~\cite{spectral}, as well as the proofs of other similar results in~\cite{spectral,daher}, rely on a duality argument and yield sampling algorithms with running time exponential in the dimension. For this reason, the resulting data structures have pre-processing time that is also exponential in the dimension. These works thus raise an intriguing open problem: \emph{Can we sample a distribution such as the one in Theorem~\ref{thm:spectral-ellp} in time polynomial in \(n\) and \(d\)?}

\subsection{Our Results on Near Neighbor Search}
In this work, we resolve the open problem above (also posed explicitly in~\cite{spectral}), and prove the following theorem. 
\begin{restatable}{theorem}{dsellp}\label{thm:ds-ellp}
Let \(\varepsilon \in (0,1]\), \(r > 0\), \(p \ge 2\). For some \(c \lesssim \frac{p}{\varepsilon}\), there exists a data structure for the \((c,r)\)-NNS problem over \(n\)-point sets in \(\ell_p^d\) with 
\begin{itemize}
    \item pre-processing time \(\poly(nd)\);
    \item space \(O(n^{1+\varepsilon}\log(n) \cdot\poly(d))\);
    \item query time \(O(n^{\varepsilon}\log(n) \cdot\poly(d))\).
\end{itemize}
\end{restatable}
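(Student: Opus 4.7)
The plan is to build a randomized decision-tree $(c,r)$-NNS data structure along the lines of~\cite{spectral}, with the new ingredient being that all of its cuts are sampled using an explicit, efficiently computable embedding of $\ell_p^d$ into $\ell_2$ with quadratic average distortion $O(p)$. The tree itself follows the standard pattern: at each internal node, holding a subset $P' \subseteq P$, either some $\ell_p^d$-ball of radius $cr$ captures more than half of $P'$, in which case I prune to that ball and carry its center as a fallback answer, or else I sample a balanced locality-sensitive cut of $P'$ and recurse on the side containing the query. With depth $O(\log n)$, branching set so close pairs survive with probability $n^{-\varepsilon}$, and the usual parallel-trees amplification, this yields the claimed $n^{1+\varepsilon}\log n \cdot \poly(d)$ space and $n^\varepsilon \log n \cdot \poly(d)$ query time; so the only genuinely new work is replacing the duality-based sampling of~\cite{spectral} by a polynomial-time procedure.

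For the embedding, I would construct an explicit map $f: \ell_p^d \to \ell_2^D$ with $D = \poly(n)$ that is $1$-Lipschitz and satisfies $(\mathbb{E}_{x,y \sim P'} \|x-y\|_p^2)^{1/2} \lesssim p \cdot (\mathbb{E}_{x,y \sim P'} \|f(x)-f(y)\|_2^2)^{1/2}$. A natural candidate is a rescaled Mazur map $x \mapsto \Delta^{1-p/2}(|x_i|^{p/2}\operatorname{sgn}(x_i))_{i=1}^d$ followed by a Johnson--Lindenstrauss projection, where $\Delta$ is the $\ell_p$-diameter of $P'$. On the $\ell_p$-ball of radius $\Delta$ the normalization makes the map $1$-Lipschitz, and the quadratic average lower bound follows, up to a factor of $p$, from the $p$-uniform convexity inequality for $\ell_p$ with $p \geq 2$. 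This would constitute an explicit, polynomial-time version of the embedding sought in Naor's question.

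Given $f$, I would sample the balanced LSH cut by drawing a Gaussian vector $h \in \R^D$, computing the scalars $\pi(x) \coloneqq \langle h, f(x)\rangle$ for $x \in P'$, and splitting $P'$ at a threshold drawn uniformly in the interquartile range of $\{\pi(x)\}$. Balance is then automatic. For a pair $x, y$ with $\|x-y\|_p \leq r$, the $1$-Lipschitz property of $f$ gives $|\pi(x) - \pi(y)|$ Gaussian with standard deviation at most $r \|h\|_2$, so the threshold falls between them with probability $O(r \|h\|_2/\sigma)$, where $\sigma$ is the interquartile range of $\{\pi(x)\}$. The no-dense-cluster hypothesis, enforced by the pruning step, forces the average pairwise $\ell_p$-distance in $P'$ to be $\gtrsim cr$, and feeding this through the average-distortion bound yields $\sigma \gtrsim cr \|h\|_2 / p$; the separation probability is then $\lesssim p/c \lesssim \varepsilon$, as required.

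The main obstacle is the rigorous average-distortion analysis of the embedding for an arbitrary point set, since classical estimates for Mazur-type maps are stated for symmetric measures on $\ell_p$-balls. Adapting the $p$-uniform convexity argument to worst-case pair distributions, handling the data-adaptive rescaling as the recursion changes $\Delta$, and threading the Johnson--Lindenstrauss error through the cut-sampling analysis are delicate but essentially mechanical. With the embedding and the cut analysis in place, the assembly of the decision tree and the bookkeeping of the standard LSH amplification follow the blueprint of~\cite{spectral} with only the substitution of efficient cut-sampling for the duality-based existence argument.
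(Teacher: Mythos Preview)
Your overall architecture (decision tree with dense-ball pruning, balanced LSH cuts, $O(n^\rho)$ parallel trees) matches the paper. The genuine gap is in the embedding itself.

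Your proposed map $f(x)=\Delta^{1-p/2}M_{p,2}(x)$ does \emph{not} have $2$-average distortion $O(p)$, and the appeal to ``$p$-uniform convexity'' does not repair this. Take $P'$ to be $n/2$ copies of $0$ and $n/2$ copies of $\Delta e_1$; after centering at the centroid the points sit at $\pm\tfrac{\Delta}{2}e_1$, and your map sends them to $\pm 2^{-p/2}\Delta e_1$. Thus every nonzero pairwise distance is contracted by a factor $2^{p/2-1}$, while the Lipschitz constant of $f$ on $B_{\ell_p}(0,\Delta)$ is $\Theta(p)$, so the $2$-average distortion is $\Theta(p\cdot 2^{p/2})$, exponential in $p$. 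The pointwise H\"older lower bound $\|M_{p,2}(x)-M_{p,2}(y)\|_2\gtrsim 2^{-p/2}\|x-y\|_p^{p/2}$ already carries this $2^{-p/2}$ loss, and no convexity inequality removes it. You also never specify the shift: ``on the $\ell_p$-ball of radius $\Delta$'' presumes the data are already centered, which is the entire difficulty.

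The paper's fix is twofold. First, it uses the \emph{pointwise}-rescaled Mazur map $\tilde M_{p,q}(x)=\|x\|_p\, M_{p,q}(x/\|x\|_p)=\|x\|_p^{1-p/q}M_{p,q}(x)$, which is $O(p/q)$-Lipschitz on all of $\ell_p^d$ and, crucially, satisfies $\|\tilde M_{p,q}(x)\|_q=\|x\|_p$ (spheres go to spheres of the same radius). Second, it shifts by the \emph{coordinate-wise median} $t$ of $P'$; this makes $0$ an (approximate) $q$-mean of $\tilde M_{p,q}(P'-t)$, from which the average-distortion bound follows directly, without any exponential loss. In the two-point example above, $\tilde M_{p,2}(\pm\tfrac{\Delta}{2}e_1)=\pm\tfrac{\Delta}{2}e_1$, so there is no contraction at all.

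A secondary issue: your passage from the quadratic average-distortion bound to ``$\sigma\gtrsim cr\|h\|_2/p$'' for the interquartile range of $\{\langle h,f(x)\rangle\}$ is not justified. Average (or RMS) pairwise distance in $f(P')$ being large does not force the image to be dispersed; a few outliers can carry all the mass. The paper handles this by a separate reduction from average distortion to \emph{weak} average distortion (controlling $\sup_t t\,\Psi(f(P'),t)$), which involves a case split (either a single distance-to-a-point map already works, or one can pass to a truncated subset of bounded diameter). Only after that step does one get that $f(P')$ is itself dispersed in $\ell_1$ (or $\ell_2$), at which point a standard LSH suffices.
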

We note that the only previous NNS data structure over \(\ell_p^d\) (for \(p > 2\)) with pre-processing time \(\poly(nd)\) could only achieve approximation on the order of either \(2^p\)~\cite{NR06,BG15}, or \(\log \log d\)~\cite{Andoni09}. Very recent work has shown that an approximation factor on the order of \(O((\log p)(\log d)^{2/p})\) is achievable, but the data structures requires exponential preprocessing time~\cite{L1X}.

We further extend this result to the Schatten-\(p\) norms, which are a natural extension of \(\ell_p^d\) to matrices. For a \(d\times d\) symmetric real matrix \(X\) and \(p \in [1,\infty]\), the Schatten-\(p\) norm \(\|X\|_{C_p}\) of \(X\) is defined as the \(\ell_p\) norm of the eigenvalues of \(X\). In addition to their intrinsic interest~\cite{LNW14,LW16,LW17,NPS18}, the Schatten-\(p\) spaces are an interesting first step when extending geometric and analytic results from the \(\ell_p\) spaces to more general norms: while Schatten-\(p\) shares many properties with \(\ell_p\), extending proofs and algorithms from \(\ell_p\) to Schatten-\(p\) requires finding coordinate-free arguments that are often more natural and general. Here, we partially succeed in extending Theorem~\ref{thm:ds-ellp} to Schatten-\(p\) spaces, and show the following two theorems. 

\begin{restatable}{theorem}{schattensmall}\label{thm:ds-Sp1}
Let \(\varepsilon \in (0,1]\), \(r > 0\), \(1 \le p \le 2\). For some \(c \lesssim \frac{1}{\varepsilon^{2/p}}\), there exists a data structure for the \((c,r)\)-NNS problem over \(n\)-point sets of \(d\times d\) symmetric matrices with respect to the Schatten-\(p\) norm with 
\begin{itemize}
    \item pre-processing time \(\poly(nd)\);
    \item space \(O(n^{1+\varepsilon}\log(n) \cdot\poly(d))\);
    \item query time \(O(n^{\varepsilon}\log(n) \cdot\poly(d))\).
\end{itemize}
\end{restatable}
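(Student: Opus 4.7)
The plan is to apply the general framework established in this paper, which reduces NNS to efficient average distortion embeddings into $\ell_2$ or $\ell_1$, and to exhibit such an embedding for the Schatten-$p$ norm with $p \in [1,2]$. Because the Schatten-$2$ norm coincides with the Frobenius (i.e., $\ell_2$) norm, and Schatten-$p$ for $p \in [1,2]$ has good Hilbertian structure (cotype $2$), the required embedding should be substantially more elementary than in the $\ell_p$ case with $p \ge 2$ treated in Theorem~\ref{thm:ds-ellp}.

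The natural candidate is the identity map $\iota: X \mapsto X$, viewing $d \times d$ symmetric matrices as vectors in $\mathbb{R}^{d(d+1)/2}$ under the Frobenius norm. Because $\|Z\|_{C_2} \le \|Z\|_{C_p}$ for every symmetric $Z$ whenever $p \le 2$, the map $\iota$ is $1$-Lipschitz from $(C_p, \|\cdot\|_{C_p})$ to $\ell_2$ and is trivially computable in polynomial time. Once $\iota$ is plugged into the reduction, the NNS approximation depends only on the average distortion of $\iota$; the remaining work is therefore to establish, for the pair-distribution arising from the dataset at the NNS radius, an average distortion bound of roughly $\varepsilon^{1 - 2/p}$, which combined with the $1/\varepsilon$ loss inherent to the reduction would produce the target approximation $c \lesssim 1/\varepsilon^{2/p}$.

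The heart of the proof is this average distortion estimate. Pointwise, the ratio $\|Z\|_{C_p}/\|Z\|_{C_2}$ can be as large as $d^{1/p-1/2}$ (attained by rank-$1$ matrices), so no worst-case comparison suffices. On average, however, the cotype $2$ property of $C_p$ implies that the eigenvalue mass of a difference matrix cannot be concentrated on a few directions for most pairs, and I would formalize this through a second-moment argument on the eigenvalue spectrum, possibly combined with a spectral truncation at scale $\varepsilon r$ that separates the \emph{spiked} eigendirections from the bulk. The main obstacle is pinning down the precise exponent $2/p$: it arises naturally from interpolating between $\|\cdot\|_{C_2}$ and $\|\cdot\|_{C_p}$, but translating this cleanly through the reduction requires carefully matching the scale of truncation with the $\varepsilon$ parameter of the data structure, and verifying that the non-commutative Khintchine-type inequalities invoked for the average bound indeed yield the exponent $2/p$ rather than something weaker. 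Once the average distortion bound is in hand, the efficiency claims are immediate, since $\iota$ is computed by simply listing matrix entries and the reduction introduces only polynomial overhead.
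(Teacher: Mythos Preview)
Your proposal has a genuine gap: the identity map from $C_p$ to $\ell_2$ does \emph{not} have dimension-free average distortion with respect to an arbitrary point set $P$, and the fixes you sketch cannot repair this. Take $P=\{k\cdot I_d: k=1,\ldots,n\}$. Every pairwise difference is a multiple of the identity, so $\|X-Y\|_{C_p}=d^{1/p-1/2}\|X-Y\|_{C_2}$ for all pairs, and the average distortion of $\iota$ with respect to $P$ is exactly $d^{1/p-1/2}$. Appeals to cotype~$2$ or Khintchine-type inequalities do not help here: those are statements about random sums or about the geometry of the whole space, whereas the average in Definition~\ref{defn:avg-dist} is over a fixed adversarial $P$. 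There is no randomness to exploit. Similarly, ``spectral truncation at scale $\varepsilon r$'' cannot salvage the identity map on this example, since all eigenvalues of every difference are equal and truncation either kills the map entirely or leaves it unchanged. Finally, note that average distortion is a quantity depending on $f$ and $P$ alone; it cannot depend on $\varepsilon$, so the target ``distortion $\varepsilon^{1-2/p}$'' is not even well-posed in the framework of Theorem~\ref{thm:avg2NNS}.

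What the paper actually does is quite different and explains the exponent $2/p$ cleanly. First observe that $(c,r)$-NNS over $C_p$ is equivalent to $(c^{p/2},r^{p/2})$-NNS over the $\tfrac{p}{2}$-snowflake $(C_p)^{p/2}$. For this snowflake the paper builds an embedding into $\ell_2^{d^2}$ with \emph{constant} average distortion, using the non-commutative Mazur map $M_{p,2}(X)=X|X|^{p/2-1}$ shifted by a center $T$. Ricard's H\"older estimates (Lemma~\ref{lm:ricard}) give $\|M_{p,2}(X)-M_{p,2}(Y)\|_{C_2}\lesssim \|X-Y\|_{C_p}^{p/2}$ on the sphere, which is exactly the Lipschitz bound needed for the snowflake; the shift $T$ is produced efficiently by minimizing the convex function $T\mapsto\frac1n\sum_{X\in P}\|X-T\|_{C_{p/2+1}}^{p/2+1}$, whose gradient vanishes precisely when $\sum_{X\in P}M_{p,2}(X-T)=0$ (Lemma~\ref{lm:schatten-center}). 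Applying Theorem~\ref{thm:avg2NNS} with $D\lesssim 1$ then yields approximation $\lesssim 1/\varepsilon$ for the snowflake, i.e., $c^{p/2}\lesssim 1/\varepsilon$, which is $c\lesssim 1/\varepsilon^{2/p}$. The non-linear Mazur map, not the identity, is what makes the embedding dimension-free; the snowflake passage is what produces the exponent.
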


The only previously known NNS data structures for Schatten-\(p\) with constant approximation and  \(\poly(nd)\) pre-processing time have polynomial rather than nearly linear space complexity. While not explicitly described there, such data structures follow from the techniques in~\cite{NR06,BG15}, in combination with the results in~\cite{R15}. 

\begin{restatable}{theorem}{schattenlarge}\label{thm:ds-Sp2}
Let \(\varepsilon \in (0,1]\), \(r > 0\), \(p \ge 2\). For some \(c \lesssim \frac{p}{\varepsilon}\), there exists a data structure for the \((c,r)\)-NNS problem over \(n\)-point sets of \(d\times d\) symmetric matrices with respect to the Schatten-\(p\) norm with 
\begin{itemize}
    \item pre-processing time \(\poly(n)\cdot 2^{\poly(d)}\);
    \item space \(O(n^{1+\varepsilon}\log(n) \cdot\poly(d))\);
    \item query time \(O(n^{\varepsilon}\log(n) \cdot\poly(d))\).
\end{itemize}
\end{restatable}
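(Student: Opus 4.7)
The approach is to apply the paper's general framework: an efficient average distortion embedding of a metric into \(\ell_2\) yields an efficient NNS data structure with the claimed parameters, and the preprocessing time is essentially that of computing the embedding. Accordingly, it suffices to produce, for any \(n\)-point set \(P \subset (M_d^{\mathrm{sa}}, \|\cdot\|_{C_p})\), a \(1\)-Lipschitz map \(F : M_d^{\mathrm{sa}} \to \ell_2\) with quadratic average distortion \(O((p/\varepsilon)^2)\) on \(P\), constructible in time \(\poly(n)\cdot 2^{\poly(d)}\). The remainder of the data structure---decision tree, amplification, handling of far pairs---then follows from the same reduction used to establish Theorem~\ref{thm:ds-ellp}, with only polynomial overhead.

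To establish existence of such an embedding, we would prove a non-commutative analog of the Naor-type theorem that underlies Theorem~\ref{thm:ds-ellp}. The guiding principle is that for \(p \ge 2\), Schatten-\(p\) shares with \(\ell_p\) the same modulus of uniform smoothness and the same type/cotype constants up to absolute factors, so the scalar proof adapts if one replaces coordinate-wise operations by spectral calculus: the role played by truncation along a single coordinate of an \(\ell_p\) vector is played by compression against a rank-\(1\) spectral projection of a matrix, and the relevant averages become integrals against the spectral measures of the matrices involved. The existence of the required probability distribution over projections is then obtained via a duality/minimax argument in the spirit of \cite{spectral, daher}.

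The algorithmic step is where the exponential factor enters. The minimax argument does not by itself produce an efficient sampler; to obtain one, we would discretize the unit sphere of \(\R^d\) (equivalently, the set of rank-\(1\) projections) by a \(\delta\)-net of size \(2^{O(d\log(1/\delta))}\) for a sufficiently small constant \(\delta\), and solve the resulting convex program over this finite support. This yields an explicit, finitely supported distribution whose associated embedding achieves average distortion \(O(p/\varepsilon)\), and once it is precomputed in time \(\poly(n)\cdot 2^{\poly(d)}\), each evaluation of \(F\) requires only \(\poly(d)\) time, matching the query and space bounds promised.

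The main obstacle is the non-commutative step above: passing from the coordinate-by-coordinate analysis available in \(\ell_p\) to a coordinate-free matrix argument without inflating the distortion beyond \(O(p)\). We expect this to require non-commutative tools such as Lieb's concavity theorem, the non-commutative Khintchine inequalities, or matrix moment methods to replace simple scalar manipulations; once these are in place, the remainder of the argument is essentially a discretization of an existence proof, and the \(2^{\poly(d)}\) preprocessing cost arises entirely from that discretization.
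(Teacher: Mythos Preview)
Your high-level plan---produce an average distortion embedding into \(\ell_2\) and then invoke the paper's framework (Theorem~\ref{thm:avg2NNS})---is correct, and this is exactly how the paper proves the result. But the embedding you propose is not the one the paper uses, and your sketch of it is not a proof.

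The paper does \emph{not} build the embedding from rank-\(1\) spectral projections together with a minimax/duality argument in the style of~\cite{spectral,daher}. Instead it uses the non-commutative Mazur map \(M_{p,2}(X)=X|X|^{p/2-1}\), rescaled radially to \(\tilde M_{p,2}\) as in Lemma~\ref{lm:biholder-extension}. The required Lipschitz bound \(\|\tilde M_{p,2}(X)-\tilde M_{p,2}(Y)\|_{C_2}\lesssim p\,\|X-Y\|_{C_p}\) is exactly Ricard's estimate (Lemma~\ref{lm:ricard}); with that in hand, Lemma~\ref{lm:embedding-gen} gives average distortion \(O(p)\) once a shift \(T\) satisfying its hypotheses is found. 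The exponential preprocessing arises solely from finding this \(T\): its existence is guaranteed by the topological-degree argument of Lemma~\ref{lm:center}, and it is located by brute-force enumeration of a discretized Schatten-\(p\) ball (after reducing to \(p\le\ln d\)). So the non-commutative step you flag as the ``main obstacle'' is in fact already handled by an off-the-shelf result, and there is no need for Lieb concavity, non-commutative Khintchine, or any new matrix inequality.

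Two specific issues with your write-up: (i) the target distortion of the embedding is \(O(p)\), not \(O((p/\varepsilon)^2)\); the parameter \(\varepsilon\) enters only in the reduction of Theorem~\ref{thm:avg2NNS}, not in the embedding itself. (ii) Your proposed source of the \(2^{\poly(d)}\) cost---discretizing rank-\(1\) projections and solving a convex program---is a plausible alternative route, but you have not specified what functional the program optimizes or why its optimum yields an average distortion embedding; as written this is a hope, not an argument. If you want to pursue that route you would essentially be re-deriving the cutting-modulus machinery of~\cite{spectral} and then making it semi-constructive, which is more work than simply citing Ricard and Lemma~\ref{lm:center}.
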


In this theorem, the pre-processing time is exponential in the dimension, similarly to~\cite{spectral}. Nevertheless, the data structure in Theorem~\ref{thm:ds-Sp2} has the benefit that it has query time polynomial in \(d\), rather than polynomial in \(d^p\), as in the data structure in~\cite{spectral}.

\subsection{Metric spaces: notation and assumptions} 

Henceforth, we shall assume that the metric spaces \((\metric,\dist_\metric)\) we deal with are endowed with a dimension \(\dim(\metric)\), which we use to quantify running times of basic tasks, e.g., evaluating distances. We will assume that a point \(x\in\metric\) can be represented by \(\poly(\dim(\metric))\) bits, and that the distance \(\dist_\metric(x,y)\) can be computed in time \(\poly(\dim(\metric))\), as well. We will identify Banach spaces \(X\) with norm \(\|\cdot\|\) with the metric space with distance function \(\dist_X(x,y) = \|x-y\|\). In that case, \(\dim(X)\) is assumed to equal the dimension of $X$ as a vector space. We use \(B_\metric(x,r)\) for the ball of radius \(r\) centered at \(x\) in \(\metric\). For Banach spaces \((X, \|\cdot\|)\), we use \(S_X\) for the unit sphere in \(X\), i.e., for \(\{x \in X: \|x\| = 1\}\).

\subsection{Techniques}\label{subsec:restech}

To prove Theorems~\ref{thm:ds-ellp}, \ref{thm:ds-Sp1}, and~\ref{thm:ds-Sp2}, we develop a new approach for proving partitioning statements such as Theorem~\ref{thm:spectral-ellp} that relies on the notion of embeddings with average distortion, defined below (this definition is taken from~\cite{naor2019average}). We note that average distortion embeddings have been used previously in algorithm design, in particular by Rabinovich for approximation algorithms for the sparsest cut problem~\cite{Rabinovich08}.
\begin{definition}\label{defn:avg-dist}
Given two metric spaces $(\metric, \dist_\metric)$ and $(\calN, \dist_\calN)$, and an $n$-point set $P\subseteq \metric$, we say a function $f: \metric\to \calN$ is an embedding of $\metric$ into $\calN$ with $q$-average distortion $D$ (with respect to $P$) if 
\[
\sum_{x\in P}\sum_{y\in P} \dist_\calN(f(x), f(y))^q \ge \frac{\lip{f}^q}{D^q} 
\sum_{x\in P}^n\sum_{y\in P}^n \dist_\metric(x, y)^q.
\]
\end{definition}
As before, \(\lip{f}\) is the Lipschitz constant of \(f\), i.e., \(\lip{f} = \sup_{x\neq y, x,y\in \metric} \frac{\dist_\calN(f(x), f(y))}{\dist_\metric(x,y)}\). When the embedding has \(1\)-average distortion \(D\), we simply say it has average distortion \(D\). If for every integer \(n\) and any \(n\) point set in \(\metric\), there exists an embedding into \(\calN\) with average distortion \(D\), then we say that \(\metric\) embeds into \(\calN\) with average distortion \(D\).

Briefly, we show that if a metric space \(\metric\) embeds into \(\ell_1^d\) with average distortion \(D\) via an embedding \(f\) that can be efficiently computed from \(P\), and efficiently evaluated, then \(\metric\) supports NNS data structures with approximation \(c \lesssim \frac{D\log D}{\varepsilon}\), space \(O(n^{1+\varepsilon})\),  query time \(O(n^{\varepsilon})\), and efficient pre-processing. The precise statement follows. 
\begin{restatable}{theorem}{avg2NNS}\label{thm:avg2NNS}
Suppose that the metric space \((\metric,\dist_\metric)\) embeds into \(\ell_1^d\) with average distortion \(D\) where \(d \in \poly(\dim(\metric))\). Further, suppose that for any point set \(Q\) of \(m\le n\) points in \(\metric\),  an embedding into \(\ell_1^d\) with average distortion \(D\) with respect to \(Q\) can be computed in time \(T\), and then stored using \(\poly(\dim(\metric))\) bits and evaluated in \(\poly(\dim(\metric))\) time. Then, for any \(\varepsilon \in (0,1]\) and \(\Delta \ge r > 0\), there exists a data structure for the \((c, r)\)-NNS problem over \(n\)-point sets in \(\metric\) of diameter at most \(\Delta\) with 
\begin{itemize}
    \item approximation factor \(c\lesssim \frac{D(1+\log D)}{\varepsilon}\);
    \item pre-processing time \(O(\poly(n\log_{1/b}(n)\dim(\metric))\cdot T)\);
    \item space \(O(n^{1+\varepsilon} \log_{1/b}(n)\poly(\dim(\metric)))\);
    \item query time \(O(n^\varepsilon \log_{1/b}(n)\poly(\dim(\metric)))\);
\end{itemize}
where \(1 - b \gtrsim \frac{r}{\Delta}\).
\end{restatable}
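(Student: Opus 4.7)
The approach is to build a randomized decision tree by recursively applying the given $\ell_1$-embedding to partition the current candidate set, and then to amplify success via independent repetitions. In total, the data structure will be $O(n^\varepsilon)$ independent trees of depth $O(\log_{1/b}(n))$; a query walks each tree to a leaf, and we return the best near-neighbor among the candidates encountered.

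\textbf{Tree construction.} At the root, I attach $Q = P$ together with a diameter bound $\Delta' \le \Delta$. At an internal node labeled by $Q$, I compute the embedding $f : Q \to \ell_1^d$ in time $T$ using the hypothesis, and then apply a random $\ell_1$ grid partition (with independent uniform coordinate offsets) whose width $w$ is chosen below. Each grid cell becomes a child, labeled by the preimage of the cell in $Q$; the query $q$ is routed to the child whose cell contains $f(q)$, which takes $\poly(\dim(\metric))$ time by assumption. Recursion terminates at singleton labels, or at a node where I detect a dense $\metric$-cluster (described below) and output the cluster center for all queries reaching that branch.

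\textbf{Choosing the grid width.} Set $L := \lip{f}$ and $w \asymp Lr/\varepsilon$. This choice yields two complementary properties. First, any $x,y \in \metric$ with $\dist_\metric(x,y) \le r$ satisfy $\|f(x) - f(y)\|_1 \le Lr \le \varepsilon w$, so the coordinate-wise grid keeps $f(x), f(y)$ together with probability at least $1 - \varepsilon$. Second, I invoke the average-distortion guarantee to force balance: if $Q$ contains no $\metric$-ball of radius $\asymp cr$ holding more than a $b$-fraction of $Q$, then at least a $(1-b)$-fraction of pairs in $Q \times Q$ are at $\metric$-distance $\gtrsim cr$, and consequently, by Definition~\ref{defn:avg-dist}, the average pairwise $\ell_1$-distance in $f(Q)$ is at least $\gtrsim Lcr/D$. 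Taking $c \asymp D(1 + \log D)/\varepsilon$ makes this average exceed $w$ by a $\log D$ factor, and a counting argument over random offsets then shows that with constant probability no cell of the grid contains more than a $b$-fraction of $f(Q)$; if the partition fails, I resample.

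\textbf{Dense clusters, success probability, and bookkeeping.} If the "no dense ball" condition fails at some node, i.e., there is a $\metric$-ball of radius $\asymp cr$ covering more than a $b$-fraction of $Q$, then I record its center as the answer on that branch: any query whose true $r$-near neighbor lies in that ball is within distance $\lesssim cr$ of the center, so the center is a valid $cr$-approximate answer. Otherwise the partition is simultaneously $b$-balanced and separates $r$-close pairs with probability at most $\varepsilon$, where $1 - b \gtrsim r/\Delta$ follows from the grid spacing relative to the $\ell_1$-diameter $\le L\Delta$ of $f(Q)$. The tree has depth at most $\log_{1/b}(n)$; conditioned on the true near neighbor $x^*$ staying together with $q$ at every level, the query finds $x^*$ at the leaf. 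The per-tree survival probability is at least $(1-\varepsilon)^{\log_{1/b}(n)} = n^{-O(\varepsilon)}$ (after rescaling $\varepsilon$ by a constant), so running $n^{O(\varepsilon)}$ independent trees boosts success to $2/3$. Space and query time follow by summing the storage of the embeddings $f$ and the grid parameters over all tree nodes; pre-processing is dominated by the $O(n \log_{1/b}(n))$ calls to the embedding oracle, each of cost $T$. The main obstacle I anticipate is the balance step: the average-distortion bound gives only an average $\ell_1$-spread in $f(Q)$, not a worst-case one, so a second-moment estimate on the cell sizes or an iterated partitioning across $\log D$ dyadic scales of $w$ is what I expect to produce the extra $(1 + \log D)$ factor in the approximation.
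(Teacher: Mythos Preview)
Your overall architecture (randomized decision tree, dense-ball node versus hash node, amplification by $n^{O(\varepsilon)}$ independent trees) is exactly the paper's. The substantive gap is the one you flag yourself: the balance step. You assert that if the average pairwise $\ell_1$-distance in $f(Q)$ exceeds $w$ by a $\log D$ factor then ``a counting argument over random offsets'' gives a $b$-balanced grid partition. This is not true without further control: if $f(Q)$ consists of a tight cluster of $(1-\delta)n$ points together with $\delta n$ outliers at $\ell_1$-distance $\asymp w\log(D)/\delta$, the average distance is of the right order but every grid cell containing the cluster has a $(1-\delta)$-fraction of the points. Your suggested fixes (second-moment bound, dyadic sweep over $w$) do not repair this: the second moment can be dominated by the outliers, and no single scale $w$ is simultaneously small enough to split the cluster and large enough to keep $r$-close pairs together.

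The paper's resolution is to insert an intermediate notion, \emph{weak average distortion}, which bounds $\sup_{t\ge 0} t\,\Psi_{\ell_1}(f(Q),t)$ from below rather than the $L_1$-average. A large weak-$L_1$ norm \emph{does} directly imply a scale $t$ at which $f(Q)$ is $(t/2,\Omega(1))$-dispersed, and then the standard $\ell_1$ LSH at that scale gives the balanced partition (this is Lemma~\ref{lm:wkvg2lsh}). The $\log D$ loss enters not in the partitioning but in the reduction from average to weak average distortion (Theorem~\ref{thm:avg2week}): one first runs a dichotomy on $Q$ --- if $Q$ has points far from its ``median ball'' one embeds by $x\mapsto \dist_\metric(x^*,x)$ into $\R$; otherwise one restricts to a sub-ball $Q'$ whose diameter is within a constant of the average distance, applies the given average-distortion embedding to $Q'$, and uses the elementary inequality $\sup_t t\,\Psi(P,t)\ge \frac{\overline R}{2\ln(2\Delta/\overline R)}$ to convert the average bound to a weak-$L_1$ bound with a $\log(\Delta/\overline R)=O(\log D)$ loss. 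So the extra $(1+\log D)$ arises from a diameter-control step applied \emph{before} embedding, not from iterated partitioning after.
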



Theorem~\ref{thm:avg2NNS} strengthens the reductions via bi-Lipschitz embeddings mentioned above, and is likely to have applications beyond Theorems~\ref{thm:ds-ellp},~\ref{thm:ds-Sp1},~and~\ref{thm:ds-Sp2}: see, for instance, the general criteria for the existence of embeddings with low average distortion in~\cite{N14a}. The connection between NNS and average distortion embeddings is closely related to the connection between NNS and the \emph{cutting modulus} from prior work~\cite{spectral}. In particular, bounds on the cutting modulus in~\cite{spectral} were proved by utilizing comparison inequalities between non-linear spectral gaps, in the sense of~\cite{N14a}. Such comparison inequalities were shown in~\cite{N14a} to be equivalent to the existence of average distortion embeddings. However, the connection between the cutting modulus and NNS data structures in~\cite{spectral} involves a duality argument that only yields data structures with exponential pre-processing, even when the cutting modulus bound is witnessed by an efficiently computable average distortion embedding. In contrast, Theorem~\ref{thm:avg2NNS} gives an efficient reduction: if the embedding is computationally efficient, so is the data structure, including the pre-processing. 

To prove Theorem~\ref{thm:avg2NNS}, we first formalize the type of data dependent LSH family implicit in Theorem~\ref{thm:spectral-ellp} and in other similar results. As mentioned above, these data-dependent LSH families relax the \(p_2\) requirement of the standard LSH definition by requiring that it holds empirically for the input point set \(P\). I.e., we require that each hash function in the family maps at least \(1-p_2\) fraction of the pairs of points in \(P\) to different values (see Definition~\ref{defn:emp-lsh} for the precise requirement). As noted above, such a data-dependent LSH family is still sufficient to design an NNS data structure with similar running time and space guarantees as given by standard LSH (Lemma~\ref{lm:main-ds}). Moreover, it is also not hard to construct a data-dependent LSH family using a standard LSH family when the point set \(P\) is \emph{dispersed} i.e., when no ball of radius \(cr\) contains more than, e.g., half of the points in \(P\) (Lemma~\ref{lm:lsh2weeklsh}). 

So far these results just give a different perspective on standard NNS data structures using LSH. The benefit of using data-dependent LSH, however, is that the data-dependent requirement allows using a larger class of embeddings in reductions. While the existence of a standard LSH family for \(\ell_1^d\), is inherited by metrics that have a bi-Lipschitz embedding into \(\ell_1^d\) with small distortion,\footnote{In fact a weaker notion of randomized embedding suffices~\cite{AIR18}.} the existence of a data-dependent LSH family for \(\ell_1^d\) is inherited by metrics \(\metric\) that have, for any dispersed point set \(P\subseteq \metric\), an embedding \(f\) into \(\ell_1^d\) which 
\begin{enumerate}[label=(\arabic*)]
    \item  does not expand distances too much, and
    \item does not map a dispersed point set \(P\) in \(\metric\) into a point set \(f(P)\) that is not dispersed in \(\ell_1^d\).
\end{enumerate}
 We formally define this class of embeddings, which we call embeddings with \emph{weak} average distortion, in Definition~\ref{defn:wkavg} below. (``Weak'' here is used in the same sense as weak-\(L_1\) norms.) To complete the connection between NNS and average distortion embeddings, we prove that the existence of (computationally efficient) average distortion embeddings implies the existence of (computationally efficient) weak average distortion embeddings. The proof of this fact uses ideas previously used to relate embeddings with \(q\)- and \(q'\)-average distortion (see Section~5.1 in~\cite{naor2019average}). 

\subsection{Our Results on Average Distortion Embeddings}

Finally, in order to utilize the general connection above between average distortion embeddings and NNS data structures, we need to construct explicit, efficiently computable average distortion embeddings into \(\ell_1^d\) or \(\ell_2^d\). Naor showed that the existence of average distortion embeddings of a metric space \(\metric\) into (infinite dimensional) \(\ell_2\) is equivalent to proving a certain inequality between non-linear spectral gaps, and, using this equivalence, he showed that, when \(p \ge 2\), \(\ell_p^d\) embeds into \(\ell_2\) with 2-average distortion \(D \lesssim p\)~\cite{N14a,naor2019average}. This equivalence between average distortion embeddings and spectral gap inequalities, however, uses a duality argument, and does not provide explicit, efficiently computable embeddings. In fact, an explicit construction of an embedding of \(\ell_p^d\), for \(p \ge 2\), into \(\ell_2\) with \(2\)-average distortion \(O(p)\) is given as an open problem in~\cite{N14a}. Here we resolve this open problem. In the theorem below, the functions \(M_{p,q}, \tilde{M}_{p,q}:\ell_p^d \to \ell_q^d\) are defined by
\begin{align*}
M_{p,q}(x) = \begin{pmatrix}\sign(x_1) |x_1|^{p/q}\\\vdots\\\sign(x_d)|x_d|^{p/q}\end{pmatrix},
&& 
\tilde{M}_{p,q}(x)=
\|x\|_p  M_{p,q}\left(\frac{x}{\|x\|_p}\right) 
= \|x\|_p^{1 - p/q} M_{p,q}(x).
\end{align*}
with \(\tilde{M}_{p,q}(0)=0\).
\begin{restatable}{theorem}{embellp}\label{thm:embedding-ellp}
For any $p \ge q \ge 1$, and any $n$-point set $P$ in $\R^d$, for \(t\in \R^d\) so that 
\begin{equation*}
    \forall i\in [d]:\ \ 
    |\{x\in P: x_i < t_i\}| = |\{x\in P: x_i > t_i\}|,
\end{equation*}
 the map \(g:\ell_p^d \to \ell_q^d\) defined by $g(x) = \tilde{M}_{p,q}(x-t)$ has \(q\)-average distortion \(D \lesssim \frac{p}{q}\). 
\end{restatable}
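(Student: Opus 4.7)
My plan is to translate so that $t = 0$ without loss of generality, reducing to $g = \tilde M_{p,q}$, and then exploit the key identity $\|g(x)\|_q = \|x\|_p$ that follows immediately from the definition. After this reduction I will establish two independent estimates: an upper bound $\lip{g} \lesssim p/q$ on the Lipschitz constant of $g$, and a lower bound on $\sum_{x,y \in P} \|g(x) - g(y)\|_q^q$ in terms of $\sum_{x \in P} \|x\|_p^q$ that exploits the median property of $t$.

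For the Lipschitz bound, my plan is to compute the Jacobian of $g$ directly at a generic point $x$ with nonzero coordinates. Setting $r = p/q \ge 1$ and writing $g_i(x) = \|x\|_p^{1-r}\sign(x_i)|x_i|^r$, the product rule produces, for any direction $v$, a rank-one contribution from differentiating the scalar $\|x\|_p^{1-r}$ and a diagonal contribution from differentiating $\sign(x_i)|x_i|^r$. The rank-one term I will control using $\|\sign(x)|x|^{p-1}\|_{p/(p-1)} = \|x\|_p^{p-1}$ together with $\|M_{p,q}(x)\|_q = \|x\|_p^r$; after multiplication by the scalar factor $\|x\|_p^{1-r-p}$ this contributes $(r-1)\|v\|_p$. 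For the diagonal term $r\|x\|_p^{1-r}(|x_i|^{r-1}v_i)_i$ I will apply H\"older's inequality with exponents $p/(p-q)$ and $p/q$ to $\sum_i |x_i|^{(r-1)q}|v_i|^q$, which gives $r\|v\|_p$. Summing yields $\|D_v g(x)\|_q \le (2r-1)\|v\|_p$, and integrating along line segments (handling the measure-zero singular set by absolute continuity) gives $\lip{g} \le 2p/q$. I expect this Jacobian computation to be the main obstacle.

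For the lower bound, I will reduce coordinate-by-coordinate to the following one-dimensional claim: for scalars $(a_x)_{x \in P}$ satisfying $|\{x : a_x > 0\}| = |\{x : a_x < 0\}| = k$,
\[
\sum_{x, y \in P} |a_x - a_y|^q \ge n \sum_{x \in P} |a_x|^q.
\]
The argument groups ordered pairs by sign pattern: among pairs with both coordinates nonzero and of opposite sign, $|a_x - a_y|^q \ge |a_x|^q + |a_y|^q$ by convexity of $t \mapsto t^q$ on $[0, \infty)$ for $q \ge 1$, contributing at least $2k \sum_x |a_x|^q$; pairs with exactly one zero coordinate contribute a further $2(n - 2k) \sum_x |a_x|^q$; the total is $(2n - 2k)\sum_x |a_x|^q \ge n \sum_x |a_x|^q$ since $k \le n/2$. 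Applying this coordinate-wise with $a_x = g(x)_i$ and summing over $i$, together with $\sum_i |g(x)_i|^q = \|g(x)\|_q^q = \|x\|_p^q$, yields $\sum_{x,y} \|g(x) - g(y)\|_q^q \ge n \sum_x \|x\|_p^q$.

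To combine, the elementary triangle bound $\|x - y\|_p^q \le 2^{q-1}(\|x\|_p^q + \|y\|_p^q)$ gives $\sum_{x,y}\|x-y\|_p^q \le 2^q n \sum_x \|x\|_p^q$, which pairs with the previous estimate to yield $\sum_{x,y} \|g(x) - g(y)\|_q^q \ge 2^{-q} \sum_{x,y}\|x-y\|_p^q$. Reading this against Definition~\ref{defn:avg-dist} with $\lip{g} \le 2p/q$ gives $D \le 2\lip{g} \lesssim p/q$ as claimed. The lower bound and combination steps are essentially pigeonhole plus the triangle inequality; by contrast, confirming the Lipschitz estimate via the H\"older computation is the key analytic input.
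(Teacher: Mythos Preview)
Your proof is correct and follows the same three-step skeleton as the paper: bound $\lip{g}\lesssim p/q$; use the coordinate-wise median hypothesis to show $\sum_{x,y}\|g(x)-g(y)\|_q^q \gtrsim n\sum_x\|g(x)\|_q^q = n\sum_x\|x\|_p^q$; and close with the triangle inequality $\sum_{x,y}\|x-y\|_p^q \le 2^q n\sum_x\|x\|_p^q$. Your coordinate-wise lower bound is essentially the paper's Lemma~\ref{lm:med-shift}, with the sign-pattern counting made more explicit and yielding a slightly tighter constant.

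The one genuine difference is how the Lipschitz constant is obtained. The paper does not differentiate $\tilde M_{p,q}$; instead it proves a general radial-extension lemma (Lemma~\ref{lm:biholder-extension}): any $\alpha$-H\"older map $f:S_X\to S_Y$ with constant $K$ extends via $\tilde f(x)=\|x\|_X^\alpha f(x/\|x\|_X)$ to a global $\alpha$-H\"older map with constant $1+2^\alpha K$, and then specializes to the Mazur map using the classical sphere estimate~\eqref{eq:mazur-cont}. Your Jacobian-plus-H\"older computation is specific to $\tilde M_{p,q}$ and gives the sharper constant $2p/q-1$; the paper's route trades that sharpness for a lemma that is reusable verbatim for the Schatten-$p$ and general-norm embeddings treated later in the paper.
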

Theorem~\ref{thm:embedding-ellp}, in the case \(p \ge q =2\), gives an alternative, direct proof of Corollary~1.6 of~\cite{N14a}. 
Naor also points out that, for \(q=2\), the distortion in Theorem~\ref{thm:embedding-ellp} is optimal up to universal constants.

Above, \(M_{p,q}\) is the classical Mazur map from \(\ell_p^d\) to \(\ell_q^d\). This map sends the unit sphere in \(\ell_p^d\) to the unit sphere in \(\ell_q^d\), and, when \(p\ge q\), its restriction to the sphere has Lipschitz constant bounded by \(\frac{p}{q}\) up to absolute constants. The Mazur map was previously used by Matou\v{s}ek to prove bounds on non-linear spectral gaps in \(\ell_p^d\)~\cite{M97}. As mentioned above, such bounds  are closely related to the existence of average distortion embeddings, via Naor's duality argument in~\cite{N14a}. The Mazur map itself, however, cannot be used directly to give an average distortion embedding, since its Lipschitz constant is unbounded over all of \(\ell_p^d\) (see also Remark 1.7 in~\cite{N14a}). Our main technical contribution in the proof of Theorem~\ref{thm:embedding-ellp} is the realization that the re-scaled Mazur map \(\tilde{M}_{p,q}\) has Lipschitz constant \(\lesssim \frac{p}{q}\) \emph{everywhere}, and that the machinery of Matou\v{s}ek's argument can be then used to prove a bound on the average distortion directly, without going through a duality argument. 

Our technique for constructing explicit average distortion embeddings in fact extends to every pair of normed spaces \(X\) and \(Y\) for which we have a H\"older-continuous homeomorphism \(f\) between the unit spheres of \(X\) and \(Y\). We can then show that this homeomorphism can be extended to a function \(\tilde{f}\) which is H\"older-continuous on all of \(X\), and that there is a shift \(t\in X\) so that the map \(g:X \to Y\) defined by \(g(x) = \tilde{f}(x-t)\) is an average distortion embedding of (a snowflake of) \(X\) into \(Y\). One can then use a variety of known homemorphisms between spheres and construct reasonably explicit average distortion embeddings. We do so for the Schatten-\(p\) spaces, and one can also use the homeomorphism between finite dimensional normed spaces in~\cite{daher} to give results for general normed spaces, although we do not pursue this here. Except for some special cases like \(\ell_p^d\) and Schatten-\(p\) for \(1 \le p \le 2\), however, one aspect of these embeddings is still not fully explicit and in particular, not computationally efficient. Namely, the argument showing that there exists a good shift \(t\), which was first given in~\cite{spectral}, uses the theory of topological degree that is also used in textbook proofs of Brouwer's fixed point theorem, and does not suggest an efficient algorithm for computing \(t\). We leave finding such an algorithm, even for the case of Schatten-\(p\) norms with \(p\ge 2\), as an open problem.

\section{Data-Dependent LSH Families}

In this section, we introduce a formalization of the data-dependent LSH families we are going to use. We show how to use such LSH families to construct a data structure for the NNS problem, by generalizing the randomized decision tree data structure from~\cite{spectral}.

\subsection{Definitions and Basic Facts} 

The following definition is now standard and goes back to Indyk and Motwani~\cite{IM98}.
\begin{definition}[LSH Family]
Let \((\metric,\dist_\metric)\) be a metric space and fix a scale \(r > 0\), approximation factor \(c > 1\), and range \(\range\). Then a probability distribution \(\lsh\) over maps from \(\metric\) to \(\range\) is called \((r,cr,p_1,p_2)\)-sensitive if 
\begin{align*}
    \dist_\metric(x,y) \le r &\implies
    \Pr_{h\sim\lsh}[h(x) = h(y)] \ge p_1,\\
    \dist_\metric(x,y) > cr &\implies
    \Pr_{h\sim\lsh}[h(x) = h(y)] < p_2.\\
\end{align*}
\end{definition}

Our data structure is based on the following, in a sense, weaker definition which allows \(\lsh\) to depend on the point set, and defines \(p_2\) in terms of the how hash functions spread the points among the bins they are hashed to.

\begin{definition}\label{defn:emp-lsh}
Let \((\metric,\dist_\metric)\) be a metric space and fix a scale \(r > 0\), and range \(\range\). Let \(P\subseteq \metric\) be an $n$-point set. A probability distribution \(\lsh\) over maps from \(\metric\) to \(\range\) is called \((r, p_1, p_2)\)-empirically sensitive for \(P\) if 
\begin{align*}
    \dist_\metric(x,y) \le r \implies
    &\Pr_{h\sim\lsh}[h(x) = h(y)] \ge p_1,\\
 \forall \bin \in \range, \forall h \in \supp(\lsh):
    \ \ \ &|\{x\in P: h(x) = \bin\}| \le p_2 n.
\end{align*}
\end{definition}

We call families of hash functions \(\lsh(P)\) as in Definition~\ref{defn:emp-lsh} \emph{data-dependent locality sensitive hash functions}, because \(p_2\) restricts the empirical distribution of the points in \(P\) among the bins defined by the hash function. This definition was implicit in prior work, e.g.~\cite{spectral}.

The following definition captures the property of a point set being well-spread on a given scale. We will need this property in order to be able to construct a data-dependent LSH family with good parameters.
\begin{definition}
Let \((\metric,\dist_\metric)\) be a metric space and let \(t > 0\).
A set \(P\) of \(n\) points in \(\metric\) is called \((t,\disp)\)-dispersed if, for all \(x \in \metric\), \(|P \cap B_\metric(x,t)| \le (1-\disp) n\).
\end{definition}

We could have also defined the dispersed point sets with respect to ball centers \(x\) in the point set \(P\). 
The following simple lemma shows that the two definitions are essentially equivalent.
\begin{lemma}\label{lm:dispersed-P}
Suppose that, for a set \(P\) of \(n\) points in a metric space \(\metric\), for every $x_0\in P$, we have \(|P \cap B_\metric(x_0,2t)| \le (1-\disp) n\). Then $P$ is \((t,\disp)\)-dispersed.
\end{lemma}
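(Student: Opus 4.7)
The plan is a short contrapositive argument using the triangle inequality. Suppose $P$ fails to be $(t,\beta)$-dispersed; then there exists some center $x \in \metric$ (not necessarily in $P$) with $|P \cap B_\metric(x,t)| > (1-\beta)n$. Since $(1-\beta)n \ge 0$, this set is in particular nonempty, so I can pick a witness point $x_0 \in P \cap B_\metric(x,t)$.

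The key observation is then that the smaller ball gets absorbed into a ball of twice the radius around any of its points: by the triangle inequality, every $y \in B_\metric(x,t)$ satisfies $\dist_\metric(y,x_0) \le \dist_\metric(y,x) + \dist_\metric(x,x_0) \le t + t = 2t$, so $B_\metric(x,t) \subseteq B_\metric(x_0, 2t)$. Intersecting with $P$ preserves the inclusion, and hence
\[
|P \cap B_\metric(x_0, 2t)| \ge |P \cap B_\metric(x,t)| > (1-\beta)n,
\]
with $x_0 \in P$. This contradicts the hypothesis, completing the proof.

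There is essentially no obstacle here; the only thing to double-check is the nonemptiness step (so that an $x_0 \in P$ actually exists to center the doubled ball on), which is immediate since the offending ball already contains strictly more than $(1-\beta)n \ge 0$ points of $P$. The factor of $2$ in the hypothesis is exactly tight for this triangle-inequality move.
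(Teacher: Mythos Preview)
Your proof is correct and follows essentially the same approach as the paper's own proof: a contrapositive argument that picks a point $x_0 \in P$ inside the offending radius-$t$ ball and uses the triangle inequality to absorb that ball into $B_\metric(x_0,2t)$. Your explicit check that the ball is nonempty is a minor elaboration the paper leaves implicit.
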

\begin{proof}
Suppose, for contradiction, there were some \(x\in \metric\) such that \(|P \cap B_\metric(x,t)| > (1-\disp) n\). Then, for any \(x_0\in P \cap B_\metric(x,t)\), we  have \(B_\metric(x,t) \subseteq B_\metric(x_0,2t)\) by the triangle inequality, and \(|P \cap B_\metric(x_0,2t)| > (1-\disp) n\), contradicting our assumption. 
\end{proof}

For a metric space \((\metric, \dist_\metric)\), and a (multi-)set \(P\subseteq \metric\) of size \(n\), we use the notation
\[
\Psi_\metric(P,t)
= 
\frac{|\{(x,y) \in P \times P: \dist_\metric(x,y) > t\}|}{n^2}.
\]

The following lemma relates the notion of being \((r,\beta)\)-dispersed and the function \(\Psi_\metric(P,t)\). 
\begin{lemma}\label{lm:disp-cdf}
Let \((\metric, \dist_\metric)\) be a metric space, and let \(P\) be a set of \(n\) points in \(\metric\). If \(P\) is \((t,\beta)\)-dispersed, then
\(
\Psi_\metric(P,t) \ge \beta. 
\)
Conversely, \(P\) is \((t, \beta)\)-dispersed for \(\beta = \frac12\Psi_\metric(P,2t)\).
\end{lemma}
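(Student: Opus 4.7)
The two directions are essentially symmetric manipulations of counting pairs, and the factor of $2$ in the converse will come directly from the triangle inequality.

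For the forward direction, I would argue pointwise. Fix $x \in P$; since $x \in \metric$, the $(t,\beta)$-dispersed condition gives $|P \cap B_\metric(x,t)| \le (1-\beta)n$, which (noting $x$ itself lies in this ball) implies $|\{y \in P : \dist_\metric(x,y) > t\}| \ge \beta n$. Summing this lower bound over all $x \in P$ gives $|\{(x,y) \in P\times P : \dist_\metric(x,y) > t\}| \ge \beta n^2$, i.e.\ $\Psi_\metric(P,t) \ge \beta$.

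For the converse I would argue by contrapositive: assume $P$ is not $(t,\beta)$-dispersed and deduce $\Psi_\metric(P,2t) < 2\beta$. By assumption there exists $x \in \metric$ with $B := P \cap B_\metric(x,t)$ of size $|B| > (1-\beta)n$. By the triangle inequality, any $y,z \in B$ satisfy $\dist_\metric(y,z) \le 2t$, so all $|B|^2$ ordered pairs in $B \times B$ are ``close''. Therefore
\[
|\{(y,z)\in P\times P : \dist_\metric(y,z) > 2t\}| \le n^2 - |B|^2 < n^2 - (1-\beta)^2 n^2 = (2\beta - \beta^2)n^2 \le 2\beta n^2,
\]
which yields $\Psi_\metric(P,2t) < 2\beta$. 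Contrapositively, if $\beta \le \tfrac12 \Psi_\metric(P,2t)$ then $P$ is $(t,\beta)$-dispersed, and plugging in $\beta = \tfrac12 \Psi_\metric(P,2t)$ gives the claim.

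There is no real obstacle here; the only thing to notice is that the triangle inequality forces every pair inside a ball of radius $t$ to be at distance at most $2t$, which is exactly where the $2t$ appears in the statement and where the factor $\tfrac12$ comes from. I would note in passing that a slightly cleaner presentation of the converse could instead invoke Lemma~\ref{lm:dispersed-P} by averaging $|P \cap B_\metric(x_0, 2t)|$ over $x_0 \in P$, but the direct triangle-inequality argument above is shorter and gives the correct constant without any extra factor.
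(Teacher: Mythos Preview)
Your proof is correct and follows essentially the same approach as the paper: both directions match the paper's argument almost line for line, using the per-point ball bound and averaging for the forward direction, and the triangle-inequality estimate $|\{(x,y):\dist_\metric(x,y)\le 2t\}|\ge |B|^2>(1-\beta)^2 n^2$ for the converse.
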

\begin{proof}
If \(P\) is \((t,\beta)\)-dispersed, then, for any \(x\in P\),
\[
\frac{|\{y \in P \times P: \dist_\metric(x,y) > t\}|}{n} \ge \beta.
\]  
Averaging these inequalities over all \(x\in P\) proves that \(\Psi_\metric(P,t)\ge \beta\).

In the other direction, let \(\beta = \frac12\Psi_\metric(P,2t)\), and suppose, towards contradiction, that \(P\) is not \((t,\beta)\)-dispersed. Then, there exists some \(x_0\in \metric\) such that \(|P \cap B_\metric(x_0,t)| > (1-\beta) n\). By the triangle inequality,
\[
\frac{|\{(x,y) \in P \times P: \dist_\metric(x,y) \le 2t\}|}{n^2}
\ge 
\frac{|P \cap B_\metric(x_0,t)|^2}{n^2}
> (1-\beta)^2.
\]
Therefore, \(\Psi_\metric(P,2r) < 1 - (1-\beta)^2 < 2\beta,\) a contradiction.
\end{proof}

\subsection{NNS Data Structures from Data-Dependent LSH Families}

Next we show that a data-dependent LSH family can be used to construct an efficient NNS data structure. As a basic building block, we use the classical linear space, constant query time static dictionary data structure, whose existence is guaranteed by the following lemma.

\begin{lemma}[Efficient Static Dictionary \cite{FredmanKS84}]\label{lem:perfect-hash}
Let $S$ be a set of $s$ keys from a universe $U$. There exists a data structure 
that can be constructed in $\poly(s,\log |U|)$ time, that requires $O(s\log |U|)$ bits of space to store, and supports a query algorithm that, given input $u\in U$, determines whether $u\in S$ in $O(1)$ time. The algorithm outputs $\perp$ if $u\notin S$.
\end{lemma}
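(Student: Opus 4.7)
The plan is to use the classical two-level perfect hashing scheme of Fredman, Komlós, and Szemerédi, built from a universal family of hash functions such as $h_{a,b}(u) = ((au + b) \bmod p) \bmod m$, where $p$ is a prime with $p > |U|$ and $a,b$ are chosen uniformly from $\{0,\dots,p-1\}$ (with $a\neq 0$). Each such hash function is described by $O(\log |U|)$ bits, and it is a standard fact that for any $S \subseteq U$ of size $s$ and any target range $\{0,\dots,m-1\}$, the expected number of colliding pairs under a uniformly random $h_{a,b}$ is at most $\binom{s}{2}/m$.

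First I would set $m = s$ at the top level and sample $h_{a,b}$ until $\sum_i \binom{s_i}{2} \le s$, where $s_i$ is the number of keys hashing to bin $i$; by Markov's inequality this succeeds after $O(1)$ trials in expectation. This guarantees $\sum_i s_i^2 = O(s)$. Then, for each bin $i$, I would build a secondary table of size $m_i = O(s_i^2)$ and sample a universal hash function $h_i$ until it is injective on the $s_i$ keys in that bin; by the birthday argument, the expected number of colliding pairs is at most $\binom{s_i}{2}/m_i < 1/2$, so a collision-free $h_i$ is found after $O(1)$ trials in expectation. Each slot in the secondary table either stores a single key of $S$ together with a pointer (or nothing), so the total space is $O(s)$ words, each of length $O(\log |U|)$ bits, yielding $O(s \log |U|)$ bits in total, including the $O(1)$ hash function descriptors per bin.

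The query algorithm on input $u \in U$ evaluates the top-level hash $h_{a,b}(u)$ in constant time to find bin $i$, reads off $h_i$ and the size of the secondary table, evaluates $h_i(u)$ in constant time, and checks whether the stored key at that slot equals $u$; if so, it returns confirmation, otherwise it returns $\perp$. Construction is done in one pass over $S$ to partition into bins, followed by the sampling loops; since all sampling loops terminate in expected $O(1)$ trials and each trial does $\poly(s,\log|U|)$ work (hashing all relevant keys and counting collisions), the total construction time is $\poly(s,\log|U|)$ as claimed. If a worst-case rather than expected time bound is desired, one can cap the number of trials and restart from scratch, which still succeeds with high probability in $\poly(s,\log|U|)$ time.

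The only mildly delicate step is verifying the $O(s\log|U|)$ bit bound: it is important that each secondary table stores at most one key per slot and that we do not redundantly store the bin sizes or hash function parameters (each of which is $O(\log|U|)$ bits per bin, summing to $O(s\log|U|)$ bits across the $O(s)$ bins). Everything else—the universal hashing analysis and the two-level construction—is routine.
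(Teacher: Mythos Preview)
Your proposal is correct and is precisely the classical Fredman--Koml\'os--Szemer\'edi two-level perfect hashing construction. Note, however, that the paper does not actually prove this lemma: it is stated as a black-box citation to \cite{FredmanKS84}, so there is no ``paper's own proof'' to compare against---your write-up is essentially a sketch of the cited reference itself.
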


The next lemma shows that data-dependent LSH families imply efficient NNS data structures.

\begin{lemma}\label{lm:main-ds}
Let \((\metric,\dist_\metric)\) be a metric space and let \(r > 0\), and \(c > 1\). Suppose that for every \((cr,\frac12)\)-dispersed \(m\)-point set \(Q \subseteq \metric\), there exists a \((r,p_1(Q),p_2(Q))\)-empirically sensitive \(\lsh(Q)\) such that \(\frac{\log(1/p_1(Q))}{\log(1/p_2(Q))} \le \rho\) where all \( p_2(Q) \le p_2\) for some $p_2\in (0,1)$. Define $b = \max(\frac12,p_2)$. Further, suppose that any \(h\) in the support of \(\lsh(Q)\) can be stored in space and evaluated in time polynomial in \(\dim(\metric)\), that \(h\sim \lsh(Q)\) can be sampled in time \(T_s(m)\), for a non-decreasing function \(T_s(m)\), and the range $\range$ of $\lsh(Q)$ has size at most $\exp(\poly(\dim(\metric)))$. Then there exists a data structure for the \((2c+1,r)\)-NNS  problem over \(n\)-point sets in \(\metric\) with 
\begin{itemize}
    \item pre-processing time \(O(\poly(n\log_{1/b}(n)\dim(\metric))\cdot T_s(n))\);
    \item space \(O(n^{1+\rho}\log_{1/b} (n) \cdot\poly(\dim(\metric)))\);
    \item query time \(O(n^\rho \log_{1/b}(n)\cdot  \poly(\dim(\metric)))\).
\end{itemize}
\end{lemma}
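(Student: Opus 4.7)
The plan is to construct $L = \Theta(n^\rho)$ independent randomized decision trees whose leaves are singletons of $P$, answer a query by traversing each tree, and output any valid answer found. This mirrors the randomized decision tree approach of \cite{spectral} but with the interplay between ``dense'' ball-splits and ``dispersed'' LSH-splits handled carefully.

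\textbf{Construction.} At a node with point set $Q\subseteq P$, first test dispersion: for each $x_0\in Q$ compute $|Q\cap B_\metric(x_0,2cr)|$ in $O(|Q|^2\poly(\dim(\metric)))$ time. If some $x_0$ satisfies $|Q\cap B_\metric(x_0,2cr)|>|Q|/2$ (so $Q$ fails the sufficient condition of Lemma~\ref{lm:dispersed-P}), create a \emph{dense} node labeled $x_0$ with children $Q_{\mathrm{in}} := Q\cap B_\metric(x_0,2cr)$ and $Q_{\mathrm{out}} := Q\setminus Q_{\mathrm{in}}$. Otherwise $Q$ is $(cr,\tfrac12)$-dispersed; create a \emph{dispersed} node, sample $h\sim\lsh(Q)$ in time $T_s(|Q|)$, partition $Q$ by $h$, and organize the non-empty bucket children by the static dictionary of Lemma~\ref{lem:perfect-hash} keyed by $\bin\in\range$ (whose universe has size $\le \exp(\poly(\dim(\metric)))$, so the dictionary is efficient). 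Recurse; stop at $|Q|\le 1$.

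\textbf{Query.} Traverse from the root. At a dense node with center $x_0$: if $\dist_\metric(x_0,q)\le(2c+1)r$, return $x_0$; else recurse on $Q_{\mathrm{out}}$ only. At a dispersed node: evaluate $h(q)$, look up the child via the dictionary (failing for this tree if the bucket is absent), recurse. At a leaf storing $x$: return $x$ if $\dist_\metric(x,q)\le(2c+1)r$. Across all $L$ trees, report any valid answer found.

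\textbf{Correctness.} Fix the promised $x^*\in P$ with $\dist_\metric(x^*,q)\le r$ and track it down one tree. At a dense node, the triangle inequality gives $x^*\in Q_{\mathrm{in}}\Rightarrow \dist_\metric(x_0,q)\le 2cr+r=(2c+1)r$, in which case the algorithm returns $x_0$, a valid $(2c+1)r$-near neighbor; otherwise $x^*\in Q_{\mathrm{out}}$ and the algorithm recurses there. Hence dense nodes incur no probabilistic loss. At a dispersed node with point set $Q_v$, Definition~\ref{defn:emp-lsh} gives $\Pr[h(q)=h(x^*)]\ge p_1(Q_v)$. Along the query path $|Q|$ shrinks by factor $<\tfrac12$ at dense steps and $\le p_2(Q)\le p_2$ at dispersed steps, so the depth is at most $\log_{1/b}(n)$. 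For the dispersed nodes $v_1,\dots,v_k$ on the path we have $\prod_j p_2(Q_{v_j})\ge \frac{|Q_{v_k}|}{|Q_{v_1}|}\ge \frac{1}{n}$, hence $\sum_j\log(1/p_2(Q_{v_j}))\le \log n$, and since $\log(1/p_1(Q))\le\rho\log(1/p_2(Q))$ by hypothesis,
\[
\prod_{j=1}^k p_1(Q_{v_j}) \;\ge\; \exp\!\left(-\rho\sum_j \log\tfrac{1}{p_2(Q_{v_j})}\right) \;\ge\; n^{-\rho}.
\]
Taking $L=\Theta(n^\rho)$ independent trees amplifies the per-tree survival probability $n^{-\rho}$ to overall success at least $2/3$.

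\textbf{Complexity.} Within one tree the sub-multisets at each depth partition a subset of $P$, so $\sum_v|Q_v|\le n\log_{1/b}(n)$; each node stores $\poly(\dim(\metric))$ bits (for $x_0$ or $h$) plus a dictionary of size $O(\text{children})$, giving per-tree space $O(n\log_{1/b}(n)\poly(\dim(\metric)))$ and overall $O(n^{1+\rho}\log_{1/b}(n)\poly(\dim(\metric)))$. A query visits at most $\log_{1/b}(n)$ nodes per tree, each costing one distance test or $h(q)$ evaluation plus an $O(1)$ dictionary lookup, i.e.\ $\poly(\dim(\metric))$, yielding query time $O(n^\rho\log_{1/b}(n)\poly(\dim(\metric)))$. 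Pre-processing is $O(|Q|^2\poly(\dim(\metric))+T_s(|Q|))$ per node; summing over nodes, using monotonicity of $T_s$ and $|Q|\le n$, and multiplying by $L=n^\rho$ gives $O(\poly(n\log_{1/b}(n)\dim(\metric))\cdot T_s(n))$.

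\textbf{Main obstacle.} The delicate part is the dense-node analysis: although $|Q_{\mathrm{in}}|$ may be nearly as large as $|Q|$, the query never recurses on $Q_{\mathrm{in}}$ — the triangle inequality together with the $(2c+1)r$ threshold lets us correctly terminate with $x_0$ in that branch — so only $Q_{\mathrm{out}}$ (shrinking by factor $>1/2$) appears on the path, keeping depth $O(\log_{1/b}(n))$ without any hashing at dense levels. The second subtle point is extracting the tight exponent $\rho$ from the per-node ratios $\log(1/p_1(Q))/\log(1/p_2(Q))$ via the telescoping $\sum_j\log(1/p_2(Q_{v_j}))\le\log n$, rather than paying $\rho$ per level separately.
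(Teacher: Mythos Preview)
Your approach is essentially the paper's randomized decision tree construction, and your correctness analysis via the forward telescoping bound $\prod_j p_2(Q_{v_j})\ge 1/n \Rightarrow \prod_j p_1(Q_{v_j})\ge n^{-\rho}$ is equivalent to the paper's backward induction $\Pr[\mathcal S\mid \mathcal E_{U_j}]\ge |P_{U_j}|^{-\rho}$. That part is fine.

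There is, however, a real gap in the construction. At a dense node you create \emph{two} children $Q_{\mathrm{in}}$ and $Q_{\mathrm{out}}$ and then ``Recurse''. But your query never enters the $Q_{\mathrm{in}}$ subtree, so building it is at best wasted work; more seriously, it breaks your depth and space analysis. Along any path that repeatedly takes the $Q_{\mathrm{in}}$ branch the set need not shrink by any fixed factor --- $|Q_{\mathrm{in}}|$ can equal $|Q|$ (e.g., two points at distance $\le 2cr$, or any $Q$ entirely contained in $B_\metric(x_0,2cr)$), in which case the recursion on $Q_{\mathrm{in}}$ never terminates. Your claim ``the depth is at most $\log_{1/b}(n)$'' is only about the \emph{query} path (which goes to $Q_{\mathrm{out}}$), not the tree, so $\sum_v |Q_v|\le n\log_{1/b}(n)$ and the space and preprocessing bounds do not follow.

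The fix is exactly what the paper does: at a ball (dense) node store only the center $x_0$ and recurse \emph{only} on $Q\setminus B_\metric(x_0,2cr)$. Then every root-to-leaf path shrinks by a factor $\le b=\max(\tfrac12,p_2)$ at each step, the tree depth is $O(\log_{1/b} n)$, and the rest of your complexity accounting goes through unchanged.
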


The proof and terminology are largely adapted from \cite{spectral}. The data structure is a randomized decision tree similar to the one in~\cite{spectral}, with two generalizations: we allow the hash function to split space into more than two parts, and more importantly, we allow the parameters of the data-dependent LSH family to change from one node of the decision tree to the next.

\begin{proof}
Let $d = \dim(\metric)$ and $P\subseteq \metric$ be a set of $n$ points. The data structure consists of $k$ many independently generated
random decision trees. Each node $v$ of a tree stores the following attributes:

\begin{itemize}
    \item $v.$type: the type of the node;
    \item $v.P$: a subset of the dataset points;
    \item $v.$center:  a point in $\metric$;
    \item $v.h$: the $\poly(\dim(\metric))$ bits needed to store a hash function $h$ from a family $\lsh(Q)$;
    \item $v.$dict: the $O(|Q|\cdot d)$ bits needed to store the static dictionary from Lemma \ref{lem:perfect-hash} with $h(Q)$ as its set of keys; together with each element of \(h(Q)\) we store a pointer to a corresponding child of \(v\).
\end{itemize}

\begin{figure}[h]
\begin{subfigure}[t]{0.5\textwidth}
\centering
{\footnotesize
\begin{algorithmic}
\Function{Process}{$Q$, $\ell$, $v$}
\If{$\ell = t$ or $|Q| \leq 100$}
\State $v$.type $\leftarrow$ ``leaf''
\State $v.P \leftarrow Q$
\ElsIf{$\exists x_0\in Q$ s.t.~$|Q\cap B_\metric(x_0, 2cr)| > \frac{|Q|}{2}$}
\State call \textsc{ProcessBall}$(Q, x_0, \ell, v)$
\Else
\State Sample $h\sim \lsh(Q)$ and store it in $v.h \leftarrow h$
\State \textsc{ProcessHash}$(Q, h, \ell, v)$
\EndIf
\EndFunction
\\

\end{algorithmic} }
\end{subfigure}
\qquad
\begin{subfigure}[t]{0.5\textwidth}
\centering
{\footnotesize
\begin{algorithmic}
\Function{ProcessBall}{$Q$, $x$, $\ell$, $v$}
\State $v.$type $\leftarrow$ ``ball''
\State $v.$center $\leftarrow x$
\State create a node $v.$child 
\State call $\textsc{Process}(Q\setminus B_\metric(x,2cr),\ell+1, v.$child)
\EndFunction
\\ 
\Function{ProcessHash}{$Q, h, \ell, v$}
\State $v$.type $\leftarrow$ ``hash''
\State evaluate the set $h(Q)= \{\bin_1,\ldots ,\bin_s\}$
\State construct the static dictionary from Lemma \ref{lem:perfect-hash} with $h(Q)$ \quad as its set of keys and store it in $v.$dict
\For{$j = 1$ to $s$}
\State create a node $v.\bin_j$ 
\State call $\textsc{Process}(h^{-1}(\bin_j), \ell + 1, v.\bin_j)$
\EndFor
\EndFunction

\end{algorithmic} }
\end{subfigure}
\caption{Pseudocode for constructing the data-structure}
\label{pseudo-build}
\end{figure}

\paragraph{Pre-processing:} We keep a counter $\ell$ which denotes the current level of the tree we are processing. Initially, $\ell = 0$,
and it is incremented on each recursive call. Once $\ell$ reaches the threshold $t\coloneqq \lceil \log_{1/b}(n)\rceil$,
we store a leaf node $v$ and save the points of the dataset which reached $v$ in $v.P$. Thus the depth of
the tree is bounded by $t$ \emph{a priori}. For a non-leaf node $v$ which is processed by a call to $\textsc{Process}(Q,\ell,v)$, we do one of the following.

\begin{itemize}
    \item If there exists a point $x_0 \in Q$ such that $|Q \cap B_\metric(x_0, 2cr)| > |Q|/2$ , we build a \emph{ball} node. In this
case, the ball node saves $x_0$ in $v.$center. 
We then recurse by building
a data structure on $Q \setminus B_\metric(x_0, 2cr)$.
\item Otherwise, by Lemma \ref{lm:dispersed-P}, $Q$ is $(cr,\frac{1}{2})$-dispersed and by assumption, we have a \((r,p_1(Q),p_2(Q))\)-empirically sensitive LSH family \(\lsh(Q)\). We sample a hash function $h\sim \lsh(Q)$ and build a \emph{hash} node $v$. We store the $\poly(\dim(\metric))$ bits necessary to store a hash function from the family $\lsh(Q)$ in $v.h$, and recursively create a child node corresponding to each element in $h(Q)$. 

\end{itemize}

The final data structure consists of $k = O(n^\rho)$ independent trees, rooted at the nodes $v_1,\ldots, v_k$, where the $i^\mathrm{th}$ tree is built by a call to $\textsc{Process}(P, 0, v_i)$.

\begin{figure}[h]
\begin{subfigure}[t]{0.5\textwidth}
\centering
{\footnotesize
\begin{algorithmic}
\Function{Query}{$q$, $v$}
\If{$v.\type = \mathrm{``leaf"}$}
\For{$x$ in $v.P$}
\State \Return $x$ if $\dist_\metric(x,q) \leq (2c+1)r$
\EndFor
\State \Return $\perp$
\ElsIf{$v.\type = \mathrm{``ball"}$}
\State $x\leftarrow \textsc{QueryBall}(q,v)$
\State \Return $x$ if $x\neq \perp$
\ElsIf{$v.\type = \mathrm{``hash"}$}
\State $x\leftarrow \textsc{QueryHash}(q,v)$
\State \Return $x$ if $x\neq \perp$
\EndIf
\EndFunction
\\

\end{algorithmic} }
\end{subfigure}
\qquad
\begin{subfigure}[t]{0.5\textwidth}
\centering
{\footnotesize
\begin{algorithmic}
\Function{QueryBall}{$q, v$}
\State $x_0\leftarrow v.\cent$
\If{$\dist_\metric(x_0,q)\leq (2c+1)r$} 
\State \Return $x_0$
\EndIf
\State \Return $\textsc{Query}(q,v.\mathrm{child})$

\EndFunction
\\ 
\Function{QueryHash}{$q, v$}
\State identify $h$ from $v.h$
\State check if $h(q)$ is in the set of keys of $v$.dict
\If{the output of the algorithm is \emph{not} $\perp$} 
\State \Return $\textsc{Query}(q,v.h(q))$
\Else\ \Return $\perp$
\EndIf
\EndFunction
\end{algorithmic} }
\end{subfigure}
\caption{Pseudocode for querying the data-structure}
\label{pseudo-query}
\end{figure}

\paragraph{Querying the Data Structure:}
We now specify how to query the data structure; the pseudocode
is given above. For each of the $k$ trees in the data structure, we start the query procedure at
the root of the tree, and proceed by cases according to the type of node, as follows:

\begin{itemize}
    \item \emph{Leaf nodes:} If a query $q\in\metric$ queries a leaf node $v$, then we scan $v.P$ and return the
first point which lies within distance $(2c+1)r$. If no such point is found, we return $\perp$.

\item \emph{Ball nodes:} If a query $q\in\metric$ queries a ball node $v$, we test whether the query is close to the
ball centered at $x_0 = v.$center of radius $2cr$. In particular, if $\dist_\metric(x_0, q) \leq (2c+1)r$,
we return $x$. Otherwise, we recurse on the (unique) child node of $v$.

\item \emph{Hash nodes:} If a query $q\in\metric$ queries a hash node $v$, the querying algorithm checks if $q$ lies in the same ``cell'' as one of the dataset points by evaluating $h(q)$ (using the hash function stored in $v.h$) and seeing if it is in $h(P)$. This last check is achieved efficiently via the use of the efficient static dictionary data structure from Lemma \ref{lem:perfect-hash}. If it does not lie in one of these cells, the algorithm returns $\perp$. Otherwise, we descend down the child of $v$ in the tree corresponding to the cell $h(q)$ (a pointer to this child is stored in the dictionary) and recurse. 
\end{itemize}

We collect some simple observations about the data structure below.

\begin{claim}\label{clm:obs}
The following statements hold:
\begin{itemize}
    \item For a hash node $u$ that is built by a call to $\textsc{Process}(P_u,\ell, u)$ for some $P_u \subseteq P$ and has children $u_1,\ldots,u_s$ built by calls to $\textsc{Process}(P_{u_i},\ell, u_i)$ respectively, the sets $\{P_{u_i}\}$ form a partition for $P_u$.
    \label{subclm:unique}
    \item If $\textsc{Query}(q,v)$ returns a point $x$, then $\dist_\metric(q,x) \leq (2c+1) r$.
\end{itemize}
\end{claim}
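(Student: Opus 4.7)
The plan is to verify both observations by direct inspection of the pseudocode, with the second requiring a simple induction on the depth of the recursion tree.

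For the first bullet, I would start by unpacking the call tree rooted at the hash node $u$. By assumption, $u$ is produced by a call $\textsc{Process}(P_u, \ell, u)$, and the condition for it to be declared a hash node (rather than leaf or ball) is that we fall into the \textsc{else} branch, where a hash function $h \sim \lsh(P_u)$ is sampled and $\textsc{ProcessHash}(P_u, h, \ell, u)$ is invoked. Examining $\textsc{ProcessHash}$, the children of $u$ are exactly the nodes $u_j = v.\omega_j$ for $j = 1,\ldots,s$, created by calls $\textsc{Process}(h^{-1}(\omega_j) \cap P_u,\, \ell+1,\, u_j)$, where $\{\omega_1,\ldots,\omega_s\} = h(P_u)$ enumerates the image of $P_u$ under $h$ without repetition. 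Since $h$ is a function, every $x \in P_u$ satisfies $h(x) \in \{\omega_1,\ldots,\omega_s\}$ and belongs to $h^{-1}(\omega_j)$ for a unique $j$. Hence $\{P_{u_j}\} = \{h^{-1}(\omega_j) \cap P_u\}_{j=1}^s$ is a partition of $P_u$, as claimed. The only mildly subtle point is the implicit convention that the recursive call passes $h^{-1}(\omega_j) \cap P_u$ rather than the full preimage in $\metric$; I would note this once and then the statement is immediate.

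For the second bullet, I would induct on the height of the subtree at $v$. In the base case, $v$ is a leaf, and the only way $\textsc{Query}(q, v)$ can return a point $x$ is through the explicit scan of $v.P$ that tests $\dist_\metric(x,q) \leq (2c+1)r$; the bound is guaranteed by the check itself. For the inductive step, suppose the claim holds for every node of smaller height than $v$. If $v$ is a ball node, then $\textsc{QueryBall}$ either returns $x_0 = v.$center after verifying $\dist_\metric(x_0, q) \leq (2c+1)r$ explicitly, or recurses into $v.$child, whose output satisfies the bound by the inductive hypothesis. If $v$ is a hash node, $\textsc{QueryHash}$ either returns $\perp$ or recurses into the child $v.h(q)$, and again the inductive hypothesis handles this case. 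Aggregating the cases, any returned point $x$ satisfies $\dist_\metric(q,x) \leq (2c+1)r$.

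I do not expect any real obstacles here: both statements are bookkeeping consequences of how \textsc{Process} and \textsc{Query} are defined, and the proof is essentially a case analysis matched against the pseudocode in Figures~\ref{pseudo-build} and~\ref{pseudo-query}. The only things to be careful about are (i) making the implicit intersection with $Q$ explicit when reading the \textsc{ProcessHash} recursive call, and (ii) ensuring the induction covers the \emph{possibility} of returning $\perp$, since the claim is vacuously true in that case and only the successful-return branches require checking.
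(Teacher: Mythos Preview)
Your proposal is correct. The paper itself does not supply a proof of this claim; it labels the two statements as ``simple observations'' and states them without argument. Your direct inspection of the pseudocode for the first bullet and your induction on the height of the subtree for the second are exactly the natural way to fill in these omitted details, and there are no gaps.
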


\paragraph{Correctness:}
We summarize the proof of correctness in the following claim.

\begin{claim}
If, for a query point $q\in \metric$, there is a point $x^*\in P$ in the dataset such that $\dist_\metric(x^*,q)\leq r$, then with probability at least $2/3$, our algorithm outputs $x\in P$ with $\dist_\metric(x,q)\leq (2c+1)r$.
\end{claim}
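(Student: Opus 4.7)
The plan is to analyze the per-tree success probability by tracing the path of $x^*$ through a single decision tree, and then amplify by the $k=\Theta(n^\rho)$ independent trees. Fix one tree and let $v_0,v_1,\ldots,v_L$ be the unique sequence of nodes (from root to leaf) satisfying $x^*\in P_{v_i}$ for every $i$; by Claim~\ref{clm:obs}, this sequence is well-defined for any realization of the randomness.

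Along this path, I would first argue that at every step the query either returns a valid answer or descends to the same child as $x^*$. At a ball node $v_i$ with center $x_0$, if $\dist_\metric(x_0,q)\le (2c+1)r$ the algorithm returns $x_0$, which is a valid $(2c+1)r$-near neighbour. Otherwise, the triangle inequality combined with $\dist_\metric(x^*,q)\le r$ gives $\dist_\metric(x_0,x^*) > 2cr$, so $x^*\notin B_\metric(x_0, 2cr)$ and $x^*$ lies in the unique child $v_{i+1}$. At a hash node $v_i$ with sampled hash function $h\sim\lsh(P_{v_i})$, the query descends to the same child as $x^*$ precisely when $h(q)=h(x^*)$; since $\dist_\metric(q,x^*)\le r$, Definition~\ref{defn:emp-lsh} guarantees that this occurs with probability at least $p_1(P_{v_i})$, conditional on all earlier choices. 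If $q$ follows $x^*$ all the way to the leaf $v_L$, the linear scan returns $x^*$ itself (or another point at distance $\le (2c+1)r$).

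Letting $H\subseteq\{0,\ldots,L-1\}$ index the hash nodes on $x^*$'s path, the per-tree success probability is therefore at least $\prod_{i\in H} p_1(P_{v_i})$, understood as a pathwise bound valid for every realization of the random tree. The hypothesis $\log(1/p_1(Q))/\log(1/p_2(Q))\le \rho$ rearranges to $p_1(Q)\ge p_2(Q)^\rho$, and at a hash node the bin containing $x^*$ contains at most $p_2(P_{v_i})\,|P_{v_i}|$ points, so $|P_{v_{i+1}}|/|P_{v_i}|\le p_2(P_{v_i})$. Telescoping,
\[
\prod_{i\in H} p_1(P_{v_i})
\;\ge\;
\prod_{i\in H} p_2(P_{v_i})^{\rho}
\;\ge\;
\left(\prod_{i\in H}\frac{|P_{v_{i+1}}|}{|P_{v_i}|}\right)^{\!\rho}
\;\ge\;
\left(\prod_{i=0}^{L-1}\frac{|P_{v_{i+1}}|}{|P_{v_i}|}\right)^{\!\rho}
=\left(\frac{|P_{v_L}|}{n}\right)^{\!\rho}
\;\ge\; n^{-\rho},
\]
where the penultimate inequality uses that the ball-node ratios $|P_{v_{i+1}}|/|P_{v_i}|$ are at most $1$, and the final bound uses $|P_{v_L}|\ge 1$ since $x^*\in P_{v_L}$.

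Running $k=\Theta(n^\rho)$ independent trees, the probability that all of them fail is at most $(1-n^{-\rho})^k\le e^{-kn^{-\rho}}\le 1/3$ for a suitable constant in $\Theta$, which yields the required $2/3$ success probability. The main subtlety to handle carefully is that each $P_{v_i}$ is itself a random set that depends on the hash functions sampled at $v_0,\ldots,v_{i-1}$; the argument sidesteps this by producing a lower bound on the success probability that holds for every realization of the tree, so no additional conditioning is required.
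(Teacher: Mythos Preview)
Your proof is correct and takes essentially the same approach as the paper: trace $x^*$'s branch through one tree, show the query either succeeds at a ball node or follows $x^*$ through each hash node with conditional probability at least $p_1(P_{v_i})$, then telescope via $p_1\ge p_2^\rho$ and $|P_{v_{i+1}}|/|P_{v_i}|\le p_2(P_{v_i})$ to obtain the $n^{-\rho}$ per-tree bound; the paper packages the identical computation as a backward induction on $\Pr[\calS\mid \calE_{U_j}]\ge |P_{U_j}|^{-\rho}$. One small slip to fix: $x^*$'s path need not end at a leaf---it may terminate at a ball node $v_L$ whose center $x_0$ satisfies $\dist_\metric(x_0,x^*)\le 2cr$ (so $x^*$ is dropped and not passed to the child), in which case a query reaching $v_L$ returns $x_0$ rather than finding $x^*$ by linear scan; your ball-node case analysis already covers this, so the telescoping and final bound are unaffected.
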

\begin{proof}
Consider the fixed dataset $P$ and query $q\in \metric$ which is promised to have a point
$x^* \in P$ with $\dist_\metric(x^*, q) \leq r$. 
By Claim \ref{clm:obs}, any point $x$ returned by $\textsc{Query}(q, v_i)$, where $v_i$
is the root of one of the $k$ data
structure decision trees, indeed satisfies $\dist_\metric(x, q) \leq (2c+1) r$. To prove correctness, it remains to argue that, with
sufficiently high probability, at least one of the $\textsc{Query}(q, v_i)$ calls, for $i = 1,\ldots, k$, does in fact
return a point. Fix any $i$ between $1$ and $k$, and consider the (random) \emph{branch} of that decision tree traversed by the point $x^*$. More specifically, there is a random sequence of nodes $U_0, \ldots, U_\ell$ such that one of the following two cases must occur:

\begin{enumerate}[label=(\roman*)]
    \item\label{it:leaf} $U_0 = v_i$; $U_{j+1}$ is a child of $U_j$ for $j = 0,\ldots,\ell-1$; $U_\ell$ is a leaf node with $x^*\in U_\ell.P$, or
    \item\label{it:ball} $U_0 = v_i$; $U_{j+1}$ is a child of $U_j$ for $j = 0,\ldots,\ell-1$; $U_\ell$ is a ball node with $\dist_\metric(U_\ell.\text{center},x^*)\le 2cr$.
\end{enumerate}

 I.e., the {branch} of the decision tree traversed by $x^*$``ends'' either at a leaf node or at a ball node. Now, let $\calS$ denote the \emph{success} event i.e., the event that the call $\textsc{Query}(q, v_i)$ returns a point. Further, for a node $v$ in the tree rooted at $v_i$, let $\calE_{v}$ denote the event that $\textsc{Query}(q, v)$ is called during the execution of the querying algorithm. Let also $P_v$ be the subset of the dataset $P$ involved in the call
$\textsc{Process}(P_v ,j, v)$ that created $v$ (where $j$ is the distance of $v$ from the root $v_i$). We show, via a backward induction on $j$, that in either of the two cases above, we have with probability $1$, that \[\Pr[\calS\ |\ \calE_{U_j}] \geq |P_{U_j}|^{-\rho}.\] 
For $j=0$, this would imply that $\Pr[\calS\ |\ \calE_{U_0}] = \Pr[\calS\ |\ \calE_{v_i}] = \Pr[\calS] \geq n^{-\rho}$. Thus, by
choosing the number $k$ of trees in the data structure to be a sufficiently large multiple of $n^\rho$, we
can guarantee that with probability at least $2/3$, the data structure indeed returns a point $x$ such that
$\dist_\metric(x, q) \leq (2c+1)r$.

For the base case $j = \ell$, it is easy to see that $\Pr[\calS\ |\ \calE_{U_\ell}] = 1$: if $U_\ell$ is a leaf node (Case \ref{it:leaf}), then since the algorithm searches among all points $x$ in $U_\ell.P$, it must encounter a point $x$ with $\dist_\metric(x,q)\leq (2c+1)r$ because $x^*$ itself is a such a point. And if $U_\ell$ is a ball node (Case \ref{it:ball}), then the algorithm returns the center $x$ of $U_\ell$, for which $\dist_\metric(x,q) \leq \dist_\metric(x,x^*) + \dist_\metric(x^*,q) \leq (2c+1)r$.

Next, suppose that $\Pr[\calS\ |\ \calE_{U_j}] \geq |P_{U_j}|^{-\rho}$ for some $j\in [\ell]$. We now prove the hypothesis for $j-1$.
{
First suppose that $U_{j-1}$ is a ball node with center $x$. There are two cases: either $\dist_\metric(x,q)\leq (2c+1)r$ or $\dist_\metric(x,q)> (2c+1)r$. In the former case, the querying algorithm, by design, returns the center $x$ and so given $\calE_{U_{j-1}}$, we are guaranteed success with probability $1$. In the latter case, the algorithm necessarily calls $\textsc{Query}(q, U_j)$ and so, we have that $\calE_{U_j}$ holds if $\calE_{U_{j-1}}$ does, implying
\[
\Pr[\calS\ |\ \calE_{U_{j-1}}] = \Pr[\calS\ |\ \calE_{U_{j}}] \geq |P_{U_j}|^{-\rho}> |P_{U_{j-1}}|^{-\rho} 
\]
as $|P_{U_j}|< |P_{U_{j-1}}|/2$ since $U_{j-1}$ is a ball node.}

Next, suppose that $U_{j-1}$ is a hash node. First, notice that we have
\begin{equation}\label{eq:satisfy}
\Pr_{h\sim\lsh(P_{U_{j-1}})}[\calE_{U_j}\ |\ \calE_{U_{j-1}}] \geq \Pr_{h\sim\lsh(P_{U_{j-1}})}[h(q)=h(x^*)\ |\ \calE_{U_{j-1}}]\geq p_1(P_{U_{j-1}})
\end{equation}
from the empirical sensitivity of $\lsh(P_{U_{j-1}})$.
Also, note that the event $\calE_{U_{j}}$  can hold only if $\calE_{U_{j-1}}$ does, i.e., $\calE_{U_{j-1}}\supseteq \calE_{U_{j}}$ and so,
\[
\Pr[\calS\ |\ \calE_{U_{j-1}}] = \frac{\Pr[\calS \cap \calE_{U_{j-1}}]}{\Pr[\calE_{U_{j-1}}]} \geq \frac{\Pr[\calS \cap \calE_{U_{j}}]}{\Pr[\calE_{U_{j}}]}\cdot \frac{\Pr[\calE_{U_{j}}]}{\Pr[\calE_{U_{j-1}}]} \geq |P_{U_j}|^{-\rho}\cdot p_1(P_{U_{j-1}})
\]
where the final inequality follows from (\ref{eq:satisfy}) and the induction hypothesis. Next, as {$|P_{U_j}| \leq p_2(P_{U_{j-1}})\cdot |P_{U_{j-1}}|$}, it follows that 
\[
|P_{U_j}|^{-\rho} \cdot p_1(P_{U_{j-1}})\geq (p_2(P_{U_{j-1}})\cdot|P_{U_{j-1}}|)^{-\rho}\cdot p_1(P_{U_{j-1}}) \geq |P_{U_{j-1}}|^{-\rho}
\]
where the final inequality follows from the given assumption that \(\frac{\log(1/p_1(P_{U_{j-1}}))}{\log(1/p_2(P_{U_{j-1}}))} \le \rho\).
\end{proof}

\paragraph{Pre-processing Time Analysis:} 
Let us bound the time it takes to execute $\textsc{Process}(P, 0, v_1)$ i.e., to construct \emph{one} of the $k$ decision trees. 
We claim that the total time required to process all the nodes at any level $\ell$ is \(O(\poly(n\dim(\metric))\cdot T_s(n))\).

First note that there are at most $n$ nodes at each level: this is because any point $p\in P$ is only processed at at most one node on any given level. Since there are at most $t = O(\log_{1/b} n)$ many levels, it suffices to bound the time it takes to process a node \emph{before} recursing on one of its children. 

So suppose that  at some point the algorithm calls $\textsc{Process}(Q, \ell, v)$. It then decides to further process this node as a leaf, ball or hash node. Processing it a leaf node simply takes $O(|Q|)$ time. Checking for a dense ball, i.e., checking if there is an $x_0\in Q$ such that $|Q\cap B_\metric(x_0,2cr)| > \frac{|Q|}{2}$ takes at most $O(|Q|^2)$ time. Processing it as a ball node with arguments $(Q,x,\ell,v)$ takes $O(|Q|)$ time (before descending down its child).  Finally if the algorithm decides to process it as a hash node, sampling an $h\sim \lsh(Q)$ takes $T_s(|Q|)$ time and then, for processing a hash node with arguments $(Q,h,\ell,v)$, first note that the evaluation of the set $h(Q)$ takes $|Q|\cdot \poly(\dim(\metric))$ time; further, $|h(Q)|\leq |Q|$ and so  construction of the static dictionary takes $\poly(|Q|\dim(\metric))$ time.

In total, we conclude that the amount of time spent in a node is at most \(O(\poly(n\dim(\metric))\cdot T_s(n))\), regardless of its type. The final bound on the pre-processing time follows from the observation that there are at most $O(nt)$ many nodes in one tree, and at most $k = o(n)$ many trees.

\paragraph{Space Analysis:}
Note that a node $v$ stores $O(|P_v|\cdot \poly(\dim(\metric)))$ bits of memory in all for all its attributes. 
Since each point in $P$ ``contributes" to exactly one node per level of a tree, this amounts to storing at most $O(n\cdot\poly(\dim(\metric)))$ many bits per level. Taking into account that there are $O(\log_{1/b}(n))$ levels per tree and at most $O(n^\rho)$ many trees, we obtain the desired space bound. 

\paragraph{Query Time Analysis:}
In each of the $k = O(n^\rho)$ decision trees, the querying procedure on a given input point $q$ simply specifies a branch from the root of that decision tree to a leaf. The length of this branch is bounded by the height of the decision tree which is $O(\log_{1/b}(n))$. The amount of time spent in an internal node is always at most $\poly(\dim(\metric))$: it is constant time for a ball node (to check if the distance of $q$ from the center is at most $(2c+1)r$) and at most $\poly(\dim(\metric))$ in the case of hash nodes, in order to first evaluate $h(q)$ and then perform a lookup in the dictionary. Therefore, it suffices to verify that the amount of time spent in a leaf node is constant; in other words, that $|v.P|$ is constant for a leaf node $v$. 

For any leaf nodes at level $\ell < t$, this is immediate by construction. For any leaf at level $t$ of a decision tree, consider the unique branch from the root to that leaf and two nodes $u$ and its child $v$ along that branch. Suppose that $u$ was created by a call to $\textsc{Process}(P_u, \ell, u)$ and $v$ by $\textsc{Process}(P_v, \ell + 1, v)$. Either $u$ is a ball node in which case we know that $|P_v| < |P_u|/2$, or $u$ is a hash node with $u.h = h$ in which case we know that $P_v$ is a ``cell'' of $h(P_u)$ i.e., $P_v = h^{-1}(\bin)$ for some $\bin\in h(P_u)$. But by the empirical sensitivity guarantee for $h\sim \lsh(P_u)$, we know that $|h^{-1}(\bin)| \leq p_2(P_u)\cdot |P_u| \leq p_2 \cdot |P_u|$. Therefore, the number of points that can reach a leaf of any tree is bounded by $n\cdot b^t$ which is bounded by a constant due to the choice of $t$.

\end{proof}

\subsection{Data-dependent LSH Families from LSH Families}

We show that the existence of an LSH family implies the existence of a data-dependent LSH family. 

\begin{lemma}\label{lm:lsh2weeklsh}
Let \((\metric,\dist_\metric)\) be a metric space and let \(r > 0\), and \(c > 1\). Suppose there exists a \((r,cr,p_1,p_2)\)-sensitive \(\lsh\) where $p_2 \le 1/2$. Let \(P\subseteq \metric\) be a \((cr,\beta)\)-dispersed \(n\)-point set. Then, for 
\[
p'_2 = \sqrt{1 - \beta(1-2p_2)} 
\]
the event \(\calE = \{\max_{\bin\in\range} |\{x\in P: h(x) = \bin\}| \le p'_2 n\}\) occurs with probability at least \(1/2\), and \(\lsh\) conditioned on \(\calE\) is \((r,2p_1-1,p'_2)\)-empirically sensitive for \(P\).
\end{lemma}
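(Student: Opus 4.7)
The plan is to prove the two claims of the lemma in sequence: a Markov-type inequality on a suitable collision statistic will give the probability bound $\Pr[\calE] \ge 1/2$, and a short conditional-probability calculation will give the empirical sensitivity after conditioning on $\calE$.

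For the first claim, I would consider the random variable $X(h) := \sum_{\omega \in \range} |h^{-1}(\omega) \cap P|^2 = |\{(x,y) \in P \times P : h(x) = h(y)\}|$ and decompose it as $X = X_c + X_f$, where $X_c$ counts collisions among \emph{close} pairs (those with $\dist_\metric(x,y) \le cr$) and $X_f$ counts collisions among \emph{far} pairs (those with $\dist_\metric(x,y) > cr$). Since $P$ is $(cr, \beta)$-dispersed, the number of ordered close pairs in $P$ is at most $(1-\beta)n^2$, which yields the \emph{deterministic} bound $X_c \le (1-\beta) n^2$. Meanwhile the LSH sensitivity gives an expectation bound of the form $E[X_f] \le p_2 \cdot |\{\text{far pairs in } P\}|$. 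On the event $\calE^c$, we have $X \ge \max_\omega |h^{-1}(\omega) \cap P|^2 > (p_2' n)^2 = (1 - \beta(1 - 2p_2)) n^2$, and hence $X - (1-\beta)n^2 > 2 \beta p_2 n^2$; combined with the deterministic bound on $X_c$, this forces $X_f > 2 \beta p_2 n^2$. Applying Markov's inequality to $X_f$ then gives $\Pr[\calE^c] \le 1/2$.

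For the second claim, the bin-size condition in the definition of $(r, 2p_1 - 1, p_2')$-empirical sensitivity is precisely the event $\calE$ being conditioned on, so it holds by construction. For the close-pair collision bound, any $x, y \in \metric$ with $\dist_\metric(x,y) \le r$ satisfies $\Pr_h[h(x) \ne h(y)] \le 1 - p_1$ by the sensitivity of $\lsh$, and therefore
\[
\Pr_h[h(x) \ne h(y) \mid \calE] \;\le\; \frac{\Pr_h[h(x) \ne h(y)]}{\Pr[\calE]} \;\le\; \frac{1-p_1}{1/2} \;=\; 2(1-p_1),
\]
which gives $\Pr_h[h(x) = h(y) \mid \calE] \ge 2p_1 - 1$, as required.

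The hard part will be working out the precise arithmetic in the first claim so as to arrive at $p_2' = \sqrt{1 - \beta(1-2p_2)}$. This requires the clean separation of $X$ into a deterministic close-pair contribution and a random far-pair contribution, with Markov applied only to the latter; a naive Markov argument applied directly to $X$ would yield a weaker constant by a factor of $\sqrt{2}$, so it is essential that the $(1-\beta)n^2$ baseline be subtracted off deterministically before invoking randomness.
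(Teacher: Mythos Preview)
Your approach is essentially the paper's: decompose the total collision count into close-pair and far-pair contributions, bound the close part deterministically, and apply Markov to the far part. The second claim (conditional sensitivity) is argued exactly as in the paper and is correct.

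There is, however, a genuine arithmetic slip in the first claim. You bound $X_c \le (1-\beta)n^2$ and then need Markov to give $\Pr[X_f > 2\beta p_2 n^2] \le \tfrac12$. But $\E[X_f] \le p_2 m$, where $m$ is the number of far ordered pairs, and dispersion only tells you $m \ge \beta n^2$, not $m \le \beta n^2$. So Markov at the threshold $2\beta p_2 n^2$ yields
\[
\Pr[X_f > 2\beta p_2 n^2] \;\le\; \frac{p_2 m}{2\beta p_2 n^2} \;=\; \frac{m}{2\beta n^2},
\]
which can be as large as $1/(2\beta)$ and need not be $\le \tfrac12$. The issue is that you weakened the close-pair bound too early: you should keep $X_c \le n^2 - m$ (the exact count of close pairs) and apply Markov to $X_f$ at the threshold $2p_2 m$. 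Then $\Pr[X_f \ge 2p_2 m] \le \tfrac12$, and on the complementary event
\[
X \;\le\; (n^2 - m) + 2p_2 m \;=\; n^2 - (1-2p_2)m \;\le\; n^2\bigl(1-\beta(1-2p_2)\bigr) \;=\; (p_2')^2 n^2,
\]
where only in the last inequality do you invoke $m \ge \beta n^2$ (with the correct sign, since $1-2p_2 \ge 0$). This is exactly how the paper arranges the argument.
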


\begin{proof}
Let \(m = \Psi_\metric(P,cr)n^2\) be the number of ordered pairs of far points, and note that, by Lemma~\ref{lm:disp-cdf} and the assumption that \(P\) is \((cr,\beta)\)-dispersed, \(m \ge \beta n^2\). Further, let $X$ denote the random variable  (over the random choice of $h\sim \lsh$) that counts the number of far ordered pairs that collide i.e., 
\[
X = |\{(x,y) \in P \times P: \dist_\metric(x,y) > cr, h(x) = h(y)\}|.
\]
Let $\chi(\cdot)$ denote the indicator random variable for an event. Notice that
\[
X = \sum_{\substack{(x,y)\in P\times P :\\ \dist_\metric(x,y)>cr }} \chi(h(x) = h(y)) \leq m
\]
and therefore, $\E_{h\sim\lsh}[X] < p_2 m$ by the sensitivity of $\lsh$.
Now, let $Y = |\{(x,y)\in P \times P: h(x) = h(y)\}|$ be the random variable that denotes the total number of colliding ordered pairs of data points. We have
\begin{align*}
Y &= \sum_{x\in P}\sum_{y\in P} \chi((h(x) = h(y))\\
    &=\sum_{\substack{(x,y)\in P\times P :\\ \dist_\metric(x,y)\le cr}} \chi(h(x) = h(y))+ \sum_{\substack{(x,y)\in P\times P :\\ \dist_\metric(x,y)> cr}} \chi(h(x) = h(y))\\
    &\le n^2 - m + X 
\end{align*} 
By Markov's inequality, with probability at least $1/2$, $X < 2 p_2 m$, and, therefore, \(Y \le n^2 - (1-2p_2)m\). Using the assumption \(p_2 \le \frac12\) and the inequality $m \ge \beta n^2$, this gives us
\[
\Pr[Y > n^2(1-\beta(1-2p_2))] < \frac{1}{2}
\]
or in other words, that $Y > {p_2'}^2n^2$ with probability at most $1/2$. To see the proof through, suppose that there exists $\bin\in\range$ with $|\{x\in P: h(x) = \bin\}| > p'_2 n$. Then, every point in this \emph{cell} corresponding to $\bin$ collides with every other point in the cell, and we have $Y\geq |\{x\in P: h(x) = \bin\}|^2 > {p'_2}^2 n^2$. But we have already shown that the probability of this event is at most $1/2$. Thus, the event $\calE$ occurs with probability at least $1/2$.

Finally, we need to show that conditioned on $\calE$, $\lsh$ is \((r,2p_1-1,p'_2)\)-empirically sensitive for \(P\). For points $x,y\in P$ such that $\dist_\metric(x,y) \leq r$, note that
\[
\Pr_{h\sim \lsh}[h(x) \neq h(y)\ |\ \calE] = \frac{\Pr[\{h(x) \neq h(y)\}\wedge \calE] }{\Pr[\calE]} \le \frac{1-p_1}{1/2} = 2-2p_1
\]
and the conclusion follows.
\end{proof}

We instantiate this lemma with the LSH for \(\ell_1^d\) due to Indyk and Motwani~\cite{IM98} (see also~\cite{GIM99}). 
\begin{lemma}\label{lm:l1-lsh}
For any \(r > 0\), any \(c>1\), and any \(\Delta \ge 1\), the space \(\ell_1^d\) restricted to \([-\Delta, \Delta]^d\) has a \((r,cr,p_1,p_2)\)-sensitive \(\lsh\) with \(p_1 = 1 - \frac{r}{2d\Delta}\) and \(p_2 = 1 - \frac{cr}{2d\Delta}\). Moreover, \(h\sim \lsh\) can be sampled and evaluated in constant time.
\end{lemma}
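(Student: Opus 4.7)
The plan is to use the standard coordinate-threshold LSH family attributed to Indyk and Motwani: sample a random coordinate $i \in [d]$ uniformly, sample a random threshold $t \in [-\Delta, \Delta]$ uniformly, and define $h(x) = \mathbf{1}[x_i > t]$. This $h$ clearly depends on $O(1)$ random numbers, so sampling is constant time, and evaluation only requires reading the $i$-th coordinate of $x$ and comparing to $t$, also constant time.

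To establish the sensitivity parameters, I would fix two points $x, y \in [-\Delta, \Delta]^d$ and compute $\Pr_{h \sim \lsh}[h(x) \neq h(y)]$. Conditioning on the coordinate $i$, the probability is exactly $|x_i - y_i|/(2\Delta)$, since among thresholds in $[-\Delta, \Delta]$, only those strictly between $x_i$ and $y_i$ separate the two points. Averaging over $i \in [d]$, which is uniform, gives
\[
\Pr_{h \sim \lsh}[h(x) \neq h(y)] = \frac{1}{d}\sum_{i=1}^d \frac{|x_i - y_i|}{2\Delta} = \frac{\|x-y\|_1}{2d\Delta},
\]
so that $\Pr[h(x) = h(y)] = 1 - \|x-y\|_1/(2d\Delta)$.

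From this identity, both conclusions are immediate: if $\|x-y\|_1 \le r$ then $\Pr[h(x)=h(y)] \ge 1 - r/(2d\Delta) = p_1$, and if $\|x-y\|_1 > cr$ then $\Pr[h(x)=h(y)] < 1 - cr/(2d\Delta) = p_2$. There is no real obstacle in this argument; the only delicate point is making precise that the thresholds $t$ for which $h(x) \neq h(y)$ form an interval of length exactly $|x_i - y_i|$ inside $[-\Delta, \Delta]$, which uses the assumption $x, y \in [-\Delta, \Delta]^d$ so that both $x_i$ and $y_i$ lie in $[-\Delta, \Delta]$ and the separating interval is not truncated.
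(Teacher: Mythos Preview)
Your proof is correct and is exactly the standard argument from the cited references~\cite{IM98,GIM99}; the paper itself does not give a proof of this lemma, treating it as a known result. Your computation of the collision probability as exactly \(1 - \|x-y\|_1/(2d\Delta)\) is the intended one, and the sensitivity bounds follow immediately as you wrote.
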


Together with Lemma~\ref{lm:lsh2weeklsh}, Lemma~\ref{lm:l1-lsh} implies the following corollary. 
\begin{corollary}\label{cor:ell1-wlsh}
Let \(r > 0\), \(c > 6\), \(\Delta \ge 1\), and let \(P\) be a \((cr, \beta)\)-dispersed \(n\)-point set in \(\ell_1^d\) restricted to \([-\Delta, \Delta]^d\). There exists a \((r,1 - \frac{8}{c},1-\frac{\beta}{4})\)-empirically sensitive \(\lsh_{\ell_1^d}(P)\) for \(P\). Moreover, a function \(h \sim \lsh(P)\) can be sampled in \(O(n\ \poly(d\Delta/r))\) time, evaluated in \(\poly(d\Delta/r)\) time, and stored using \(\poly(d\Delta/r)\) bits.
\end{corollary}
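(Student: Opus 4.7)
The plan is to amplify the Indyk--Motwani $\ell_1^d$ LSH from Lemma \ref{lm:l1-lsh} by concatenation so that its collision probability at the far scale drops below $1/4$, and then feed the amplified family into Lemma \ref{lm:lsh2weeklsh} to obtain an empirically sensitive family for $P$. A direct application of Lemma \ref{lm:lsh2weeklsh} is impossible, since the raw $p_2 = 1 - cr/(2d\Delta)$ is typically much larger than $1/2$.

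First I would set $k = \lceil 2\ln 4 \cdot d\Delta/(cr)\rceil$ and consider the concatenated family $\lsh_0^{\otimes k}$, which draws $h_1,\ldots,h_k$ independently from Lemma \ref{lm:l1-lsh} and outputs $h(x) = (h_1(x),\ldots,h_k(x))$. This is $(r, cr, p_1^k, p_2^k)$-sensitive, and the choice of $k$ together with $1-x \le e^{-x}$ ensures $p_2^k \le e^{-\ln 4} = 1/4$. Bernoulli's inequality then gives $p_1^k \ge 1 - kr/(2d\Delta)$. Note that the dispersion hypothesis together with the $\ell_1$-diameter $2d\Delta$ of $[-\Delta,\Delta]^d$ forces $cr \le 2d\Delta$ for any nontrivial $\beta$, so $r/(2d\Delta) \le 1/c$, and a short calculation yields $p_1^k \ge 1 - 3/c$.

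Applying Lemma \ref{lm:lsh2weeklsh} to $\lsh_0^{\otimes k}$, whose far-scale bound $p_2^k \le 1/4$ satisfies the required $p_2 \le 1/2$, produces, on the good event $\calE$ (of probability $\ge 1/2$), an $(r, p_1', p_2')$-empirically sensitive family $\lsh_{\ell_1^d}(P)$ with $p_1' = 2p_1^k - 1 \ge 1 - 6/c \ge 1 - 8/c$ and $p_2' = \sqrt{1 - \beta(1-2p_2^k)} \le \sqrt{1 - \beta/2} \le 1 - \beta/4$, where the last step uses $\sqrt{1-x} \le 1 - x/2$ on $[0,1]$. To sample from $\lsh_{\ell_1^d}(P)$, I would use rejection sampling: draw $h \sim \lsh_0^{\otimes k}$ in time $O(k)$, evaluate $h$ on all of $P$ and check whether any bin contains more than $p_2' n$ points using a static dictionary in time $O(nk)$; resample if $\calE$ fails. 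Since $\Pr[\calE] \ge 1/2$, the expected number of trials is at most $2$, giving expected sampling time $O(nk) = O(n \cdot \poly(d\Delta/r))$. Evaluating $h$ at a new point takes $O(k) = \poly(d\Delta/r)$ time, and $h$ is stored via the $k = \poly(d\Delta/r)$ constant-size specifications of the $h_i$'s.

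The delicate point is calibrating $k$: it must be large enough to drive $p_2^k$ down to $1/4$ (needed both to invoke Lemma \ref{lm:lsh2weeklsh} and to achieve $p_2' \le 1 - \beta/4$), yet small enough that $1 - p_1^k = O(1/c)$, so that the factor-of-two loss in going from $p_1^k$ to $2p_1^k - 1$ still preserves $p_1' \ge 1 - 8/c$. Exploiting the implicit bound $cr \le 2d\Delta$ (forced by the diameter of $[-\Delta,\Delta]^d$ combined with the dispersion assumption) is what makes these two requirements simultaneously feasible, and the hypothesis $c > 6$ is exactly what is needed to guarantee $p_1' > 0$ from this argument.
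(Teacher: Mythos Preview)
Your proposal is correct and follows essentially the same strategy as the paper: concatenate $k$ independent copies of the Indyk--Motwani $\ell_1$ LSH to push $p_2^k \le 1/4$, then invoke Lemma~\ref{lm:lsh2weeklsh}. The only differences are cosmetic---the paper picks $k$ implicitly as the least integer with $p_2^k \le 1/4$ and bounds $p_1^k$ via the ratio $\rho = \ln(1/p_1)/\ln(1/p_2)$ (getting $p_1^k > 1-4/c$), whereas you fix $k$ explicitly and apply Bernoulli directly (getting the slightly sharper $p_1^k \ge 1-3/c$); and the paper secures $cr \le 2d\Delta$ by enlarging $\Delta$ if necessary rather than deducing it from the dispersion hypothesis.
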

\begin{proof}
Let \(\lsh\) be the \((r,cr,p_1,p_2)\)-sensitive hash family from Lemma~\ref{lm:l1-lsh}. We will assume that \(\Delta \ge \frac{cr}{2d}\), since otherwise we can increase \(\Delta\) without affecting the asymptotic complexity of the hash functions. For any integer \(k\ge 1\) we can define a  \((r,cr,p_1^k,p_2^k)\)-sensitive \(\lsh^k\)  by taking \(k\) independent samples \(h_1, \ldots, h_k\) from \(\lsh\) and defining \(h \sim \lsh^k\) as \(h(x) =(h_1(x), \ldots, h_k(x))\). Let us then choose 
\(k\) to be the smallest integer 
so that \(p_2^k \le \frac14\). Then, for \(\rho = \frac{\log(1/p_1)}{\log(1/p_2)}\),
\(
p_1^k = p_2^{-(k-1)\rho}p_1 \ge 4^{-\rho} p_1.
\)
Using the inequality \(\rho \le \frac1c - \frac1c\ln(p_1)\), this gives us 
\[
p_1^k > 2^{-2/c} p_1^2 > \left(1-\frac{2}{c}\right)\left(1-\frac{r}{2d\Delta}\right)^2
> 1- \frac{4}{c},
\]
where the last inequality uses the assumption \(\Delta \ge \frac{cr}{2d}\).
The corollary now follows from Lemma~\ref{lm:lsh2weeklsh}.
\end{proof}

\section{Weak Average Distortion Embeddings}

Next, we  introduce the notion of weak average distortion embedding, and show that weak average distortion embeddings into \(\ell_1^d\) or \(\ell_2^d\) imply the existence of data-dependent LSH. We start with the definition.

\begin{definition}\label{defn:wkavg}
Let \((\metric, \dist_\metric)\) and \((\mathcal{N}, \dist_{\mathcal{N}})\) be  two metric spaces, and let \(P\subseteq \metric\) be an \(n\)-point set. 
A function \(f:\metric\to \mathcal{N}\) is an embedding with weak average distortion \(D\) with respect to \(P\) if
we have
\[
\sup_{t \ge 0} t\Psi_{\mathcal{N}}(f(P),t)
\ge \frac{\lip{f}}{D} \sup_{t \ge 0} t\Psi_{\metric}(P,t). 
\]
\end{definition}
The name ``weak average distortion'' comes from the fact that \(\sup_{t \ge 0} t\Psi_{\metric}(P,t)\) is the weak-$L_1$ norm of \(\dist_\metric\) with respect to the uniform measure over \(P \times P\). So, a weak average distortion embedding is required to not expand distances too much while also not decreasing the weak-\(L_1\) norm of the pairwise distances. The analogous notion of $q$-average distortion, where we instead take the $L_q$ norm of \(\dist_\metric\) with respect to the same measure (see Definition~\ref{defn:avg-dist}), has been studied before. The definition of weak average distortion embedding appears to be new. It can be extended in a natural way to more general probability measures, but we will not pursue this here.

In this subsection, we show that a weak average distortion embedding of \(\metric\) into \(\ell_1^d\) implies, via Corollary~\ref{cor:ell1-wlsh}, a data-dependent LSH family for \(\metric\).

\begin{lemma}\label{lm:wkvg2lsh}
Let \(r > 0\), \(D\ge 1\), and \(\Delta>0\). Fix an approximation factor \(c \ge 64D\). Suppose that \(P\) is a \((cr,\frac12)\)-dispersed set of \(n\) points in a metric space \((\metric,\dist_\metric)\), and let \(\Delta\) be the diameter of \(P\). If \(f:\metric\to\ell_1^d\) (or \(f:\metric \to \ell_2^d\)) is an embedding with weak average distortion \(D\) with respect to \(P\), then there exists a \((r,p_1,p_2)\)-empirically sensitive \(\lsh(P)\) for \(P\) with 
\(
\frac{\log(1/p_1)}{\log(1/p_2)} \lesssim \frac{D}{c}
\)
and \(p_2 \ge 1-\frac{cr}{16D\Delta}\).

Moreover, assume that \(d\in \poly(\dim(\metric))\), and that \(f\) can be computed from \(P\) in time \(T\), and then stored in \(\poly(\dim(\metric))\) bits, and evaluated in \(\poly(\dim(\metric))\) time. Then a function \(h\sim \lsh(P)\) can be sampled in time \(O(T + \poly(n\dim(\metric)\Delta/r))\), evaluated in \(\poly(\dim(\metric)\Delta/r)\) time, and stored using using \(\poly(\dim(\metric)\Delta/r)\) bits. 
\end{lemma}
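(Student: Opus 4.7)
The plan is to transport the $\ell_1^d$ data-dependent LSH family from Corollary~\ref{cor:ell1-wlsh} back to $\metric$ via the embedding $f$, after first locating a scale $t^*$ at which the image $f(P)$ is certifiably dispersed in $\ell_1^d$. The weak-average-distortion hypothesis is precisely what forces such a scale to exist and to sit far enough above $\lip{f} r$ (the scale at which close $\metric$-pairs land in $\ell_1^d$) that the $\ell_1^d$ LSH applies. The $\ell_2^d$ case is handled identically, using the analogous $\ell_2^d$ empirically-sensitive LSH (either a direct ball/grid construction or Corollary~\ref{cor:ell1-wlsh} after a constant-distortion $\ell_2^d \hookrightarrow \ell_1$ Gaussian embedding).

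First I would convert the dispersion of $P$ into a weak-$L_1$ bound: since $P$ is $(cr,\tfrac12)$-dispersed, Lemma~\ref{lm:disp-cdf} yields $\Psi_\metric(P, cr) \geq \tfrac12$, and so $\sup_{t \geq 0} t\,\Psi_\metric(P, t) \geq cr/2$. Feeding this into the weak-average-distortion inequality gives $\sup_{t \geq 0} t\,\Psi_{\ell_1^d}(f(P), t) \geq \lip{f}\, cr /(2D)$. Pick $t^*$ attaining this supremum. Since $\Psi \leq 1$ and $c \geq 64D$, we get $t^* \geq \lip{f}\, cr/(2D) \geq 32 \lip{f} r$, so $c' := t^*/(2 \lip{f} r) \geq 16 > 6$; and since $t^* \leq \mathrm{diam}(f(P)) \leq \lip{f} \Delta$, we also obtain the pointwise bound $\Psi_{\ell_1^d}(f(P), t^*) \geq cr/(2D\Delta)$. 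The converse direction of Lemma~\ref{lm:disp-cdf} then certifies that $f(P)$ is $\bigl(t^*/2,\,\Psi_{\ell_1^d}(f(P), t^*)/2\bigr)$-dispersed in $\ell_1^d$.

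Applying Corollary~\ref{cor:ell1-wlsh} to $f(P)$ at scale $\lip{f} r$ with approximation factor $c'$ produces an $\bigl(\lip{f} r,\,1-8/c',\,1-\Psi(t^*)/8\bigr)$-empirically sensitive family $\lsh'$ on $f(P)$; composing each hash with $f$ yields the desired $\lsh(P)$ on $\metric$. Pairs with $\dist_\metric(x,y) \leq r$ map to $\ell_1^d$-distance at most $\lip{f} r$ and so collide with probability $\geq 1-8/c'$, and bucket sizes are preserved verbatim by composition (viewing $f(P)$ as a multiset), so the empirical $p_2$-condition transfers. Substituting $c' = t^*/(2 \lip{f} r)$ and using $t^* \Psi(t^*) \geq \lip{f}\,cr/(2D)$ gives
\[
\frac{\log(1/p_1)}{\log(1/p_2)} \;\lesssim\; \frac{1/c'}{\Psi(t^*)/8} \;=\; \frac{16 \lip{f} r}{t^*\Psi(t^*)} \;\lesssim\; \frac{D}{c},
\]
while $1-p_2 = \Psi(t^*)/8 \geq cr/(16D\Delta)$, which is the spreading guarantee asserted by the lemma.

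For the computational bounds, given $f$ (computable in time $T$), locating $t^*$ reduces to scanning the $O(n^2)$ pairwise $\ell_1^d$-distances in $f(P)$ and picking the one maximizing $t\,\Psi(t)$; this runs in $\poly(n\cdot\dim(\metric))$ time. The remaining sampling, storage, and evaluation bounds flow from Corollary~\ref{cor:ell1-wlsh} applied to $f(P)$ together with the stated costs of $f$; in particular, the $\ell_\infty$-diameter of $f(P)$ is at most $\lip{f}\Delta$, so the corollary's $\poly(d\Delta/r)$ overhead translates to $\poly(\dim(\metric)\,\Delta/r)$ here. The most delicate point I anticipate is the coordinated choice of $t^*$: it must be large enough that $c' > 6$ (so that Corollary~\ref{cor:ell1-wlsh} applies and the close radius $\lip{f} r$ fits comfortably inside the dispersed scale $t^*/2$), yet we must simultaneously use the upper bound $t^* \leq \lip{f}\Delta$ to force $\Psi(t^*) \gtrsim cr/(D\Delta)$, giving the right magnitude for $1 - p_2$. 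The hypothesis $c \geq 64D$ together with the diameter bound make these two constraints fit.
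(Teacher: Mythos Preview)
Your proposal is correct and follows essentially the same route as the paper's proof: both normalize (you carry $\lip{f}$ explicitly, the paper rescales to $\lip{f}=1$), push the $(cr,\tfrac12)$-dispersion through Lemma~\ref{lm:disp-cdf} and the weak-average-distortion inequality to find a scale $t^*$ at which $f(P)$ is dispersed in $\ell_1^d$, invoke Corollary~\ref{cor:ell1-wlsh} there, and compose with $f$. The only cosmetic wrinkle is that the supremum $\sup_t t\,\Psi_{\ell_1^d}(f(P),t)$ need not be attained (since $\Psi$ drops at pairwise distances), so you should instead pick $t^*$ with $t^*\Psi(t^*)\ge \tfrac12\sup_t t\Psi(t)$ (or, as the paper does, simply with $t^*\Psi(t^*)\ge \lip{f}cr/(2D)$); all your estimates then go through with harmless constant adjustments.
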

\begin{proof}
It suffices to prove the result for an embedding \(f\) into \(\ell_1^d\), since \(\ell_2^d\) embeds into \(\ell_1^{d'}\), for \(d'\lesssim d\), via an efficiently computable linear embedding with constant distortion~\cite{JS82}.
By rescaling, we can assume that \(\lip{f} = 1\), and, using Lemma~\ref{lm:disp-cdf}, we have
\[
\sup_{t \ge 0} t\Psi_{\ell_1^d}(f(P),t)
\ge \frac1D\sup_{t \ge 0} t\Psi_{\metric}(P,t)
\ge \frac{cr}D \Psi_{\metric}\left(P,cr\right)
\ge \frac{cr}{2D}.
\]
So, let \(t>0\) then be such that \(t\Psi_{\ell_1^d}(f(P),t) \ge \frac{cr}{2D}\). From Lemma~\ref{lm:disp-cdf}, we have that \(f(P)\) is \(\left(\frac{t}{2}, \frac{cr}{4Dt}\right)\)-dispersed. Moreover, the diameter of \(f(P)\) is at most \(\Delta\), so we can assume that \(t \le \Delta\). Without loss of generality, we can also assume that \(f(P) \subseteq [-\Delta,\Delta]^d\), as otherwise, we can shift \(f(P)\) so that one of its points is the origin.

Let us use \(\lsh_{\ell_1^d}\) to denote the data-dependent LHS family from Corollary~\ref{cor:ell1-wlsh} with parameters \(r\), \(c = \frac{t}{2r}\),
and \(\beta = \frac{cr}{4Dt}\). To sample \(h\sim \lsh(P)\), we compute the weak average distortion embedding \(f\), sample \(g\sim \lsh_{\ell_1^d}(f(P))\) and set \(h(x) = g(f(x))\). We have, by the guarantees of \(\lsh_{\ell_1^d}(f(P))\) and the fact that \(\lip{f}=1\), 
\begin{align*}
\dist_\metric(x,y) \le r \implies
\|f(x)-f(y)\|_1 \le r &\implies
\Pr_{g\sim\lsh_{\ell_1^d}(f(P))}[g(f(x)) = g(f(y))] \le 
1-\frac{16r}{t}.
\end{align*}
The last inequality is equivalent to \(\Pr_{h\sim\lsh(P)}[h(x) = h(y)] \le 1-\frac{16r}{t}\). This gives us the bound on the \(p_1\) parameter. For the \(p_2\) parameter analysis, let \(g\) be an arbitrary function in the support of \(\lsh_{\ell_1^d}(f(P))\), and let \(\bin\) be an arbitrary element of the range of \(\lsh_{\ell_1^d}(f(P))\). Then, 
\[
|\{x\in P: g(f(x)) = \bin\}| 
= |\{y\in f(P): g(y) = \bin\}| 
\le \left(1-\frac{cr}{16Dt}\right)n.
\]
In summary, \(\lsh(P)\) is \(\left(r,p_1, p_2\right)\)-empirically sensitive for \(p_1 = 1-\frac{16r}{t}\), \(p_2 = 1-\frac{cr}{16Dt}\le 1-\frac{cr}{16D\Delta}\). In particular,
\[
\frac{\ln(1/p_1)}{\ln(1/p_2)}
\le \frac{16r/t}{cr/(16Dt)}\left(1 -\ln\left(1-\frac{16r}{t}\right)\right)
\lesssim \frac{D}{c}.
\]
where the last inequality follows from \(\frac{16r}{t}\le \frac{32D}{c} \le 
\frac12\), which follows from \(t\Psi_{\ell_1^d}(f(P),t)\ge \frac{cr}{2D}\) and our assumption on \(c\).


The guarantee after ``moreover'' follows directly from Corollary~\ref{cor:ell1-wlsh} and the assumptions on \(f\).
\end{proof}

\section{From Average to Weak Average Distortion Embeddings}

As a final step towards proving Theorem~\ref{thm:avg2NNS}, in this section we show that we can construct weak average distortion embeddings using average distortion embeddings. As mentioned in Section~\ref{subsec:restech}, our argument is modelled after a similar connection between \(q\)- and \(r\)-average distortion embeddings, shown in~\cite{N14a,naor2019average}. We start with the theorem connecting the two notions of embeddings, which we prove in the remainder of the section.
\begin{theorem}\label{thm:avg2week}
Suppose that the metric space \((\metric,\dist_\metric)\) embeds with into a Banach space \((X,\|\cdot\|)\) with average distortion \(D\). Then, for any \(n\)-point set \(P\subseteq \metric\), there exists an embedding with weak average distortion \(D' \lesssim D(1 +\log D)\).  

Moreover, assume that for any point set \(Q\) of \(m\le n\) points in \(\metric\),  an embedding into \(X\) with average distortion \(D\) with respect to \(Q\) can be 
can be computed from \(Q\) in time \(T\), and then stored in \(\poly(\dim(\metric))\) bits, and evaluated in \(\poly(\dim(\metric))\) time. Then, 
for any \(n\)-point set \(P\subseteq \metric\), the  embedding into \(X\) with weak average distortion \(D'\) can be computed in time \(\poly(T + n\dim(\metric))\), stored in \(\poly(\dim(\metric))\) bits, and evaluated in time \(\poly(\dim(\metric))\).
\end{theorem}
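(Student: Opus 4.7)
The plan is to show that the embedding $f$ witnessing average distortion $D$ (possibly after a mild modification) already achieves weak average distortion $O(D(1+\log D))$. Normalize $\lip{f}=1$; let $Y = \|f(x)-f(y)\|$ and $Z = \dist_\metric(x,y)$, viewed as random variables on the uniform measure on $P\times P$. Then $Y \le Z$ pointwise and $\mathbb{E}[Y] \ge \mathbb{E}[Z]/D \ge W_P/D$, where $W_P = \sup_{t\ge 0} t \Psi_\metric(P,t)$. Writing $W_Y := \sup_{t\ge 0} t \Pr[Y > t]$, the target is $W_Y \gtrsim W_P/[D(1+\log D)]$.

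I would first dispose of the easy case via a direct layer-cake estimate: using $\Pr[Y>t]\le \min(1,W_Y/t)$ gives $\mathbb{E}[Y] \le W_Y(1+\log(B_Y/W_Y))$ with $B_Y \le B_\metric := \operatorname{diam}(P)$, so $W_Y \gtrsim W_P/[D(1+\log(B_\metric/W_Y))]$, which already yields the bound whenever $B_\metric \lesssim D \cdot W_P$. To handle the regime $B_\metric \gg D W_P$, I would decompose $P\times P$ into dyadic $Z$-shells $S_k = \{(x,y):Z \in (2^k,2^{k+1}]\}$, with per-shell quantities $M_k := 2^k |S_k|$, $C_k := \sum_{S_k} Y$, $R_k := C_k/M_k \in [0,1]$. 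Three ingredients drive the analysis: (i)~$\sup_k M_k \asymp n^2 W_P$, since the dyadic maximum of $t\Psi_\metric(P,t)$ captures $W_P$ up to a constant; (ii)~$\sum_k C_k = n^2 \mathbb{E}[Y] \ge n^2 W_P/D$; and (iii)~on each shell, $Y \le 2^{k+1}$, so a conditional layer cake yields $W_Y|_{S_k} \gtrsim 2^k R_k/(1+\log(1/R_k))$, which lifts to $W_Y \ge (|S_k|/n^2)\, W_Y|_{S_k} \gtrsim C_k/[n^2(1+\log(1/R_k))]$.

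The argument concludes by a pigeonhole on scales. In the easy subcase some scale simultaneously enjoys $M_k \gtrsim n^2 W_P$ and $R_k \gtrsim 1/D$, giving $W_Y \gtrsim W_P/[D(1+\log D)]$ at once. Otherwise, one splits scales into ``efficient'' ($R_k$ not too small) and ``inefficient'' ($R_k$ small) and uses the global constraint $\sum_k C_k \ge n^2 W_P/D$ together with $\sum_k M_k = n^2\mathbb{E}[Z] \le n^2 W_P(1+\log(B_\metric/W_P))$ to show that the inefficient scales' aggregate contribution to $\sum_k C_k$ is small, forcing some efficient scale to carry $C_k$ large enough that $C_k/(1+\log(1/R_k)) \gtrsim n^2 W_P/[D(1+\log D)]$. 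The main obstacle is this case analysis: one must delicately choose the inefficiency threshold and balance the $R_k$--$M_k$ distribution across scales to ensure the surviving $\log$ factor depends only on $D$ and not on $B_\metric/W_P$. If the pigeonhole proves insufficient on its own, one may restrict to a carefully chosen subset $P'\subseteq P$ of bounded diameter (to which the average distortion hypothesis applies by assumption) and lift the bound back to $P$, in the spirit of Naor's passage between different average distortion regimes (Section 5.1 of \cite{naor2019average}).

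For the efficiency ``moreover'' statement, the embedding we output is either $f$ itself or a minor post-processing of $f$ (e.g., restriction to a sub-point-set of bounded diameter), so its construction, storage, and evaluation complexities are exactly those of the given average-distortion embedding, up to overhead polynomial in $n$ and $\dim(\metric)$.
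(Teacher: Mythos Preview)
Your layer-cake argument for the bounded-diameter regime is correct and is essentially the paper's Lemma~\ref{lm:avg2wkavg-hard}. The dyadic shell pigeonhole, however, cannot close the gap in the regime $B_\metric \gg D W_P$. With your notation, your chain of inequalities only yields $W_Y \gtrsim \max_k C_k/\bigl(n^2(1+\log(1/R_k))\bigr)$, and the constraints you have---$\sum_k C_k \ge n^2 W_P/D$, $M_k \lesssim n^2 W_P$, $\sum_k M_k \asymp n^2\,\mathbb{E}[Z]$---are consistent with $L \asymp \log(B_\metric/W_P)$ scales each carrying $m_k \asymp 1$ and $c_k \asymp 1/(DL)$, in which case your bound degrades to $W_P/\bigl(DL(1+\log(DL))\bigr)$. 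No choice of efficiency threshold repairs this, because the only upper bound you have on the number of contributing scales is $L$, which depends on $B_\metric/W_P$.

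Your fallback of restricting to a bounded-diameter subset $P'$ is part of the right idea, but it is not by itself sufficient, and this is the real missing ingredient. Consider $P$ consisting of a tight cluster of $n/2$ points together with $n/2$ points at geometrically growing distances: then $W_P$ is dominated by the single farthest outlier, and \emph{no} subset of diameter $O(W_P)$ has $W_{P'}\gtrsim W_P$. The paper handles exactly this situation by a dichotomy you do not mention: either (Lemma~\ref{lm:avg2wkavg-easy}) the one-dimensional map $x\mapsto \dist_\metric(x^*,x)$, for a suitable ``median center'' $x^*$, already has weak average distortion $O(1)$ (this covers the outlier-dominated case), or the condition~\eqref{eq:concentration} forces $W_P$ to be realized at scale $\Theta(s(x^*))$, in which case the subset $Q=P\cap B_\metric(x^*,2s(x^*))$ has diameter $O(W_P)$ and $W_Q\gtrsim W_P$, and your easy-case argument applied to the average distortion embedding for $Q$ finishes. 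Note in particular that the output embedding is sometimes $\dist_\metric(x^*,\cdot)$ rather than any post-processing of $f$; your ``moreover'' paragraph should account for this.
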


Theorem~\ref{thm:avg2NNS} now follows  immediately from Lemmas~\ref{lm:main-ds}~and~\ref{lm:wkvg2lsh}, and Theorem~\ref{thm:avg2week}.

\medskip
We first handle an ``easy'' case, in which the distance function itself provides a good embedding into the line. The argument goes back to work by Rabinovich~\cite{Rabinovich08}.
\begin{lemma}\label{lm:avg2wkavg-easy}
Let \((\metric,\dist_\metric)\) be a metric space and let \(P\subseteq \metric\) be a subset containing $n$ points.
Define \(s(x) \coloneqq \min\{s: |P\cap B_\metric(x,s)| > \frac{n}{2}\}\) and let \(x^* \coloneqq  \arg \min_{x\in P} s(x)\). 
If 
\[
\sup_{t \ge s(x^*)}(t-s(x^*)) \frac{|P \setminus B_\metric(x^*,t)|}{n} \ge 
\alpha \sup_{t \ge 0} t \Psi_\metric(P,t),
\]
then the function \(f:\metric \to \R\) defined by \(f(x) = \dist_\metric(x^*,x)\) has weak average distortion \(\frac{1}{\alpha}\).
\end{lemma}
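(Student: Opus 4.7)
\textbf{Proof proposal for Lemma~\ref{lm:avg2wkavg-easy}.} Unpacking Definition~\ref{defn:wkavg}, I must establish two things about the map $f(x) = \dist_\metric(x^*,x)$: first, that $\lip{f} \le 1$, and second, that $\sup_{u\ge 0} u\Psi_\R(f(P),u) \ge \alpha \sup_{t\ge 0} t\Psi_\metric(P,t)$. Dividing the former into the latter yields the claimed weak average distortion of $1/\alpha$. The Lipschitz bound is immediate: for any $x,y\in\metric$, the triangle inequality gives $|f(x) - f(y)| = |\dist_\metric(x^*,x) - \dist_\metric(x^*,y)| \le \dist_\metric(x,y)$.

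The heart of the proof is the lower bound on the weak-$L_1$ norm of the image pairwise distances. The role of $s(x^*)$ here is that of a (left-continuous) median radius: by its very definition, more than $n/2$ points of $P$ satisfy $\dist_\metric(x^*,y) \le s(x^*)$. I fix an arbitrary threshold $t \ge s(x^*)$ and define $u := t - s(x^*) \ge 0$. The plan is to produce many pairs $(x,y) \in P\times P$ whose image-distance exceeds $u$: let $x$ range over the ``far'' set $P\setminus B_\metric(x^*,t)$ (on which $\dist_\metric(x^*,x) > t$) and let $y$ range over $P \cap B_\metric(x^*,s(x^*))$ (on which $\dist_\metric(x^*,y) \le s(x^*)$). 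For any such pair,
\[
|f(x)-f(y)| \;=\; \dist_\metric(x^*,x)-\dist_\metric(x^*,y) \;>\; t - s(x^*) \;=\; u.
\]
Since $t \ge s(x^*)$, the two sets are disjoint, and counting both orderings yields at least $2 \cdot |P \setminus B_\metric(x^*,t)| \cdot \tfrac{n}{2} = n\cdot|P\setminus B_\metric(x^*,t)|$ ordered pairs, so $\Psi_\R(f(P),u) \ge |P\setminus B_\metric(x^*,t)|/n$.

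Multiplying by $u = t - s(x^*)$ and taking the supremum over $t\ge s(x^*)$ gives
\[
\sup_{u\ge 0} u\,\Psi_\R(f(P),u) \;\ge\; \sup_{t\ge s(x^*)} (t-s(x^*))\frac{|P\setminus B_\metric(x^*,t)|}{n} \;\ge\; \alpha \sup_{t\ge 0} t\,\Psi_\metric(P,t),
\]
where the last step is the hypothesis. Combined with $\lip{f}\le 1$, this is exactly the weak average distortion inequality with $D = 1/\alpha$. There is no genuine obstacle here; the only subtle point is checking that the ``median'' property of $s(x^*)$ provides the strict inequalities needed to turn a ``far from $x^*$'' condition on $x$ into a quantitative lower bound $> u$ on $|f(x)-f(y)|$, which is why the threshold is shifted by $s(x^*)$ both in the hypothesis and in the definition of $u$.
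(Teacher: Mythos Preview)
Your proof is correct and follows essentially the same route as the paper: both use that $|P\cap B_\metric(x^*,s(x^*))|>n/2$ together with the far set $P\setminus B_\metric(x^*,t)$ to produce $\gtrsim n\cdot|P\setminus B_\metric(x^*,t)|$ ordered pairs whose image-distance exceeds $t-s(x^*)$, then invoke the hypothesis and $\lip{f}\le 1$. The only cosmetic difference is that the paper fixes the maximizing threshold $t^*$ from the start, whereas you keep $t$ arbitrary and take the supremum at the end.
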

\begin{proof}
Suppose that \(t^*\) achieves
\(
\sup_{t \ge s(x^*)}(t-s(x^*)) |P \setminus B_\metric(x^*,t)|
\)
and let \(\beta^*  \coloneqq \frac{|P \setminus B_\metric(x^*,t^*)|}{n}\). 
We have
\begin{multline*}
|\{(x,y): x \in P \cap B_\metric(x^*,s(x^*)), y \in P \setminus B_\metric(x^*,t^*) \}|\\
=
|\{(y,x): x \in P \cap B_\metric(x^*,s(x^*)), y \in P \setminus B_\metric(x^*,t^*) \}| 
> 
\frac{\beta^*n^2}{2}.
\end{multline*}
For any 
\(
x \in P \cap B_\metric(x^*,s(x^*)), 
\)
and any
\(
y \in P \setminus B_\metric(x^*,t^*)
\)
we have \(|f(x) - f(y)| >  t^*-s(x^*)\), and, therefore, 
\(
\Psi_\R(f(P),t^*-s(x^*)) > \beta^*.
\)
This implies 
\[
\sup_{t\ge 0} t \Psi_\R(f(P),t) \ge
(t^*-s(x^*)) \Psi_\R(f(P),t^*-s(x^*)) 
> (t-s(x^*)) \beta^* \ge \alpha \sup_{t\ge 0}t \Psi_\metric(P,t).
\]
Since \(\lip{f}\le 1\) by the triangle inequality, the weak average distortion of \(f\) is at most \(\frac{1}{\alpha}\), as claimed.
\end{proof}

We will need the following standard technical lemma, which relates the weak-\(L_1\) and the \(L_1\) norm of the distance function over the uniform distribution on \(P\times P\).

\begin{lemma}\label{lm:wk-strong-avg}
Let \((\metric,\dist_\metric)\) be a metric space, and let \(P\subseteq \metric\) be an \(n\)-point set of diameter \(\Delta\). Then,
\[
\sup_{t \ge 0} t\Psi_\metric(P,t) 
< \frac{1}{n^2}\sum_{x \in P}\sum_{y\in P}\dist_\metric(x,y)
\le 
2\ln\left(\frac{2\Delta}{\frac{1}{n^2}\sum_{x \in P}\sum_{y\in P}\dist_\metric(x,y)}\right) \sup_{t \ge 0} t\Psi_\metric(P,t) .
\]
\end{lemma}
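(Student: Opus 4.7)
Let $\mu := \frac{1}{n^2}\sum_{x,y\in P}\dist_\metric(x,y)$ and $W := \sup_{t \ge 0} t\Psi_\metric(P,t)$. Observe that $\mu$ is the average over $(x,y) \in P\times P$ of $\dist_\metric(x,y)$, while $\Psi_\metric(P,t)$ is the fraction of such pairs at distance exceeding $t$. The identity that drives everything is the layer-cake formula
\[
\mu = \int_0^\infty \Psi_\metric(P,t)\,dt = \int_0^\Delta \Psi_\metric(P,t)\,dt,
\]
where the upper limit is $\Delta$ because no pair in $P\times P$ has distance exceeding the diameter.

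For the left inequality $W < \mu$, my plan is to fix any $t^* \ge 0$ and use monotonicity of $\Psi_\metric(P,\cdot)$ to write
\[
t^*\Psi_\metric(P,t^*) = \int_0^{t^*}\Psi_\metric(P,t^*)\,dt \le \int_0^{t^*}\Psi_\metric(P,t)\,dt \le \mu.
\]
Taking the supremum over $t^*$ yields $W \le \mu$. To promote this to strict inequality (assuming $\mu>0$; otherwise the statement is vacuous), I would observe that $\mu - t^*\Psi_\metric(P,t^*) \ge \int_{t^*}^\infty \Psi_\metric(P,t)\,dt$. If $\Psi_\metric(P,t^*)>0$, then $\Psi_\metric(P,\cdot)$ is strictly positive on a right-neighborhood of $t^*$ (since it is a step function with only finitely many jumps), so this integral is positive; and if $\Psi_\metric(P,t^*)=0$, then $t^*\Psi_\metric(P,t^*)=0<\mu$ immediately.

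For the right inequality, the plan is the standard trick of splitting the layer-cake integral at a threshold $t_0 \in (0,\Delta]$:
\[
\mu \le \int_0^{t_0} 1\,dt + \int_{t_0}^\Delta \frac{W}{t}\,dt = t_0 + W\ln(\Delta/t_0),
\]
using $\Psi_\metric(P,t)\le 1$ on $[0,t_0]$ and the definition of $W$, which gives $\Psi_\metric(P,t) \le W/t$ on $[t_0,\Delta]$. Choosing $t_0 = \mu/2$ (legitimate since $\mu \le \Delta$) and rearranging yields $\mu/2 \le W\ln(2\Delta/\mu)$, which is equivalent to the claimed bound. I expect no real obstacle here: this is the textbook weak-$L_1$ versus $L_1$ comparison, balanced at the natural scale $t_0 \asymp \mu$.
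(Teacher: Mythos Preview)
Your proposal is correct and essentially identical to the paper's proof: the second inequality is argued verbatim by splitting the layer-cake integral at $t_0=\mu/2$, and for the first inequality the paper's one-line counting observation (at least $\Psi_\metric(P,t)n^2$ terms exceed $t$) is just the pointwise form of your monotonicity-plus-layer-cake argument. Your treatment of the strict inequality is in fact more careful than the paper's.
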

\begin{proof}
The first inequality follows since there are at least \(\Psi_\metric(P,t) n^2\) terms in the double sum on its right hand side that are greater than \(t\). 

Let \(a = \frac{1}{n^2}\sum_{x \in P}\sum_{y\in P}\dist_\metric(x,y) \) and \(b = \sup_{t \ge 0} t\Psi_\metric(P,t)\).
Towards the second inequality, observe that
\begin{align*}
a &= \int_{0}^\Delta \Psi_\metric(P,t) dt\\
&= \int_{0}^{\frac{a}{2}} \Psi_\metric(P,t) dt + \int_{\frac{a}{2}}^\Delta \Psi_\metric(P,t) dt\\
&\le \frac{a}{2} + \int_{\frac{a}{2}}^\Delta \frac{b}{t} dt
= \frac{a}{2} + b\ln\left(\frac{2\Delta}{a}\right).
\end{align*}
Re-arranging the terms proves the inequality.
\end{proof}

The following lemma is where we use the existence of an average distortion embedding.
\begin{lemma}\label{lm:avg2wkavg-hard}
Let \((\metric,\dist_\metric)\) and \((\calN,\dist_\calN)\) be metric spaces, and let \(P\subseteq \metric\) be an \(n\)-point set. Suppose that there exists an embedding \(f:\metric\to\calN\) with average distortion \(D\) such that the diameter of \(f(P)\) in \(\calN\) is at most \(\frac{C}{n^2}\sum_{x\in P}\sum_{y\in P} \dist_\metric(x,y)\). Then \(f\) has weak average distortion at most \(D' \le 2 D \log\left(\frac{2CD}{\lip{f}}\right).\)
\end{lemma}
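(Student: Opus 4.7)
The plan is to chain together Lemma~\ref{lm:wk-strong-avg} applied on the $\calN$ side, the average-distortion hypothesis on $f$, and the first (trivial) inequality of Lemma~\ref{lm:wk-strong-avg} on the $\metric$ side. Write
\[
a_\metric \coloneqq \frac{1}{n^2}\sum_{x\in P}\sum_{y\in P}\dist_\metric(x,y),
\qquad
a_\calN \coloneqq \frac{1}{n^2}\sum_{x\in P}\sum_{y\in P}\dist_\calN(f(x),f(y)).
\]
By the average-distortion hypothesis, $a_\calN \ge \tfrac{\lip{f}}{D}\, a_\metric$, and by the diameter hypothesis, the diameter of $f(P)$ in $\calN$ is at most $C a_\metric$.

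First, I apply the second inequality of Lemma~\ref{lm:wk-strong-avg} to the point set $f(P)$ in $\calN$ (with diameter bound $C a_\metric$), obtaining
\[
\sup_{t\ge 0} t\,\Psi_\calN(f(P),t) \;\ge\; \frac{a_\calN}{2\ln\!\left(2C a_\metric/a_\calN\right)}.
\]
Since $a_\metric/a_\calN \le D/\lip{f}$, the denominator is at most $2\ln(2CD/\lip{f})$.

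Second, I substitute $a_\calN \ge \tfrac{\lip{f}}{D}\, a_\metric$ into the numerator and use the first (trivial) inequality of Lemma~\ref{lm:wk-strong-avg} on the $\metric$ side, namely $a_\metric > \sup_{t\ge 0} t\,\Psi_\metric(P,t)$. Combining these yields
\[
\sup_{t\ge 0} t\,\Psi_\calN(f(P),t) \;\ge\; \frac{\lip{f}}{2D\ln(2CD/\lip{f})}\,\sup_{t\ge 0} t\,\Psi_\metric(P,t),
\]
which is exactly the desired weak average distortion bound $D' \le 2D\log(2CD/\lip{f})$.

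There is no real obstacle here; the whole proof is a straightforward two-line chain once Lemma~\ref{lm:wk-strong-avg} is available. The only point that requires a bit of care is that the two inequalities of Lemma~\ref{lm:wk-strong-avg} are used in opposite directions on the two sides of the embedding: on the $\metric$ side we use that the $L_1$-average of pairwise distances dominates the weak-$L_1$ quantity $\sup_t t\,\Psi_\metric(P,t)$, while on the $\calN$ side we need the converse inequality that the weak-$L_1$ quantity is at least the $L_1$-average divided by a logarithmic factor in the ratio of the diameter of $f(P)$ to its average pairwise distance. Bounding this ratio using both the diameter hypothesis and the average-distortion guarantee is precisely what produces the $\log(2CD/\lip{f})$ factor in $D'$.
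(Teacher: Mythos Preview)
Your proposal is correct and follows essentially the same argument as the paper's proof: apply the second inequality of Lemma~\ref{lm:wk-strong-avg} to $f(P)$ in $\calN$, bound the diameter by $C a_\metric$ and the ratio $a_\metric/a_\calN$ by $D/\lip{f}$ via the average-distortion hypothesis, and then use the first inequality of Lemma~\ref{lm:wk-strong-avg} on the $\metric$ side. The paper's notation ($\overline{R}_\metric$, $\overline{R}_\calN$) and ordering of steps differ only cosmetically from yours.
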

\begin{proof}
Let $\Delta$ be the diameter of $f(P)$ in $\calN$.
Let, further, 
\begin{align*}
    &\overline{R}_{\metric} = \frac{1}{n^2}\sum_{x\in P}\sum_{y\in P}\dist_\metric(x,y),
    &&\overline{R}_{\calN} = \frac{1}{n^2}\sum_{x\in P}\sum_{y\in P}\dist_\calN(f(x),f(y)).
\end{align*}
By Lemma~\ref{lm:wk-strong-avg} and the assumption on diameter \(\Delta\) of \(f(P)\), we have
\begin{align*}
\sup_{t \ge 0} t\Psi_{\mathcal{N}}(f(P),t)&\geq \frac{\overline{R}_\calN}{2\ln\left(\frac{2\Delta}{\overline{R}_\calN}\right)}  
\geq \frac{\overline{R}_\calN}{2\ln\left(\frac{2C\overline{R}_\metric}{\overline{R}_\calN}\right)} 
\end{align*}
Now by the definition of average distortion and by Lemma~\ref{lm:wk-strong-avg}, it is clear that
\begin{align*}
    \overline{R}_\calN &\geq \frac{\lip{f}}{D}\overline{R}_\metric
    \geq \frac{\lip{f}}{D}\sup_{t \ge 0} t\Psi_\metric(P,t).
\end{align*}
Substituting into the previous inequality, we obtain
\begin{align*}
    \sup_{t \ge 0} t\Psi_{\mathcal{N}}(f(P),t)\geq \frac{\overline{R}_\calN}{2\ln\left(\frac{2C\overline{R}_\metric}{\overline{R}_\calN}\right)}&\geq
    \frac{\lip{f}}{2D\ln\left(\frac{2CD}{\lip{f}}\right)}\sup_{t \ge 0} t\Psi_\metric(P,t)\\
\end{align*}
This completes the proof.
\end{proof}

\begin{proof}[Proof of Theorem~\ref{thm:avg2week}]
For the rest of the proof we use the notation \(x^*\) and \(s(x^*)\) from Lemma~\ref{lm:avg2wkavg-easy}.

Let \(\alpha > 0\) be a small enough absolute constant, which we will choose later. 
Because of Lemma~\ref{lm:avg2wkavg-easy}, we can assume that
\begin{equation}\label{eq:concentration}
    \sup_{t \ge s(x^*)}(t-s(x^*)) \frac{|P \setminus B_\metric(x^*,t)|}{n} < 
\alpha \sup_{t \ge 0} t \Psi_\metric(P,t),
\end{equation}
as, otherwise, the lemma gives the required embedding, since the real line \(\R\) embeds in any Banach space. 

We first show the following claim.
\begin{claim}\label{cl:opt-t}
If the inequality \eqref{eq:concentration} holds, then \(\sup_{t \ge 0} t \Psi_\metric(P,t)\) is achieved for some \(t \le \frac{2s(x^*)}{1-4\alpha}\). 
\end{claim}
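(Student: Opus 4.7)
My plan is to prove the claim by contradiction: assume the supremum is achieved (or approached) at some $t > \frac{2s(x^*)}{1-4\alpha}$ and use the triangle inequality together with \eqref{eq:concentration} to derive that $t\Psi_\metric(P,t) < M$, where $M \coloneqq \sup_{s \ge 0} s\Psi_\metric(P,s)$, at every such $t$. Since $\Psi_\metric(P, \cdot)$ is a right-continuous, non-increasing step function with only finitely many values (jumps), the supremum is attained, and therefore it must be attained at some $t \le \frac{2s(x^*)}{1-4\alpha}$.

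The key geometric step is a simple triangle inequality count. For any $t > 0$, if $\dist_\metric(x,y) > t$, then $\max(\dist_\metric(x^*, x), \dist_\metric(x^*, y)) > t/2$, so
\[
|\{(x,y) \in P \times P : \dist_\metric(x,y) > t\}| \le 2n\,|P \setminus B_\metric(x^*, t/2)|.
\]
Writing $t' = t/2$ and dividing by $n^2$, this gives
\[
t\Psi_\metric(P,t) \le 4 t'\,\frac{|P \setminus B_\metric(x^*, t')|}{n}.
\]

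The arithmetic step is to verify that the assumed lower bound $t > \frac{2s(x^*)}{1-4\alpha}$ lets us replace the factor $4t'$ by $\frac{t' - s(x^*)}{\alpha}$. Indeed, $t > \frac{2s(x^*)}{1-4\alpha}$ is equivalent to $(1 - 4\alpha) t' > s(x^*)$, i.e.\ $4\alpha t' < t' - s(x^*)$, so $4t' < \frac{t'-s(x^*)}{\alpha}$. Combined with the previous display and the hypothesis \eqref{eq:concentration}, this yields
\[
t\Psi_\metric(P, t) < \frac{1}{\alpha}\,(t' - s(x^*))\,\frac{|P \setminus B_\metric(x^*, t')|}{n} \le \frac{1}{\alpha} \sup_{s \ge s(x^*)} (s - s(x^*))\,\frac{|P \setminus B_\metric(x^*,s)|}{n} < \frac{1}{\alpha} \cdot \alpha M = M,
\]
where we also used $t' > s(x^*)$ (which follows from $\alpha$ small, so that $\frac{2s(x^*)}{1-4\alpha} \ge 2s(x^*)$) to apply \eqref{eq:concentration}.

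I do not expect any serious obstacle here; the proof is essentially a two-line triangle inequality followed by algebraic manipulation of the threshold $\frac{2s(x^*)}{1-4\alpha}$ so that it matches the shape of the quantity controlled by \eqref{eq:concentration}. The one subtlety to be careful about is distinguishing strict from non-strict inequalities (so as to rule out the supremum being attained beyond $\frac{2s(x^*)}{1-4\alpha}$, not merely being bounded there), which is handled by the fact that \eqref{eq:concentration} is assumed strict. A small remark about why the supremum is attained at all (e.g.\ because $t \mapsto t\Psi_\metric(P,t)$ is a piecewise-linear function that jumps down at each pairwise distance in $P$, hence attains its maximum at one of these $O(n^2)$ breakpoints) completes the argument.
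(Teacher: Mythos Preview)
Your proof is correct and follows essentially the same approach as the paper: the paper also fixes a value beyond the threshold $\frac{2s(x^*)}{1-4\alpha}$, applies the triangle inequality to bound $\Psi_\metric(P,t')$ by $\frac{2}{n}|P\setminus B_\metric(x^*,t'/2)|$, and then uses \eqref{eq:concentration} together with the algebraic equivalence between $t' > \frac{2s(x^*)}{1-4\alpha}$ and $\frac{2\alpha t'}{t'/2 - s(x^*)} < 1$ to conclude that the supremum cannot be attained there. Your added remark that the supremum is actually attained (because $t\mapsto t\Psi_\metric(P,t)$ is piecewise linear with finitely many breakpoints) is a nice point that the paper leaves implicit.
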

\begin{subproof}
Let us take some \(t' > \frac{2s(x^*)}{1-4\alpha}\) (where we assume \(\alpha \le \frac14\)). We will show that the supremum is not achieved at \(t'\). Note that, if \(\dist_\metric(x,y) > t'\), then by the triangle inequality, we must have one of \(\dist_\metric(x^*,x) > \frac{t'}{2}\) or \(\dist_\metric(x^*,y) > \frac{t'}{2}\). Therefore,
\[
\Psi_\metric(P,t') \le \frac{2|P \setminus B_\metric(x^*,\frac{t'}{2})|}{n}.
\]
Together with \eqref{eq:concentration} and the assumption \(t' > \frac{2s(x^*)}{1-4\alpha}\), this implies 
\[
t' \Psi_\metric(P,t')
\le 
\frac{2\alpha t'}{\frac{t'}{2} - s(x^*)}  \sup_{t \ge 0} t \Psi_\metric(P,t)
<   \sup_{t \ge 0} t \Psi_\metric(P,t).
\]  
This proves the claim. 
\end{subproof} 


Next we claim that 
\begin{equation}\label{eq:psi-sxstar}
\Psi_\metric\left(P,\frac{s(x^*)}{2}\right)
\ge  \frac{1}{2}.
\end{equation}
Indeed, by the definition of \(s(x^*)\), for any \(x\in P\) and any \(t < \frac{s(x^*)}{2}\), \(|P \cap B_\metric(x,2t)| \le \frac{n}{2}\). Lemma~\ref{lm:dispersed-P} then implies that \(P\) is \((t,\frac12)\)-dispersed for any \(t < \frac{s(x^*)}{2}\), 
and so, \eqref{eq:psi-sxstar} follows from Lemma~\ref{lm:disp-cdf}.

In the other direction, we claim that 
\begin{equation} \label{eq:sup-bound}
    \sup_{t \ge 0} t \Psi_\metric(P,t)
    \le
    \frac{2s(x^*)}{1-4\alpha}.
\end{equation}
To see this, let \(t^* \in \left[0,\frac{2s(x^*)}{1-4\alpha}\right]\) achieve \(\sup_{t \ge 0} t \Psi_\metric(P,t)\). We have
\[
\sup_{t \ge 0} t\Psi_\metric(P,t)
= t^* \Psi_\metric(P,t^*)  \le \frac{2s(x^*)}{1-4\alpha},
\]
where the last inequality follows because \(\Psi_\metric(P,t^*)\le 1\).


Let us define \(Q = P \cap B(x^*, 2s(x^*))\). By \eqref{eq:concentration} and \eqref{eq:sup-bound},
\begin{equation}\label{eq:Q-bound}
\frac{|P\setminus Q|}{n} < \frac{\alpha \sup_{t \ge 0} t\Psi_\metric(P,t)}{s(x^*)}
\le \frac{2\alpha}{1-4\alpha}\le \frac18,
\end{equation}
with the final inequality holding for any small enough \(\alpha\).
We then have
\[
\Psi_\metric\left(Q,\frac{s(x^*)}{2}\right) 
\ge \frac{n^2}{|Q|^2} 
\left(\Psi_\metric\left(P,\frac{s(x^*)}{2}\right) - \frac{2|P\setminus Q|}{n}\right)
> \frac14,
\]
where the final inequality follows by \eqref{eq:psi-sxstar} and \eqref{eq:Q-bound}. 
Therefore, by \eqref{eq:sup-bound} we have that
\begin{equation}\label{eq:psi-Q-lb}
   \frac{s(x^*)}{2} \Psi_\metric\left(Q,\frac{s(x^*)}{2}\right) 
\ge 
\frac{s(x^*)}{8} \ge 
\frac{(1-4\alpha)}{16}\sup_{t \ge 0} \Psi_\metric(P,t)
\ge \frac{1}{32}\sup_{t \ge 0} \Psi_\metric(P,t),
\end{equation}
with the last inequality again holding for small enough \(\alpha\). 
By Lemma~\ref{lm:wk-strong-avg}, we then have 
\[
\frac{1}{|Q|^2} \sum_{x\in Q}\sum_{y\in Q} \dist_\metric(x,y) >  
\frac{s(x^*)}{2} \Psi_\metric\left(Q,\frac{s(x^*)}{2}\right) \ge
\frac{\sup_{t \ge 0} \Psi_\metric(P,t)}{32}.
\]
At the same time,
the diameter of \(Q\) is \(4s(x^*)\) by construction, which is at most  \(16\sup_{t \ge 0} \Psi_\metric(P,t)\) by \eqref{eq:psi-sxstar}. Let us take \(f:\metric \to X\) to be an embedding with average distortion at most \(D\) with respect to \(Q\). The diameter of \(f(Q)\) in \(X\) is then at most 
\(
16\lip{f}\sup_{t \ge 0} \Psi_\metric(P,t).
\)
We can now use Lemma~\ref{lm:avg2wkavg-hard} with \(C = 512\lip{f} \) and get that \(f\) has weak average distortion at most \(D' \lesssim D(1+\log D)\) with respect to \(Q\). Note that by \eqref{eq:Q-bound},
\[
\sup_{t \ge 0} t\Psi_X(f(Q),t)
\le \frac{n^2}{|Q|^2} \sup_{t \ge 0} t\Psi_X(f(P),t)
< \left(\frac{8}{7}\right)^2 \sup_{t \ge 0} t\Psi_X(f(P),t).
\]
We then have 
\[
\sup_{t\ge 0} t\Psi_X(f(P),t) \gtrsim
\sup_{t \ge 0} t\Psi_X(f(Q),t)
\gtrsim \frac{\lip{f}}{D(1+\log D)} \sup_{t \ge 0} t\Psi_\metric(Q,t)
\gtrsim 
\frac{\lip{f}}{D(1+\log D)} \sup_{t \ge 0} t\Psi_\metric(P,t),
\]
with the final inequality implied by \eqref{eq:psi-Q-lb}. This shows that \(f\) has weak average distortion \(D' \lesssim D(1+\log D)\), as required. 

For the statement after ``moreover'' we just observe that the embedding is either given by \(\dist_\metric(x^*,x)\), which we assume can be evaluated in time \(\poly(\dim(\metric))\), or is given by an average distortion \(D\) embedding with respect to \(Q\), which we also assume can be computed and evaluated in the required time.
\end{proof}
\section{Efficient Average Distortion Embeddings }

In this section we present our constructions of explicit average distortion embeddings. We first give a general result on average distortion embeddings derived from bi-H\"older homeomorphisms between spheres of Banach spaces. We then apply this general result to \(\ell_p\) and Schatten-\(p\) spaces in the subsequent subsections.

\subsection{Average Distortion Embeddings from Bi-H\"older Homeomorphisms}

We first give a general construction of embeddings with bounded average distortion using homeomorphisms between spheres of normed spaces. This construction uses techniques that were used previously to prove inequalities between non-linear Rayleigh quotients in~\cite{daher}, and go back to Matou\v{s}ek's extrapolation theorem~\cite{M97}. Here we show that these techniques can be used to directly prove the existence of average distortion embeddings via an explicit construction.

We first show that homeomorphisms between spheres can be radially extended to the entire normed spaces while retaining their continuity properties. The lemma below was also shown in~\cite{daher} but with worse constants. 
\begin{lemma}\label{lm:biholder-extension}
Let \((X, \|\cdot\|_X)\) and \((Y,\|\cdot\|_Y)\) be Banach spaces with unit spheres, respectively, $S_X$ and $S_Y$. Let \(\alpha, \beta \in (0,1]\). Let \(f:S_X \to S_Y\) be a function that, for any \(x,y \in S_X\) satisfies
\[
\frac1L \|x-y\|_X^{1/\beta} \le \|f(x) - f(y)\|_Y \le K \|x-y\|_X^\alpha.
\]
Then the function \(\tilde{f}:X \to Y\) defined by
\(
\tilde{f}(x) = \|x\|_X^\alpha f\left(\frac{x}{\|x\|_X}\right)
\) for $x\neq 0$,
and $\tilde{f}(0)=0$
satisfies the following for any \(x,y\in X\):
\begin{align}
    \|\tilde{f}(x) - \tilde{f}(y)\|_Y &\le
(1 + 2^\alpha K)\|x-y\|_X^\alpha\label{eq:extension-ub}\\   
\|\tilde{f}^{-1}(x) - \tilde{f}^{-1}(y)\|_X
&\le \left(\frac{1}{\alpha\beta} + 2^\beta L^\beta\right) \|x-y\|_Y^\beta \max\{\|x\|_Y, \|y\|_Y\}^{\frac1\alpha-\beta }\label{eq:extension-lb}
\end{align}
Moreover, \(\|\tilde{f}(x)\|_Y = \|x\|_X^\alpha\) for all \(x\in X\).
\end{lemma}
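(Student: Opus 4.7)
The plan is to prove the three claims in the order (moreover), \eqref{eq:extension-ub}, \eqref{eq:extension-lb}. The ``moreover'' part is immediate: by the definition of $\tilde f$ and the fact that $f$ maps $S_X$ into $S_Y$, for $x\ne 0$ we have $\|\tilde f(x)\|_Y = \|x\|_X^\alpha \|f(x/\|x\|_X)\|_Y = \|x\|_X^\alpha$, and it trivially holds at $x=0$. This observation will also be used to identify $\tilde f^{-1}$ later.

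For \eqref{eq:extension-ub}, I would assume without loss of generality that $\|x\|_X \ge \|y\|_X > 0$ (the boundary case $y=0$ is straightforward), and write
\begin{equation*}
\tilde f(x) - \tilde f(y) = \|x\|_X^\alpha\Bigl(f\bigl(\tfrac{x}{\|x\|_X}\bigr) - f\bigl(\tfrac{y}{\|y\|_X}\bigr)\Bigr) + (\|x\|_X^\alpha - \|y\|_X^\alpha) f\bigl(\tfrac{y}{\|y\|_X}\bigr).
\end{equation*}
The first summand is controlled using $\|f(\cdot)-f(\cdot)\|_Y \le K\|\cdot-\cdot\|_X^\alpha$ on $S_X$ together with the standard estimate $\|x/\|x\|_X - y/\|y\|_X\|_X \le 2\|x-y\|_X/\|x\|_X$, which gives $2^\alpha K\|x-y\|_X^\alpha$. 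The second summand is controlled by the subadditivity inequality $|a^\alpha-b^\alpha|\le |a-b|^\alpha$ for $\alpha \in (0,1]$ and non-negative reals, combined with $\|\,\|x\|_X - \|y\|_X\,\| \le \|x-y\|_X$, and the fact that $\|f(y/\|y\|_X)\|_Y = 1$. Summing gives the desired $(1 + 2^\alpha K)\|x-y\|_X^\alpha$.

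For \eqref{eq:extension-lb}, set $u = \tilde f^{-1}(x)$, $v = \tilde f^{-1}(y)$ (defined on the image of $\tilde f$, which is where the inequality is meaningful). The key observation is that $\tilde f(u)=x$ forces $\|u\|_X = \|x\|_Y^{1/\alpha}$ and $u/\|u\|_X = f^{-1}(x/\|x\|_Y)$, with the analogous identities for $v$. Assume without loss of generality $\|x\|_Y \ge \|y\|_Y$ and split
\begin{equation*}
u - v = \|u\|_X \Bigl(\tfrac{u}{\|u\|_X} - \tfrac{v}{\|v\|_X}\Bigr) + (\|u\|_X - \|v\|_X)\tfrac{v}{\|v\|_X}.
\end{equation*}
The first summand uses the sphere-level inverse H\"older bound $\|f^{-1}(a)-f^{-1}(b)\|_X \le L^\beta \|a-b\|_Y^\beta$ on $S_Y$ (which follows by inverting the lower bound on $\|f(x)-f(y)\|_Y$), together again with $\|x/\|x\|_Y - y/\|y\|_Y\|_Y \le 2\|x-y\|_Y/\|x\|_Y$, and yields $2^\beta L^\beta \|x-y\|_Y^\beta \|x\|_Y^{1/\alpha - \beta}$ after substituting $\|u\|_X = \|x\|_Y^{1/\alpha}$. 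The second summand requires bounding $\|x\|_Y^{1/\alpha} - \|y\|_Y^{1/\alpha}$ in terms of $\|x-y\|_Y^\beta$ with the correct exponent on $\|x\|_Y$.

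The main obstacle is getting the constant in \eqref{eq:extension-lb} exactly $\tfrac{1}{\alpha\beta}$ on the scalar term. For this I would prove the elementary inequality $1 - s^{1/\alpha} \le \tfrac{1}{\alpha\beta}(1-s)^\beta$ for $s \in [0,1]$ and $\alpha,\beta \in (0,1]$, via the two-step chain $1 - s^{1/\alpha} \le \tfrac{1}{\alpha}(1-s) \le \tfrac{1}{\alpha\beta}(1-s)^\beta$ (the first by concavity/mean-value applied to $t \mapsto t^{1/\alpha}$, the second using $(1-s)^{1-\beta} \le 1 \le 1/\beta$). Applied with $s = \|y\|_Y/\|x\|_Y$ and multiplied by $\|x\|_Y^{1/\alpha}$, this gives $|\|x\|_Y^{1/\alpha} - \|y\|_Y^{1/\alpha}| \le \tfrac{1}{\alpha\beta}(\|x\|_Y - \|y\|_Y)^\beta \|x\|_Y^{1/\alpha-\beta} \le \tfrac{1}{\alpha\beta} \|x-y\|_Y^\beta \|x\|_Y^{1/\alpha-\beta}$, using $|\|x\|_Y - \|y\|_Y| \le \|x-y\|_Y$. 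Combining the two summands gives the stated constant $\tfrac{1}{\alpha\beta} + 2^\beta L^\beta$ and exponent $1/\alpha - \beta$ on $\max\{\|x\|_Y,\|y\|_Y\}$.
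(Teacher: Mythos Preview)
Your proof is correct and follows essentially the same approach as the paper's. The paper packages both \eqref{eq:extension-ub} and \eqref{eq:extension-lb} as special cases of a single statement about degree-$p$ radial extensions (then specializes with $p=\alpha$ for the forward map and $p=1/\alpha$ for the inverse), normalizing via homogeneity to $\|y\|=1$ before splitting; you instead prove the two inequalities separately without normalizing, but the decomposition, the sphere-projection estimate $\|\hat x - \hat y\| \le 2\|x-y\|/\max\{\|x\|,\|y\|\}$, and the scalar inequalities (Bernoulli plus subadditivity of $t\mapsto t^\gamma$) are the same. One terminological nit: $t\mapsto t^{1/\alpha}$ is \emph{convex} for $\alpha\in(0,1]$, so ``concavity'' is the wrong word---your inequality $1-s^{1/\alpha}\le \tfrac{1}{\alpha}(1-s)$ is exactly Bernoulli (or the tangent-line bound for a convex function at $s=1$), which is what the paper invokes.
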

\begin{proof}
The equality after ``moreover'' is obvious from the fact tht \(f\) is a map between the unit spheres.
Both \eqref{eq:extension-ub} and \eqref{eq:extension-lb} follow from the following more general fact. Let \(U,V\) be Banach spaces, and let \(g:S_U \to S_V\) and \(\omega \in (0,1]\) be such that for any \(x,y\in S_U\) we have
\[
\|g(x) - g(y)\|_V \le C \|x-y\|_U^\omega.
\]
Suppose that \(p \ge \omega\). Then the degree \(p\) radial extension \(g:U \to Y\) defined by \(\tilde{g}(x) = \|x\|_U^p g\left(\frac{x}{\|x\|_U}\right)\) and \(\tilde{g}(0)=0\) satisfies 
\begin{equation}\label{eq:radial-ext-gen}
\|\tilde{g}(x) - \tilde{g}(y)\|_V
\le \left(\frac{p}{\omega} + 2^\omega C\right)\cdot\|x-y\|_U^\omega\cdot \max\{\|x\|_U, \|y\|_U\}^{p-\omega }.
\end{equation}
Then \eqref{eq:extension-ub} follows by setting \(U = X\), \(V=Y\), \(g = f\), \(\omega = \alpha\), \(p = \alpha\), and \(C = K\). The other estimate  \eqref{eq:extension-lb} follows by setting \(U=Y\), \(V=X\),\(g = f^{-1}\), \(\omega = \beta\), \(p = \frac1\alpha\), and \(C = L^\beta\). To see that \eqref{eq:radial-ext-gen} applies to this second case, notice that \(\tilde{f}^{-1}(y) = \|y\|_Y^{1/\alpha} f^{-1}\left(\frac{y}{\|y\|_Y}\right)\) for \(y\neq 0\).

Clearly \eqref{eq:radial-ext-gen} holds if either \(x\) or \(y\) is \(0\), so we assume that both are nonzero. Note first that both the left and the right hand side are homogeneous in \(x\) and \(y\) of degree \(p\), so we can, without loss of generality, assume that \(0< \|x\|_U \le \|y\|_U = 1\). Under this assumption, we just need to show that 
\begin{equation*}
\|\tilde{g}(x) - \tilde{g}(y)\|_V 
\le \left(\frac{p}{\omega} + 2^\omega C\right)\|x-y\|_U^\omega.
\end{equation*}
Let us use the notation \(\hat{x} = \frac{x}{\|x\|_U}\), so that \(\tilde{g}(x) = \|x\|_U^p g(\hat{x})\). Moreover, \(\tilde{g}(y) = g(y)\). We then have
\begin{align}
    \|\tilde{g}(x) - \tilde{g}(y)\|_V 
    &\le (1 - \|x\|_U^p)\|g(\hat{x})\|_U + \|g(\hat{x}) - g(y)\|_V\notag\\
    &\le 1 - \|x\|_U^p + C\|\hat{x} - y\|_U^\omega.\label{eq:ext-1}
\end{align}
Using Bernoulli's inequality \(1-t^r \le r(1-t)\), which holds for any \(r \ge 1\) and any \(t \le 1\), and the inequality \(1 - t^\gamma \le (1-t)^\gamma\), which holds for any \(t, \gamma \in [0,1]\), we get 
\begin{equation}\label{eq:ext-2}
1 - \|x\|_U^p \le \frac{p}{\omega}(1-\|x\|_U^{\omega})
\le
\frac{p}{\omega}(1-\|x\|_U)^{\omega}.
\end{equation}
By the triangle inequality, \(\|x-y\|_U \ge \|y\|_U - \|x\|_U = 1-\|x\|_U\), and, moreover,
\begin{equation}\label{eq:ext-3}
\|\hat{x} -y\|_U \le \|x-y\|_U + \|\hat{x} - x\|_U 
= 
\|x-y\|_U + 1-\|x\|_U 
\le 2\|x-y\|_U.
\end{equation}
Combining inequalities \eqref{eq:ext-1}--\eqref{eq:ext-3}, we get
\[
\|\tilde{g}(x) - \tilde{g}(y)\|_V 
\le \frac{p}{\omega}(1-\|x\|_U)^{\omega} + 2^\omega C\|x - y\|_U^\omega
\le \left(\frac{p}{\omega} + 2^\omega C\right) \|x-y\|_U,
\]
as we needed to prove.
\end{proof}

Recall that a median of a set of \(n\) points \(P\) in a metric space \(\metric\) is any point \(y \in \metric\) that minimizes \(\frac1n \sum_{x\in P}\dist_\metric(x,y)\). We generalize this definition by allowing approximation, and also by allowing the distance function to be raised to a power \(p\).

\begin{definition}
We say that a point \(y\) in a metric space \(\metric\) is a \((C,\varepsilon)\)-approximate median of a finite point set \(P \subseteq \metric\) if 
\[
\frac1n\sum_{x\in P}\dist_\metric(x,y) 
\le C\min_{z\in\metric}\frac{1}{n} \sum_{x\in P}\dist_\metric(x,z)
+ \varepsilon.
\]
More generally, we say that \(y\) is a \((C,\varepsilon)\)-approximate \(q\)-mean if 
\[
\frac1n\sum_{x\in P}\dist_\metric(x,y)^q 
\le C^q\min_{z\in\metric}\frac{1}{n} \sum_{x\in P}\dist_\metric(x,z)^q
+ \varepsilon^q.
\]
A \((1,0)\)-approximate \(q\)-mean of \(P\) is just called a \(q\)-mean of \(P\).
\end{definition}

We will need an easy technical lemma which gives bounds on the \(q\)-mean objective and shows that means (in the standard sense of averaging points in a vector space) are good approximate \(q\)-means.

\begin{lemma}\label{lm:mean-to-median}
Suppose that \(q\ge 1\) and that \(P\) is an \(n\)-point set in a metric space \((\metric, \dist_\metric)\). We have the inequalities
\begin{equation*}
\frac{1}{2^q n^2} \sum_{x\in P}\sum_{y\in P}\dist_\metric(x,y)^q
\le \min_{z\in \metric}\frac1n\sum_{x\in P}\dist_\metric(x,z)^q
\le \frac{1}{n^2} \sum_{x\in P}\sum_{y\in P}\dist_\metric(x,y)^q.
\end{equation*}

Moreover, suppose that \(\metric\) is defined by a Banach space \((X,\|\cdot\|)\). If \(z\in X\) satisfies
\[
\left\|z - \frac1n \sum_{x\in P}x\right\|\le  \varepsilon
\]
then \(z\) is a \((2^{2 - \frac1q},2^{1-\frac1q}\varepsilon)\)-approximate \(q\)-mean of \(P\).
\end{lemma}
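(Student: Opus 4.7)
The plan is to prove both the double inequality and the ``moreover'' clause by straightforward applications of the triangle inequality combined with the elementary convexity bound $(a+b)^q \le 2^{q-1}(a^q+b^q)$ (for $q \ge 1$, $a,b\ge 0$), together with Jensen's inequality applied to the convex function $t\mapsto t^q$.

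For the upper bound $\min_{z\in\metric}\frac{1}{n}\sum_{x\in P}\dist_\metric(x,z)^q \le \frac{1}{n^2}\sum_{x,y\in P}\dist_\metric(x,y)^q$, I would simply observe that restricting the minimum to $z\in P$ can only increase its value, and that the average of $\frac{1}{n}\sum_{x\in P}\dist_\metric(x,z)^q$ over $z\in P$ is exactly the right-hand side; hence the minimum over $z\in P$ is at most this average. For the lower bound, I would fix any $z\in \metric$ and use the triangle inequality together with $(a+b)^q \le 2^{q-1}(a^q+b^q)$ to write
\[
\dist_\metric(x,y)^q \le \bigl(\dist_\metric(x,z)+\dist_\metric(z,y)\bigr)^q \le 2^{q-1}\bigl(\dist_\metric(x,z)^q + \dist_\metric(z,y)^q\bigr).
\]
Summing over $x,y\in P$ gives $\sum_{x,y}\dist_\metric(x,y)^q \le 2^{q}n\sum_{x}\dist_\metric(x,z)^q$; dividing through by $2^q n^2$ and taking the infimum over $z$ yields the claim.

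For the ``moreover'' part, let $\bar x \coloneqq \frac{1}{n}\sum_{x\in P} x$. The plan is to split the error into the distance to $\bar x$ plus the distance from $\bar x$ to $z$. Using the triangle inequality in $(X,\|\cdot\|)$ and the same convexity bound,
\[
\|x-z\|^q \le 2^{q-1}\|x-\bar x\|^q + 2^{q-1}\|\bar x - z\|^q \le 2^{q-1}\|x-\bar x\|^q + 2^{q-1}\varepsilon^q.
\]
Averaging over $x\in P$ reduces the problem to bounding $\frac{1}{n}\sum_{x}\|x-\bar x\|^q$. By Jensen's inequality applied inside the norm and then to $t\mapsto t^q$,
\[
\|x-\bar x\|^q = \Bigl\|\tfrac{1}{n}\sum_{y\in P}(x-y)\Bigr\|^q \le \Bigl(\tfrac{1}{n}\sum_{y\in P}\|x-y\|\Bigr)^q \le \tfrac{1}{n}\sum_{y\in P}\|x-y\|^q,
\]
so $\frac{1}{n}\sum_{x}\|x-\bar x\|^q \le \frac{1}{n^2}\sum_{x,y}\|x-y\|^q$, which by the first part of the lemma is at most $2^q \min_{z'\in X}\frac{1}{n}\sum_{x}\|x-z'\|^q$. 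Combining everything yields
\[
\tfrac{1}{n}\sum_{x\in P}\|x-z\|^q \le 2^{2q-1} \min_{z'\in X}\tfrac{1}{n}\sum_{x\in P}\|x-z'\|^q + 2^{q-1}\varepsilon^q,
\]
which is precisely the statement that $z$ is a $\bigl(2^{2-1/q},2^{1-1/q}\varepsilon\bigr)$-approximate $q$-mean.

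There is no real obstacle here — this is a bookkeeping exercise in Jensen and the triangle inequality, and the only subtlety is to be careful that the exponents match so that the final approximation constant $C^q$ works out to $2^{2q-1}$ and the additive error $\varepsilon^q$ gets the prefactor $2^{q-1}$, giving the claimed $C = 2^{2-1/q}$ and additive term $2^{1-1/q}\varepsilon$.
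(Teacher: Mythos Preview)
Your proof is correct and follows essentially the same approach as the paper's: the double inequality is obtained via the triangle inequality plus the convexity bound $(a+b)^q\le 2^{q-1}(a^q+b^q)$ for the lower bound and by restricting the minimum to $z\in P$ and averaging for the upper bound; the ``moreover'' clause is obtained by splitting through the barycenter $\bar x$, applying Jensen to bound $\tfrac1n\sum_x\|x-\bar x\|^q$ by $\tfrac{1}{n^2}\sum_{x,y}\|x-y\|^q$, and then invoking the lower bound just proved. The only cosmetic difference is that the paper phrases the convexity step as ``H\"older's inequality'' rather than $(a+b)^q\le 2^{q-1}(a^q+b^q)$.
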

\begin{proof}
Let \(u\) be a \(q\)-mean of \(P\). By the triangle inequality, and H\"older's inequality, we have the lower bound
\begin{equation}\label{eq:med-lb}
\frac{1}{2^q n^2} \sum_{x\in P}\sum_{y\in P}\dist_\metric(x,y)^q
\le \frac{1}{2^q n^2} \sum_{x\in P}\sum_{y\in P}(\dist_\metric(x,u) + \dist_\metric(y,u))^q
\le \frac1n\sum_{x\in P}\dist_\metric(x,u)^q.
\end{equation}
To prove the second inequality, observe that 
\[
\min_{y\in \metric}\frac1n\sum_{x\in P}\dist_\metric(x,y)^q
\le 
\min_{y\in P}\frac1n\sum_{x\in P}\dist_\metric(x,y)^q
\le \frac{1}{n^2} \sum_{y\in P}\sum_{x\in P}\dist_\metric(x,y)^q,
\]
and changing the order of summation (or simply renaming the variables) finishes the proof.

To prove the claim after ``moreover'', observe that, by the assumption on \(z\), the triangle inequality, H\"older's inequality, and Jensen's inequality applied to the convex function $\|\cdot\|^q$, we have
\begin{align*}
\frac1n\sum_{x\in P}\|x-z\|^q
&\le 
\frac{2^{q-1}}{n}\sum_{x\in P}\left\|x - \frac1n \sum_{y\in P}y\right\|^q + 2^{q-1} \varepsilon^q
\le \frac{2^{q-1}}{n^2} \sum_{x\in P}\sum_{y\in P}\left\|x - y\right\|^q+ 2^{q-1} \varepsilon^p.
\end{align*}
Combining this inequality with \eqref{eq:med-lb} finishes the proof. 
\end{proof}

The next lemma is our main tool for constructing explicit average distortion embeddings. Recall that, for \(\alpha \in (0,1]\) the \(\alpha\)-snowflake of a metric space \((\metric, \dist_\metric)\) is the metric space \(\metric^\alpha\) on the same ground set, with distance function \(\dist_\metric(x,y)^\alpha\).

\begin{lemma}\label{lm:embedding-gen}
Let \((X, \|\cdot\|_X)\) and \((Y,\|\cdot\|_Y)\) be Banach spaces, and let \(\alpha \in (0,1]\), and \(p\ge 1\). Let \(f:S_X \to S_Y\) be a function that, for any \(x,y \in S_X\) satisfies
\[
\|f(x) - f(y)\|_Y \le K \|x-y\|_X^\alpha,
\]
and let \(\tilde{f}:X\to Y\) be defined as in Lemma~\ref{lm:biholder-extension}.
Let \(P\subseteq X\) be an \(n\) point set, let \(t \in X\), and define \(g:X^\alpha \to Y\) by \(g(x) = \tilde{f}(x-t)\). Suppose that one of the following conditions is satisfied for some \(C \ge 1\) and \(\varepsilon \le \frac12\left(\frac{1}{2n^2} \sum_{x\in P}\sum_{y\in P}\|x-y\|_X^{q\alpha}\right)^\frac1q\)
\begin{enumerate}
    \item \(0\) is a \((C,\varepsilon)\)-approximate \(q\)-mean for \(g(P)\);
    \item \(\left\|\frac1n \sum_{x\in P}g(x)\right\|_Y\le 2^{\frac1q - 1}\varepsilon\).
\end{enumerate}
Then \(g\) is an embedding of the \(\alpha\)-snowflake \(X^\alpha\) into \(Y\) with average distortion at most \(D\) with respect to \(P\), where \(D \lesssim C(1+K)\) if the first condition is satisfied, and \(D\lesssim 1+K\) if the second condition is satisfied.
\end{lemma}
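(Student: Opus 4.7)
The plan is to control the two ingredients of the average-distortion inequality separately: first bound $\lip{g}$ from above using the radial extension lemma, then lower bound the pairwise $q$th moment $\sum_{x,y\in P}\|g(x)-g(y)\|_Y^q$ by relating it to the sum $\sum_x\|g(x)\|_Y^q$, which has a clean closed form.

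\smallskip
\textbf{Step 1 (Lipschitz upper bound).} By \eqref{eq:extension-ub} in Lemma~\ref{lm:biholder-extension}, $\|\tilde{f}(u)-\tilde{f}(v)\|_Y\le (1+2^\alpha K)\|u-v\|_X^\alpha$ for all $u,v\in X$. Since $g(x)=\tilde{f}(x-t)$, this gives $\lip{g}\le 1+2^\alpha K\lesssim 1+K$, where we view $g$ as a map from the $\alpha$-snowflake $X^\alpha$ into $Y$. The ``moreover'' part of Lemma~\ref{lm:biholder-extension} also gives the key identity $\|g(x)\|_Y = \|x-t\|_X^\alpha$.

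\smallskip
\textbf{Step 2 (Reducing condition 2 to condition 1).} If $\|\tfrac{1}{n}\sum_{x\in P} g(x)\|_Y\le 2^{1/q-1}\varepsilon$, then the ``moreover'' part of Lemma~\ref{lm:mean-to-median} (applied with $z=0$ and error parameter $2^{1/q-1}\varepsilon$) shows that $0$ is a $(2^{2-1/q},\,\varepsilon)$-approximate $q$-mean of $g(P)$. So condition 2 implies condition 1 with $C=2^{2-1/q}$, and it suffices to prove the statement under condition 1.

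\smallskip
\textbf{Step 3 (Lower bound under condition 1).} The definition of approximate $q$-mean combined with the upper bound in Lemma~\ref{lm:mean-to-median} gives
\[
\frac{1}{n}\sum_{x\in P}\|g(x)\|_Y^q \;\le\; C^q\,\min_{z\in Y}\frac{1}{n}\sum_{x\in P}\|g(x)-z\|_Y^q + \varepsilon^q
\;\le\; \frac{C^q}{n^2}\sum_{x,y\in P}\|g(x)-g(y)\|_Y^q + \varepsilon^q.
\]
Rearranging and invoking $\|g(x)\|_Y^q=\|x-t\|_X^{q\alpha}$ from Step 1,
\[
\sum_{x,y\in P}\|g(x)-g(y)\|_Y^q \;\ge\; \frac{1}{C^q}\Bigl( n\sum_{x\in P}\|x-t\|_X^{q\alpha} - n^2\varepsilon^q\Bigr).
\]

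\smallskip
\textbf{Step 4 (Snowflake triangle inequality).} Because $\alpha\le 1$, the map $s\mapsto s^\alpha$ is subadditive on $[0,\infty)$, so $\|x-y\|_X^\alpha \le \|x-t\|_X^\alpha+\|y-t\|_X^\alpha$. Raising to the $q$th power and using $(a+b)^q\le 2^{q-1}(a^q+b^q)$, then summing over $x,y\in P$, yields
\[
\sum_{x,y\in P}\|x-y\|_X^{q\alpha} \;\le\; 2^q\, n\sum_{x\in P}\|x-t\|_X^{q\alpha}.
\]
Plugging this into Step 3, and using the hypothesis $\varepsilon^q \le \tfrac{1}{2^{q+1} n^2}\sum_{x,y}\|x-y\|_X^{q\alpha}$ to absorb the subtracted term $n^2\varepsilon^q$ into a constant fraction of $\sum_{x,y}\|x-y\|_X^{q\alpha}/2^q$, we obtain
\[
\sum_{x,y\in P}\|g(x)-g(y)\|_Y^q \;\gtrsim\; \frac{1}{C^q\,2^q}\sum_{x,y\in P}\|x-y\|_X^{q\alpha}.
\]
Combined with the Lipschitz bound $\lip{g}\le 1+2^\alpha K$ from Step 1, this gives average distortion $D\lesssim C(1+K)$ under condition~1, and $D\lesssim 1+K$ under condition~2 (using $C=2^{2-1/q}=O(1)$).

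\smallskip
No step is really hard; the only care needed is in the constant bookkeeping of Step~3/4, in particular verifying that the hypothesis on $\varepsilon$ is exactly strong enough to absorb the $n^2\varepsilon^q$ error term into a fixed fraction of the target sum, and in remembering to apply the ``mean $\Rightarrow$ approximate $q$-mean'' part of Lemma~\ref{lm:mean-to-median} to fold condition 2 into condition 1 cleanly.
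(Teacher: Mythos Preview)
Your proof is correct and follows essentially the same route as the paper's. The only cosmetic difference is that where the paper invokes the lower bound of Lemma~\ref{lm:mean-to-median} (applied to the snowflake $X^\alpha$) to obtain $\frac{1}{n^2}\sum_{x,y}\|x-y\|_X^{q\alpha}\le \frac{2^q}{n}\sum_x\|x-t\|_X^{q\alpha}$, you inline that same triangle-inequality argument explicitly as your Step~4; the constants and the way $\varepsilon^q$ is absorbed match exactly.
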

The proof is similar in spirit to bounds on non-linear Rayleigh quotients proved in~\cite{daher}.
\begin{proof}
By Lemma~\ref{lm:biholder-extension}, we have 
\begin{equation}\label{eq:embed-lip}
    \lip{g} = \lip{\tilde{f}} \le 1 + 2^{\alpha}K.
\end{equation}
It remains to calculate a lower bound on the average distance between points in $g(P)$. By the first inequality in Lemma~\ref{lm:mean-to-median}, and the fact that, by Lemma~\ref{lm:biholder-extension}, \(\|g(x)\|_Y = \|\tilde{f}(x-t)\|_Y = \|x-t\|_X^\alpha\) for any \(x\in X\), we have
\begin{align}
    \frac{1}{n^2}\sum_{x\in P}\sum_{y\in P}\|x-y\|_X^{q\alpha} 
    &\le \frac{2^q}{n}\sum_{x\in P}\|x-t\|_X^{q\alpha}
    =  \frac{2^q}{n}\sum_{x\in P}\|g(x)\|_Y^q \label{eq:sphere}
\end{align}
 We now consider the first case, i.e., that \(0\) is a \((C,\varepsilon)\)-approximate $q$-mean. The second case reduces to the first case since, by Lemma~\ref{lm:mean-to-median}, if \(\left\|\frac1n \sum_{x\in P}g(x)\right\|_Y\le \varepsilon\), then \(0\) is a \((2^{2-\frac1q},\varepsilon)\)-approximate median of \(g(P)\).

It follows from Lemma~\ref{lm:mean-to-median} that 
\begin{align}
   \frac{1}{n}\sum_{x\in P}\left\|g(x)\right\|_Y^q &\leq C^q\min_{z\in Y}\frac{1}{n} \sum_{x\in P}\|g(x) - z\|_Y + \varepsilon^q\\
   &\leq \frac{C^q}{n^2}\sum_{x\in P}\sum_{y\in P}\|g(x)-g(y)\|^q_Y + \frac{1}{2^{q+1}n^2}\sum_{x\in P}\sum_{y\in P}\|x-y\|_X^{q\alpha}. \label{eq:avg-dist-med}
\end{align}
Combining (\ref{eq:sphere}) and (\ref{eq:avg-dist-med}), we obtain
\[
\sum_{x\in P}\sum_{y\in P}\|x-y\|_X^{q\alpha} \leq 2^{q+1}C^q \sum_{x\in P}\sum_{y\in P}\|g(x)-g(y)\|^q_Y,
\]
from which, together with (\ref{eq:embed-lip}), we conclude that \(g\) is an embedding of the \(\alpha\)-snowflake \(X^\alpha\) into \(Y\) with \(q\)-average distortion at most \(D\) with respect to \(P\), where \(D \leq 2^{1+\frac1q} C(1+2^\alpha K) \lesssim C(1+K)\).
\end{proof}

In order to use Lemma~\ref{lm:embedding-gen}, we need to find some \(t\in X\) that satisfies one of the two assumptions in the lemma. A general method for establishing the existence of such a \(t\) was proposed in \cite{spectral,daher,naor2019average}, and relies on the following lemma. For a proof, see Lemma~45 from \cite{naor2019average}.

\begin{lemma}\label{lm:surjective}
For any finite dimensional Banach space \((X,\|\cdot\|)\), and any continuous function $h:X \to X$ such that
\[
\lim_{M\to \infty} \inf_{t: \|t\| \ge M}(\|t\| - \| h(t)-t\|) = \infty,
\]
we have that $h$ is surjective.
\end{lemma}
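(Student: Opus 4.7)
The plan is to reduce surjectivity of $h$ to Brouwer's fixed point theorem via a truncation trick. Fix any target $y \in X$; the goal becomes producing $t^\ast \in X$ with $h(t^\ast) = y$. I would introduce the auxiliary map $f : X \to X$ given by $f(t) := t + y - h(t)$, so that $f(t) = t$ if and only if $h(t) = y$; it therefore suffices to produce a fixed point of $f$.

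The growth hypothesis is exactly what is needed to control $f$ at infinity. I would pick $R > 0$ large enough that $\|t\| - \|h(t) - t\| > \|y\|$ for every $t$ with $\|t\| \ge R$, which is possible since the infimum on the left-hand side tends to $+\infty$. Then for every $t$ with $\|t\| = R$,
\[
\|f(t)\| \le \|t - h(t)\| + \|y\| < \|t\| = R,
\]
so $f$ sends the sphere $\partial B(0, R)$ strictly into the open ball $B(0, R)$.

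The only real obstacle is that $f$ need not map the closed ball $\bar B(0, R)$ into itself, since at interior points $\|f(t)\|$ could be arbitrarily large — so Brouwer does not apply to $f$ directly. My fix is to consider the truncation $g : \bar B(0, R) \to \bar B(0, R)$ defined by $g(t) := \min\{1, R/\|f(t)\|\} f(t)$ (with $g(t) := 0$ when $f(t) = 0$). By construction, $g$ is continuous and maps $\bar B(0, R)$ into itself; since $X$ is finite-dimensional, $\bar B(0, R)$ is homeomorphic to a closed Euclidean ball, so Brouwer's fixed point theorem furnishes some $t^\ast \in \bar B(0, R)$ with $g(t^\ast) = t^\ast$.

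To finish I would rule out the case that truncation was active at $t^\ast$. If $\|f(t^\ast)\| > R$, then $g(t^\ast) = R f(t^\ast)/\|f(t^\ast)\|$ has norm exactly $R$, forcing $\|t^\ast\| = R$; but the boundary estimate above gives $\|f(t^\ast)\| < R$, a contradiction. Hence truncation is inactive at $t^\ast$, so $f(t^\ast) = t^\ast$ and therefore $h(t^\ast) = y$. Since $y$ was arbitrary, $h$ is surjective. No degree theory is needed beyond Brouwer; the only mildly subtle point is the truncation that converts $f$ into a self-map of a compact convex set.
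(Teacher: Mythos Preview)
Your argument is correct. The paper itself does not prove this lemma; it defers to Lemma~45 of \cite{naor2019average}, and elsewhere remarks that the proof there ``relies on a homological degree argument of the type used to prove Brouwer's fixed point theorem.'' Your route is genuinely different: rather than invoking topological degree, you reduce directly to Brouwer by passing to the auxiliary map $f(t) = t + y - h(t)$ and composing with the radial retraction onto $\bar B(0,R)$ to force a self-map of a compact convex set. The growth hypothesis is used exactly once, to guarantee $\|f(t)\| < R$ on the boundary sphere, which is precisely what you need to rule out that the retraction was active at the fixed point. This is a cleaner, more elementary argument than the degree-theoretic one, and it is fully self-contained; the degree approach has the mild advantage of fitting into a broader framework (and can sometimes yield additional information such as multiplicity), but for the bare surjectivity statement your proof is preferable.
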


Using Lemma~\ref{lm:surjective}, we can show that there exists a \(t\) so that \(0\) is the mean of \(g(P)\). The argument is essentially identical to arguments in~\cite{spectral,daher}, but, since the result was not stated in the general form given below, we include a proof.
\begin{lemma}\label{lm:center}
Under the assumptions and notation of Lemma~\ref{lm:biholder-extension}, for any \(n\)-point set \(P \subseteq X\) there exists some \(t\in X\) such that \(\frac1n \sum_{x\in P}\tilde{f}(x-t) =0\). Moreover, any such \(t\) must satisfy 
\(
\|t\|_X \le 
\left(\frac{M}{n} \sum_{x\in P}\|x\|_X^\alpha\right)^{\frac1\alpha},
\)
 for a constant \(M\) that only depends on \(K,L,\alpha,\beta\).
\end{lemma}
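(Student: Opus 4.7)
The plan is to apply Lemma~\ref{lm:surjective} to the auxiliary map $h:X\to X$ defined by
\[
h(t) \;=\; -\tilde{f}^{-1}\!\left(\tfrac{1}{n}\sum_{x\in P}\tilde{f}(x-t)\right).
\]
The ``moreover'' clause of Lemma~\ref{lm:biholder-extension} gives $\|\tilde{f}(u)\|_Y=\|u\|_X^\alpha$, which together with $f:S_X\to S_Y$ being a bijection implies that $\tilde{f}:X\to Y$ is a continuous bijection whose inverse is the radial extension of $f^{-1}$; hence $h$ is well-defined and continuous. Crucially, $h(t)=0$ is equivalent to $\frac{1}{n}\sum_{x\in P}\tilde{f}(x-t)=0$, which is exactly what we must produce. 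So it suffices to show that $0$ lies in the image of $h$, for which surjectivity (given by Lemma~\ref{lm:surjective}) is more than enough.

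The main step is to estimate $\|h(t)-t\|_X$ for large $\|t\|_X$. The key identity is $-\tilde{f}^{-1}(\tilde{f}(-t))=t$, which lets us rewrite
\[
h(t)-t \;=\; -\bigl(\tilde{f}^{-1}(G(t)) - \tilde{f}^{-1}(\tilde{f}(-t))\bigr),
\qquad
G(t) := \tfrac{1}{n}\sum_{x\in P}\tilde{f}(x-t).
\]
Using inequality~\eqref{eq:extension-ub} termwise and averaging gives $\|G(t)-\tilde{f}(-t)\|_Y \le (1+2^\alpha K)\,A$, where $A:=\frac{1}{n}\sum_{x\in P}\|x\|_X^\alpha$. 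Combined with $\|\tilde{f}(-t)\|_Y=\|t\|_X^\alpha$ and inequality~\eqref{eq:extension-lb} applied to $\tilde{f}^{-1}$ at inputs $G(t)$ and $\tilde{f}(-t)$ (whose norms are at most $\|t\|_X^\alpha+(1+2^\alpha K)A$), this yields an estimate of the shape
\[
\|h(t)-t\|_X \;\lesssim_{\alpha,\beta,K,L}\; A^{\beta}\,\|t\|_X^{\,1-\alpha\beta}
\]
once $\|t\|_X^\alpha \ge (1+2^\alpha K)A$. Since $\alpha,\beta\in(0,1]$ make $\alpha\beta>0$, we get $\|h(t)-t\|_X=o(\|t\|_X)$, and therefore $\|t\|_X-\|h(t)-t\|_X\to\infty$ as $\|t\|_X\to\infty$. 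Lemma~\ref{lm:surjective} then gives surjectivity of $h$, in particular a preimage of $0$, which is the desired $t$.

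For the ``moreover'' claim, fix any $t$ with $\frac{1}{n}\sum_x \tilde{f}(x-t)=0$. Applying the triangle inequality in $Y$ and the bound on $\|G(t)-\tilde{f}(-t)\|_Y$ derived above, we get
\[
\|t\|_X^\alpha \;=\; \|\tilde{f}(-t)\|_Y \;=\; \bigl\|\tilde{f}(-t) - G(t)\bigr\|_Y \;\le\; (1+2^\alpha K)\, \tfrac{1}{n}\sum_{x\in P}\|x\|_X^\alpha,
\]
so taking $\alpha$-th roots gives exactly the claimed bound with $M=1+2^\alpha K$ (which depends only on $K,\alpha$, so certainly only on $K,L,\alpha,\beta$).

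The only mildly delicate step is the estimate in the second paragraph: one must keep track of the max-of-norms factor in \eqref{eq:extension-lb} and verify that $\|G(t)\|_Y$ is controlled by $\|t\|_X^\alpha$ up to an $O(A)$ additive error. This is the only calculation that actually uses the bi-H\"older hypothesis; everything else is a structural rearrangement, with the topological content entirely outsourced to Lemma~\ref{lm:surjective}.
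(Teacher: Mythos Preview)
Your proof is correct and follows the same overall strategy as the paper: define an auxiliary map $h$ from the averaged $\tilde{f}$, bound $\|h(t)-t\|_X$ via the estimates of Lemma~\ref{lm:biholder-extension}, and invoke Lemma~\ref{lm:surjective}. Two differences are worth noting. First, your choice $h(t)=-\tilde{f}^{-1}(G(t))$ makes the natural comparison point $\tilde{f}(-t)$, so that $\|\tilde{f}(x-t)-\tilde{f}(-t)\|_Y\le(1+2^\alpha K)\|x\|_X^\alpha$ follows directly from \eqref{eq:extension-ub}; the paper instead takes $h(t)=\tilde{f}^{-1}(G(t))$ and compares to $\tilde{f}(t)$, which as written gives $\|x-2t\|_X^\alpha$ rather than $\|x\|_X^\alpha$ and needs your sign fix (or an equivalent one) to go through cleanly. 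Second, your argument for the ``moreover'' clause is more direct than the paper's: rather than extracting the norm bound from the growth estimate on $\|h(t)-t\|_X$, you simply observe that $G(t)=0$ forces $\|t\|_X^\alpha=\|\tilde{f}(-t)-G(t)\|_Y\le(1+2^\alpha K)A$, yielding the explicit constant $M=1+2^\alpha K$ with no dependence on $L$ or $\beta$.
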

\begin{proof}
We will show that the function $h:X \to X$ defined by $h(t) = \tilde{f}^{-1}\left(\frac1n \sum_{x\in P} \tilde{f}(x-t)\right)$ satisfies
\begin{equation}\label{eq:h-bound}
\|h(t)-t\|_X \le CR^\beta(\|t\|_X^\alpha+R)^{\frac1\alpha - \beta},
\end{equation}
for \(C\) that only depends on \(K,L,\alpha,\beta\), and \(R=\frac1n \sum_{x\in P}\|x\|^\alpha_X\). This means that, for large \(\|t\|_X\), \(\|h(t)-t\|_X\) is dominated by \(CR^\beta \|t\|_X^{1-\alpha \beta}\), and, therefore, 
\(
\|t\|_X - \|h(t)-t\|_X \to \infty
\) 
as \(\|t\|_X \to \infty\).  Lemma~\ref{lm:surjective} then implies that for some \(t\in X\) we have \(h(t) = 0\), which is equivalent to \(\frac1n \sum_{x\in P} \tilde{f}(x-t)=0\) since \(\tilde{f}\) is a bijection and \(\tilde{f}(0)=0\).

The claim after ``moreover'' also follows from \eqref{eq:h-bound}. Indeed, suppose that \(\|t\|_X^\alpha > M R\). Then, for any large enough \(M\) that only depends on \(C, \alpha,\beta\), \(\|h(t)-t\|_X < C \frac{1}{M^\beta} (1+\frac{1}{M})^{\frac1\alpha - \beta}\|t\|_X < \|t\|_X\). By the triangle inequality we then have \(\|h(t)\|_X > 0\), so no such \(t\) can satisfy \(\frac1n \sum_{x\in P} \tilde{f}(x-t)=0\).

Next, we prove \eqref{eq:h-bound}. Observe that, by  \eqref{eq:extension-lb},
\begin{align*}
  \|h(t)-t\|_X &= \left\|\tilde{f}^{-1}\left(\frac1n \sum_{x\in P} \tilde{f}(x-t)\right) - \tilde{f}^{-1}(\tilde{f}(t))\right\|_X\\
 &\leq \left(\frac{1}{\alpha\beta} + (2L)^\beta\right)\cdot \left\|\frac1n \sum_{x\in P} \left( \tilde{f}(x-t) - \tilde{f}(t)\right)\right\|_Y^{\beta} \cdot\  \max\left\{\left\|\frac1n\sum_{x\in P} \tilde{f}(x-t)\right\|_Y, \|\tilde{f}(t)\|_Y \right\}^{\frac1\alpha-\beta}.\\
\end{align*}
Next, note that $\|\tilde{f}(t)\|_Y = \|t\|_X^\alpha$ and, further, 
\[
\left\|\frac1n\sum_{x\in P} \tilde{f}(x-t)\right\|_Y \leq \frac{1}{n}\sum_{x\in P} \|\tilde{f}(x-t)\|_Y = \frac{1}{n}\sum_{x\in P} \|x-t\|_X^\alpha\leq \frac{1}{n}\sum_{x\in P} (\|x\|_X^\alpha+\|t\|_X^\alpha) = \|t\|_X^\alpha + R.
\]
Thus,
\[
\max\left\{\left\|\frac1n\sum_{x\in P} \tilde{f}(x-t)\right\|_Y,\|\tilde{f}(t)\|_Y \right\}^{\frac1\alpha-\beta} \leq (\|t\|_X^\alpha + R)^{\frac1\alpha-\beta}.
\]
Continuing our calculation, we obtain, using \eqref{eq:extension-ub},
\begin{align*}
    \|h(t)-t\|_X &\leq \left(\frac{1}{\alpha\beta} + (2L)^\beta\right)\cdot\left(\frac1n \sum_{x\in P} \left\| \tilde{f}(x-t) - \tilde{f}(t)\right\|_Y\right)^{\beta}\cdot\  (\|t\|_X^\alpha + R)^{\frac1\alpha-\beta}\\
 &=  \left(\frac{1}{\alpha\beta} + (2L)^\beta\right)\cdot\left(\frac1n \sum_{x\in P} (1+2^\alpha K) \|x\|_X^\alpha\right)^\beta \cdot\  (\|t\|_X^\alpha + R)^{\frac1\alpha-\beta}\\
  &= \left(\frac{1}{\alpha\beta} + (2L)^\beta\right)\cdot((1+2^\alpha K)R)^\beta \cdot\  (\|t\|_X^\alpha + R)^{\frac1\alpha-\beta}.
\end{align*}
Hence, inequality \eqref{eq:h-bound} follows for $C = \left(\frac{1}{\alpha\beta} + (2L)^\beta\right)(1+2^\alpha K)^\beta$.
\end{proof}

While Lemma~\ref{lm:center} is very general, it does not readily give rise to an efficient algorithm to find \(t\). The proof of Lemma~\ref{lm:surjective} in \cite{spectral,naor2019average} is existential, and relies on a homological degree argument of the type used to prove Brouwer's fixed point theorem. Identifying general cases in which we can give an algorithmic proof of Lemma~\ref{lm:center} is an interesting open problem. In the following sections, we give alternative algorithmic methods for finding a good center \(t\) when \(X\) is \(\ell_p^d\) or the Schatten-\(p\) norm for \(1\le p \le 2\). 

For ease of reference, below we state the general existence result implied by Lemmas~\ref{lm:embedding-gen}~and~\ref{lm:center}, giving average distortion embeddings from radially extended and shifted bi-H\"older homeomorphisms. 
\begin{theorem}\label{thm:embegging-gen}
Let \((X, \|\cdot\|_X)\) and \((Y,\|\cdot\|_Y)\) be Banach spaces, and let \(\alpha, \beta \in (0,1]\). Let \(f:S_X \to S_Y\) be a function that, for any \(x,y \in S_X\) satisfies
\[
\frac1L \|x-y\|_X^{1/\beta} \le \|f(x) - f(y)\|_Y \le K \|x-y\|_X^\alpha.
\]
Let the function \(\tilde{f}:X \to Y\) be defined by
\(
\tilde{f}(x) = \|x\|_X^\alpha f\left(\frac{x}{\|x\|_X}\right)
\)
for \(x \neq 0\), and \(\tilde{f}(0)=0\).
Then, for any \(n\)-point set \(P \subseteq X\) there exists some \(t\in B_X(0,R)\), \(R \le \left(\frac{M}{n} \sum_{x\in P}\|x\|_X^\alpha\right)^{\frac1\alpha}
\)
 for a constant \(M\) that only depends on \(K,L,\alpha,\beta\),  such that the function \(g:X^\alpha \to Y\) defined by \(g(x) = \tilde{f}(x-t)\) is an embedding of the \(\alpha\)-snowflake \(X^\alpha\) into \(Y\) with average distortion at most \(D\lesssim 1+K\). 
\end{theorem}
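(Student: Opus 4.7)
The plan is to observe that Theorem~\ref{thm:embegging-gen} is essentially a direct combination of Lemma~\ref{lm:center} (which produces a good shift $t$) and Lemma~\ref{lm:embedding-gen} (which converts a good shift into an average distortion bound). So the proof should be short, modulo the verification that the $t$ produced by the first lemma satisfies one of the two hypotheses of the second lemma.

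More concretely, the first step will be to invoke Lemma~\ref{lm:center} (whose hypotheses are exactly the bi-H\"older homeomorphism conditions we have assumed) to produce a point $t \in X$ with
\[
\frac{1}{n}\sum_{x\in P} \tilde{f}(x-t) = 0,
\]
and with $\|t\|_X \le \left(\frac{M}{n}\sum_{x\in P}\|x\|_X^\alpha\right)^{1/\alpha}$ for some constant $M$ depending only on $K,L,\alpha,\beta$. In particular, $t \in B_X(0,R)$ for $R$ as claimed in the theorem.

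The second step is to apply Lemma~\ref{lm:embedding-gen} with $q=1$ to the map $g(x) = \tilde{f}(x-t)$. By construction, $\frac{1}{n}\sum_{x\in P} g(x) = 0$, so the second condition of Lemma~\ref{lm:embedding-gen} is satisfied with $\varepsilon = 0$ (trivially, since $0 \le 2^{1/q-1}\cdot 0$). The lemma then immediately yields that $g$ is an embedding of the $\alpha$-snowflake $X^\alpha$ into $Y$ with $1$-average distortion $D \lesssim 1+K$, which is exactly the conclusion we seek.

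There is essentially no new mathematical obstacle to overcome: the substantive work was already done in proving Lemmas~\ref{lm:embedding-gen} and \ref{lm:center}. The only thing to check with a bit of care is that Lemma~\ref{lm:embedding-gen}'s second hypothesis is correctly instantiated: we are using $\varepsilon = 0$, which is trivially within the allowed range $\varepsilon \le \frac{1}{2}\bigl(\frac{1}{2n^2}\sum_{x,y\in P}\|x-y\|_X^{q\alpha}\bigr)^{1/q}$, so nothing further needs to be verified. The proof proposal is therefore simply: apply Lemma~\ref{lm:center} to obtain $t$, then apply the second branch of Lemma~\ref{lm:embedding-gen} with this $t$ and $\varepsilon=0$ to conclude.
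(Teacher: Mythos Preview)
Your proposal is correct and matches the paper's approach exactly: the paper states Theorem~\ref{thm:embegging-gen} as ``the general existence result implied by Lemmas~\ref{lm:embedding-gen}~and~\ref{lm:center}'' without a separate proof, and you have simply spelled out that combination, invoking Lemma~\ref{lm:center} to obtain the centered shift $t$ and then applying the second branch of Lemma~\ref{lm:embedding-gen} with $\varepsilon=0$.
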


\subsection{Embeddings and Data Structures for \(\ell_p\) Spaces}

Let us recall the classical Mazur map $M_{p,q}:\ell_p^d \to \ell_q^d$ given by 
\[
\forall x \in \R^d \ \ \forall i \in [d]: 
M_{p,q}(x)_i = \sign(x_i) |x_i|^{\frac{p}{q}}.
\]
These maps are homeomorphisms between spheres: for an $x \in \R^d$, $\|M_{p,q}(x)\|_q^q = \|x\|_p^p$, and, for $p\ge q$ and any $x,y\in \R^d$ we have
\begin{equation*}
    \frac{ \|x-y\|_p^{p/q}}{2^{(p-q)/q}}
    \le 
    \|M_{p,q}(x) - M_{p,q}(y)\|_q \le \frac{p}{q} \|x-y\|_p \cdot (\|x\|_p^p + \|y\|_p^p)^{\frac1q-\frac1p}.
\end{equation*}
See Section~5.1.~of~\cite{N14a} for derivations of these classical inequalities. In particular, for any \(x,y\in S_{\ell_p^d}\) and any $p\ge q\ge 1$, we have
\begin{equation}\label{eq:mazur-cont}
   \frac{ \|x-y\|_p^{p}}{2^{(p-q)/q}}
    \le 
    \|M_{p,q}(x) - M_{p,q}(y)\|_q 
    \le 2^{\frac1q - \frac1p} \frac{p}{q} \|x-y\|_p.
\end{equation}

For any \(p \ge q\ge 1\), we will show an embedding \(\ell_p^d\) into \(\ell_q^d\) with \(q\)-average distortion \(O(\frac{p}{q})\). The \(q=1\) case is what we need for our data structures, while the \(q=2\) case answers a question of Naor~\cite{N14a}.
We will use the re-scaled function \(\tilde{M}_{p,q}\), as defined in Lemma~\ref{lm:biholder-extension}. I.e., for any \(x\in \R^d\) we define
\[
\tilde{M}_{p,q}(x)=
\|x\|_p \cdot M_{p,q}\left(\frac{x}{\|x\|_p}\right)
= \|x\|_p^{1 - \frac{p}{q}} M_{p,q}(x).
\]
By Lemma~\ref{lm:biholder-extension}, \(\tilde{M}_{p,q}\) is a Lipschitz function with Lipschitz constant \(O(\frac{p}{q})\), and we will see that an appropriate shift of it satisfies the conditions of Lemma~\ref{lm:embedding-gen}, and gives us the desired embedding. The shift we will use is very simple, and is given in the following lemma.

\begin{lemma}\label{lm:med-shift}
Suppose that \(P \subseteq \R^d\) is an \(n\)-point set and that \(t\in \R^d\) is such that \(t_i\) is a median of the (multi-)set \(\{x_i: x\in P\}\). I.e., suppose that
\begin{equation}\label{eq:med-ell1}
    \forall i\in [d]:\ \ 
    |\{x\in P: x_i < t_i\}| = |\{x\in P: x_i > t_i\}|.
\end{equation}
Then \(0\) is a median of \(\{\tilde{M}_{p,1}(x-t): x\in P\}\) as a subset of \(\ell_1^d\). Moreover, \(0\) is a \((2^{1+\frac1q},0)\)-approximate \(q\)-mean of \(\{\tilde{M}_{p,q}(x-t): x\in P\}\) as a subset of \(\ell_q^d\).
\end{lemma}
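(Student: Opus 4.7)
Proof proposal for Lemma~\ref{lm:med-shift}.

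The plan is to reduce both statements to coordinate-wise one-dimensional facts, exploiting that the $\ell_q^d$-norm raised to the $q$-th power splits additively over coordinates, and that the relevant objectives (median in $\ell_1$, $q$-mean in $\ell_q$) therefore split into $d$ independent one-dimensional problems. The key observation that drives everything is that the $i$-th coordinate of $\tilde{M}_{p,q}(x-t)$ is
\[
\bigl(\tilde{M}_{p,q}(x-t)\bigr)_i
= \|x-t\|_p^{1-p/q}\,\sign(x_i-t_i)\,|x_i-t_i|^{p/q},
\]
whose sign agrees with $\sign(x_i-t_i)$. Hence the assumption \eqref{eq:med-ell1} translates, coordinate by coordinate, into the statement that the multiset $A_i := \{(\tilde{M}_{p,q}(x-t))_i : x\in P\}$ has equally many strictly positive and strictly negative entries.

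For the median claim (part~1), I would first observe that for a point in $\ell_1^d$, the sum $\sum_{x\in P} \|u_x - z\|_1$ decomposes as $\sum_i \sum_x |(u_x)_i - z_i|$, so the median is minimized coordinate-wise, and any coordinate-wise median is a median. In one dimension, $z=0$ is a median of a multiset $A$ exactly when $|\{a\in A: a>0\}|\le |A|/2$ and $|\{a\in A: a<0\}|\le |A|/2$; by the coordinate-wise form of \eqref{eq:med-ell1} applied to $A_i$ (with $q=1$), this balance condition holds for each $i$, and so $0$ is the required median.

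For the approximate $q$-mean claim (part~2), the same coordinate decomposition gives
\[
\min_{z\in \R^d}\sum_{x\in P}\|\tilde{M}_{p,q}(x-t)-z\|_q^q
= \sum_{i=1}^d \min_{z_i\in\R}\sum_{x\in P}\bigl|(\tilde{M}_{p,q}(x-t))_i - z_i\bigr|^q,
\]
so it suffices to prove the one-dimensional inequality
\[
\sum_{a\in A}|a|^q \;\le\; 2^{q+1}\,\min_{z\in\R}\sum_{a\in A}|a-z|^q
\]
for any finite multiset $A\subseteq\R$ satisfying $|\{a\in A: a>0\}| = |\{a\in A: a<0\}|$, and then sum over coordinates, giving exactly the constant $C=2^{1+1/q}$ with $\varepsilon=0$ demanded by the definition of an approximate $q$-mean. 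The main (and only non-routine) step is this one-dimensional bound, which I expect to establish as follows. Fix $z$ and, by symmetry, assume $z\ge 0$; let $N=\{a\in A:a<0\}$ and $k=|N|$. For each $a\in N$, $|a-z|=|a|+z\ge z$, so $kz^q \le \sum_{a\in N}|a-z|^q\le \sum_{a\in A}|a-z|^q$. On the other hand, for every $a\ne 0$ the elementary inequality $|a|^q\le 2^{q-1}(|a-z|^q+z^q)$ holds, and there are $2k$ such $a$ (by the balance hypothesis), hence
\[
\sum_{a\in A}|a|^q
\;\le\; 2^{q-1}\sum_{a\in A}|a-z|^q + 2^{q-1}\cdot 2k\,z^q
\;\le\; (2^{q-1}+2^{q})\sum_{a\in A}|a-z|^q
\;\le\; 2^{q+1}\sum_{a\in A}|a-z|^q,
\]
which is the desired bound; the $z\le 0$ case is symmetric using the positive part in place of $N$. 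I anticipate no real obstacle beyond this case analysis; the main subtlety to double-check is that multiplicities of zeros in $A$ are harmless (they contribute $0$ to both sides of the bound $|a|^q\le 2^{q-1}(|a-z|^q+z^q)$), and that the common prefactor $\|x-t\|_p^{1-p/q}$ in the coordinates of $\tilde{M}_{p,q}(x-t)$ plays no role, since the sign-balance condition depends only on signs and the one-dimensional lemma is homogeneous in each summand.
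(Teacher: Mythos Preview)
Your argument is correct and follows the same overall plan as the paper: both exploit the sign-preservation property of $\tilde{M}_{p,q}$ and the coordinate-wise separability of the $\ell_q^q$ objective to reduce to a one-dimensional statement about sign-balanced multisets. The median claim is handled identically. For the $q$-mean claim, the paper takes a slightly different route: instead of bounding $\sum_a|a|^q$ directly against $\sum_a|a-z|^q$ as you do, it first shows the pairwise inequality $\frac1n\sum_a|a|^q \le \frac{2}{n^2}\sum_{a,b}|a-b|^q$ (using that for each $a$ at least $n/2$ of the $b$'s have the opposite sign or are zero), and then invokes Lemma~\ref{lm:mean-to-median} to pass from the pairwise sum to $\min_z$. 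Your direct argument via $|a|^q\le 2^{q-1}(|a-z|^q+z^q)$ together with $kz^q\le\sum_a|a-z|^q$ is a bit more self-contained (it avoids the detour through Lemma~\ref{lm:mean-to-median}) and lands on the same constant $2^{q+1}$.
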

\begin{proof}
The key observation is that, for any \(p\) and \(q\), any \(x\in \R^d\), and any \(i\in[d]\),
\[
\mathrm{sign}(\tilde{M}_{p,q}(x)_i) 
=\mathrm{sign}({M}_{p,q}(x)_i)
=\mathrm{sign}(x_i).
\]
Therefore, 
\[
|\{x\in P: \tilde{M}_{p,q}(x-t)_i < 0\}| 
= |\{x\in P: x_i < t_i\}| = |\{x\in P: x_i > t_i\}| 
= |\{x\in P: \tilde{M}_{p,q}(x-t)_i > 0\}|.
\]
This means that \(0\) is a median of \(\{\tilde{M}(x-t)_i: x\in P\}\) for any \(i\in [d]\), which is equivalent to \(0\) being a median of \(\{\tilde{M}_{p,1}(x-t): x\in P\}\) with respect to the \(\ell_1^d\) norm, i.e., to \(0 \in \arg\min_y \frac1n \sum_{x\in P}\|\tilde{M}_{p,1}(x-t)-y\|_1\).

It remains to prove the lemma for \(q > 1\). We observe that, for any \(i \in [d]\), 
\[
\frac{1}{n} \sum_{x\in P} |\tilde{M}_{p,q}(x)_i|^q \le \frac{2}{n^2}\sum_{x\in P} \sum_{y\in P} |\tilde{M}_{p,q}(x)_i-\tilde{M}_{p,q}(y)_i|^q,
\]
since, for each \(x \in P\), 
\begin{multline*}
    |\{y \in P: |\tilde{M}_{p,q}(x)_i-\tilde{M}_{p,q}(y)_i| \ge  |\tilde{M}_{p,q}(x)_i|\}| \\
\ge |\{y \in P: y_i = 0 \text{ or } \mathrm{sign}(\tilde{M}_{p,q}(x)_i) \neq \mathrm{sign}(\tilde{M}_{p,q}(y)_i)\}| \ge \frac{n}{2}.
\end{multline*}
Then, by Lemma~\ref{lm:mean-to-median}, 
\[
\frac{1}{n} \sum_{x\in P} \|\tilde{M}_{p,q}(x)\|_q^q 
\le 
\frac{2}{n^2}\sum_{x\in P} \sum_{y\in P} \|\tilde{M}_{p,q}(x)-\tilde{M}_{p,q}(y)\|_q^q
\le 
2^{q+1} \min_{z\in \R^d} \frac{1}{n} \sum_{x\in P} \|\tilde{M}_{p,q}(x)-z\|_q^q,
\]
showing that \(0\) is a \((2^{1+\frac1q},0)\)-approximate \(q\)-mean, as claimed.
\end{proof}

We can now re-state our main embedding theorem for \(\ell_p^d\).

\embellp*
\begin{proof}
Follows directly from \eqref{eq:mazur-cont},  Lemma~\ref{lm:med-shift}, and Lemma~\ref{lm:embedding-gen}.
\end{proof}

Theorem~\ref{thm:embedding-ellp} (in the case \(q = 1\)), together with Theorem~\ref{thm:avg2week} directly implies the existence of efficiently computable embedding of \(\ell_p^d\) into \(\ell_1^d\) 
with weak average distortion \(O(p\log p)\). We can remove the extra logarithmic factor with the help of the following lemma.

\begin{lemma}\label{lm:Median_not_to_far}
Let \(p \ge 1\), and let \(P\) be an $n$ point set in \(\R^d\). Let \(t\) satisfy \eqref{eq:med-ell1}. Then
\[
\max_{x\in P}\|x-t\|_p \lesssim \max_{x\in P}\left\|x - \frac1n \sum_{y\in P}y\right\|_p \le \max_{x,y\in P}\|x-y\|_p.
\]
\end{lemma}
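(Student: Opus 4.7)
The second inequality $\max_{x\in P}\|x - \tfrac{1}{n}\sum_{y\in P} y\|_p \le \max_{x,y\in P}\|x-y\|_p$ is immediate: writing $\mu = \tfrac{1}{n}\sum_{y\in P} y$, we have $\|x-\mu\|_p = \|\tfrac{1}{n}\sum_{y\in P}(x-y)\|_p \le \tfrac{1}{n}\sum_{y\in P}\|x-y\|_p \le \max_{y\in P}\|x-y\|_p$. So the whole content of the lemma is the first inequality $\max_{x\in P}\|x-t\|_p \lesssim \max_{x\in P}\|x - \mu\|_p$.

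By the triangle inequality, $\|x-t\|_p \le \|x-\mu\|_p + \|\mu - t\|_p$, so it suffices to show $\|\mu - t\|_p \le \max_{x\in P}\|x-\mu\|_p$; the conclusion then follows with the absolute constant $2$. I would prove this coordinate by coordinate. The key observation is that condition \eqref{eq:med-ell1} forces $t_i$ to be a median of the multiset $\{x_i : x \in P\}$, so it minimizes $\sum_{x\in P}|x_i - y|$ over $y \in \R$. In particular, $\sum_{x\in P}|x_i - t_i| \le \sum_{x\in P}|x_i - \mu_i|$. Combined with the triangle inequality, this gives the coordinate-wise bound
\[
|t_i - \mu_i| = \Bigl|\tfrac{1}{n}\sum_{x\in P}(t_i - x_i)\Bigr|
\le \tfrac{1}{n}\sum_{x\in P}|t_i - x_i|
\le \tfrac{1}{n}\sum_{x\in P}|x_i - \mu_i|.
\]

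To turn these coordinate-wise estimates into an $\ell_p^d$ bound, I would apply Jensen's inequality with the convex function $u \mapsto u^p$ (treating $p=\infty$ separately but analogously):
\[
\|\mu - t\|_p^p = \sum_{i=1}^d |t_i - \mu_i|^p
\le \sum_{i=1}^d \Bigl(\tfrac{1}{n}\sum_{x\in P}|x_i - \mu_i|\Bigr)^p
\le \tfrac{1}{n}\sum_{x\in P}\sum_{i=1}^d |x_i - \mu_i|^p
= \tfrac{1}{n}\sum_{x\in P}\|x-\mu\|_p^p
\le \max_{x\in P}\|x - \mu\|_p^p.
\]
Taking $p$-th roots completes the proof. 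There is no real obstacle here beyond spotting that the coordinate-wise median minimizes the $\ell_1$ error per coordinate; the only subtlety is the double application of Jensen's inequality to keep everything in the $\ell_p$ norm rather than losing dimension-dependent factors.
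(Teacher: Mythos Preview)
Your proof is correct and follows the same overall scheme as the paper: handle the second inequality by convexity, then reduce the first inequality via the triangle inequality to bounding $\|\mu - t\|_p$ coordinate-wise. The only difference is in the coordinate-wise step: the paper proves the claim $|m|^p \le \tfrac{2}{n}\sum_i |a_i|^p$ for a median $m$ of $a_1,\dots,a_n$ by observing that at least $\lceil n/2\rceil$ of the $|a_i|$ dominate $|m|$, whereas you use the variational characterization of the median together with Jensen's inequality to get $|t_i-\mu_i|^p \le \tfrac{1}{n}\sum_{x}|x_i-\mu_i|^p$. Your route is slightly cleaner and even saves a factor of $2$ in the intermediate bound (yielding final constant $2$ rather than $1+2^{1/p}$), but the two arguments are otherwise interchangeable.
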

\begin{proof}
The second inequality follows from the triangle inequality, so we only need to prove the first one. We prove the following claim first.
\begin{claim}
For any $a_1, a_2, ..., a_n\in \mathbb{R}$, and a median $m$ of these $n$ numbers, we have
$$|m|^p\leq \frac{2}{n}\sum_i{|a_i|^p}$$
\end{claim}
\begin{proof}
Since $m$ is a median,  at least $\lceil{\frac{n}{2}\rceil}$ of the \(a_i\) have absolute value greater than or equal to that of  $m$. This means that
$$\sum_i{|a_i|^p}\geq \left\lceil\frac{n}{2}\right\rceil|m|^p,$$
which implies the claim.
\end{proof}
Coming back to the proof of the lemma, by the triangle inequality we have
\begin{align*}
    \max_{x\in P}\|x-t\|_p\leq \max_{x\in P}\left\|x - \frac1n \sum_{y\in P}y\right\|_p
    +\left\|t - \frac1n \sum_{y\in P}y\right\|_p
\end{align*}
By applying the claim on each coordinate \(i\) to the set 
\(
\left\{x_i - \frac1n\sum_{y\in P}y_i: x\in P\right\}
\), and the median
\(
t_i -\frac1n \sum_{y\in P}y_i,
\)
we obtain
\begin{align*}
    \left\|t - \frac1n \sum_{y\in P}y\right\|^p_p&\leq \sum_{i=1}^d{\frac{2}{n}\sum_{x\in P}{\left|x_i-\frac{1}{n}\sum_{y\in P}{y_i}\right|^p}}\\
    &= \frac2n\sum_{x\in P}\left\|x-\frac{1}{n}\sum_{y\in P}{y}\right\|_p^p
    \le 2\max_{x\in P}{\left\|x-\frac{1}{n}\sum_{y\in P}{y}\right\|_p^p}.
\end{align*}
This implies that
$$\max_{x\in P}\|x-t\|_p\leq (1 + 2^{\frac1p})\max_{x\in P}\left\|x - \frac1n \sum_{y\in P}y\right\|_p$$
which completes the proof.
\end{proof}

\begin{theorem}\label{thm:wkembed-ellp}
For any $p \ge 1$, and any $n$-point set $P$ in $\R^d$, there exists an embedding \(f:\ell_p^d \to \ell_1^d\) with weak average distortion \(D \lesssim p\) with respect to \(P\), such that \(f\) can be computed in time \(\poly(nd)\), stored in \(poly(d)\) bits, and evaluated in time \(\poly(d)\).
\end{theorem}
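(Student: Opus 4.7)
The plan is to follow the case analysis in the proof of Theorem~\ref{thm:avg2week}, but in the ``concentrated'' case to use the explicit Mazur-based embedding from Theorem~\ref{thm:embedding-ellp} and exploit its precise form via Lemma~\ref{lm:Median_not_to_far}; this is what allows the $\log D$ factor inside Lemma~\ref{lm:avg2wkavg-hard} to collapse to a constant.

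Fix the same small universal constant $\alpha>0$ used in the proof of Theorem~\ref{thm:avg2week}, and let $x^*\in P$ minimize $s(x)$ as in Lemma~\ref{lm:avg2wkavg-easy}. In the ``easy'' case, when the hypothesis of that lemma holds, the map $x\mapsto \|x-x^*\|_p$, viewed as an embedding into $\ell_1^d$ (by placing the value in a single coordinate), already has weak average distortion $1/\alpha = O(1) \lesssim p$; it is trivially $\poly(nd)$-computable and storable via the single point $x^*$.

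Otherwise, Claim~\ref{cl:opt-t} and equations~(15)--(20) in that proof apply: the set $Q:=P\cap B_{\ell_p^d}(x^*,2s(x^*))$ satisfies $|Q|\ge 7n/8$, $\mathrm{diam}_{\ell_p^d}(Q)\le 4s(x^*)$, and $s(x^*)\asymp \sup_t t\Psi_{\ell_p^d}(P,t)\asymp \sup_t t\Psi_{\ell_p^d}(Q,t)$. Apply Theorem~\ref{thm:embedding-ellp} with $q=1$ to $Q$: take $t_Q\in\R^d$ to be a coordinate-wise median of $Q$ (as in \eqref{eq:med-ell1}) and define $g_Q(x):=\tilde{M}_{p,1}(x-t_Q)$. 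This gives $\lip{g_Q}\lesssim p$ and $1$-average distortion $D\lesssim p$ with respect to $Q$; moreover, inspecting the proof of Lemma~\ref{lm:embedding-gen} shows $D\lesssim \lip{g_Q}$, so the ratio $D/\lip{g_Q}$ is an absolute constant. Because $\|\tilde{M}_{p,1}(u)\|_1=\|u\|_p$ for every $u\in\R^d$, the triangle inequality combined with Lemma~\ref{lm:Median_not_to_far} yields
\[
\mathrm{diam}_{\ell_1^d}(g_Q(Q))
\le 2\max_{x\in Q}\|x-t_Q\|_p
\lesssim \mathrm{diam}_{\ell_p^d}(Q)
\lesssim s(x^*)
\le \bar R_{\ell_p^d}(Q),
\]
where the last inequality uses Lemma~\ref{lm:wk-strong-avg}. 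Hence Lemma~\ref{lm:avg2wkavg-hard} applies with an absolute constant $C=O(1)$, and since $D/\lip{g_Q}=O(1)$ the log factor $\log(2CD/\lip{g_Q})$ is $O(1)$, giving weak average distortion $O(D)=O(p)$ for $g_Q$ on $Q$.

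Transferring from $Q$ to $P$ is identical to the final step of the proof of Theorem~\ref{thm:avg2week}: since $|Q|/n\ge 7/8$ we have $\sup_s s\Psi_{\ell_1^d}(g_Q(P),s)\gtrsim \sup_s s\Psi_{\ell_1^d}(g_Q(Q),s)$, and equation~(20) of that proof gives $\sup_t t\Psi_{\ell_p^d}(Q,t)\gtrsim \sup_t t\Psi_{\ell_p^d}(P,t)$, so $g_Q$ has weak average distortion $O(p)$ with respect to all of $P$. Computationally, $x^*$, the two suprema governing the case split, and the coordinate-wise median $t_Q$ are each computable in $\poly(nd)$ time; in either branch, the embedding is specified by one vector in $\R^d$ (either $x^*$ or $t_Q$), stored in $\poly(d)$ bits, and evaluated in $\poly(d)$ time. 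The main technical point is verifying that $\mathrm{diam}_{\ell_1^d}(g_Q(Q))$ scales with $s(x^*)$ \emph{without} an accompanying $\lip{g_Q}$ factor: this is precisely the leverage gained from the Mazur map's isometric behaviour on spheres, and it is what removes the spurious logarithm that the generic reduction in Theorem~\ref{thm:avg2week} would introduce.
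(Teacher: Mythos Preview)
Your argument is correct and follows essentially the same approach as the paper: rerun the case analysis from the proof of Theorem~\ref{thm:avg2week}, and in the concentrated case apply the explicit Mazur embedding $g_Q=\tilde{M}_{p,1}(\cdot-t_Q)$ from Theorem~\ref{thm:embedding-ellp} to $Q$, then use $\|\tilde{M}_{p,1}(u)\|_1=\|u\|_p$ together with Lemma~\ref{lm:Median_not_to_far} to bound $\mathrm{diam}_{\ell_1^d}(g_Q(Q))$ by a constant multiple of $\mathrm{diam}_{\ell_p^d}(Q)$, so that Lemma~\ref{lm:avg2wkavg-hard} applies with $C\lesssim 1$ rather than $C\lesssim\lip{g_Q}$. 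Your explicit observation that $D/\lip{g_Q}$ is an absolute constant (by inspecting the proof of Lemma~\ref{lm:embedding-gen}) is exactly what makes the logarithm in Lemma~\ref{lm:avg2wkavg-hard} collapse; the paper leaves this implicit. One cosmetic note: your chain ``$\lesssim s(x^*)\le \bar R_{\ell_p^d}(Q)$'' should read $\lesssim$ in the last step, and it relies not only on Lemma~\ref{lm:wk-strong-avg} but also on the bound $\Psi_{\ell_p^d}(Q,s(x^*)/2)>1/4$ established earlier in the proof of Theorem~\ref{thm:avg2week}.
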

\begin{proof}
The proof follows from Theorem~\ref{thm:embedding-ellp} and the proof of Theorem~\ref{thm:avg2week} with the following modification. In the final step of the proof, we take the embedding \(f=g\) with average distortion \(D \lesssim p\) with respect to \(Q\) given by Theorem~\ref{thm:embedding-ellp}. This embedding maps \(x\in Q\) to \(\tilde{M}_{p,1}(x-t)\) where \(t\) is as defined by \eqref{eq:med-ell1}. By construction, we know that \(Q\) has diameter at most \(16 \sup_{t\ge 0} \Psi_{\ell_p^d}(P,t)\). Then, by Lemma~\ref{lm:Median_not_to_far} and the fact that \(\|\tilde{M}_{p,1}(x-t)\|_1 = \|x-t\|_p\) we know that the diameter of \(f(Q)\) is bounded as follows
\begin{align*}
    \max_{x,y\in Q} \|f(x) -f(y)\|_1
    \le 2\max_{x\in Q}\|\tilde{M}_{p,1}(x-t)\|_1
    &= 2\max_{x\in Q}\|x-t\|_p\\
    &\lesssim \max_{x,y\in Q} \|x -y\|_p
    \lesssim \sup_{t\ge 0} \Psi_{\ell_p^d}(P,t).
\end{align*}
We can now complete the proof, as we did in the proof of Theorem~\ref{thm:avg2week}, by appealing to Lemma~\ref{lm:avg2wkavg-hard}, but with \(C\lesssim 1\) rather than \(C\lesssim \lip{f}\).
\end{proof}

We now restate and prove Theorem~\ref{thm:ds-ellp}.
\dsellp*
\begin{proof}
By standard reductions, the \((c,r)\)-NNS problem over \(\ell_p^d\) can be reduced to solving the \((c,1)\)-NNS problem for point sets in \([-\Delta, \Delta]^d\) where \(\Delta \in \poly(d)\) (see~\cite{spectral}).
Theorem~\ref{thm:wkembed-ellp} and Lemma~\ref{lm:wkvg2lsh} imply that, for a large enough \(c \lesssim \frac{p}{\varepsilon}\), and for any \((c, \frac12)\)-dispersed \(n\) point set \(P\) in \(\ell_p^d\) restricted to \([-\Delta, \Delta]^d\), there exists a \((1, p_1, p_2)\)-empirically sensitive \(\lsh(P)\) for \(P\) with \(\frac{\log(1/p_1)}{\log(1/p_2)} \le \varepsilon\) and \(\frac{1}{\ln(1/p_2)}\le \frac{1}{1-p_2} \in \poly(d)\). Moreover, a function \(h \sim \lsh(P)\) can be sampled in time \(\poly(nd)\), stored using \(\poly(d)\) bits, and evaluated in \(\poly(d)\) time. The theorem now follows from Lemma~\ref{lm:main-ds}.
\end{proof}

\subsection{Embeddings and Data Structures for Schatten-\(p\) Spaces}

The Schatten-\(p\) norms are a natural extension of the \(\ell_p\) norms to matrices. For a \(d\times e\) real matrix \(X\), and \(p \ge 1\), its Schatten-\(p\) norm \(\|X\|_{C_p}\) of \(X\) equals the \(\ell_p\) norm of its singular values, which can also be written as
\[
\|X\|_{C_p} = \mathrm{tr}( |X|^p )^{1/p},
\]
where \(|X| = (X^\top X)^{1/2}\) is the positive semi-definite square root of \(X^\top X\).
In particular, the Schatten-1 norm, sometimes called the trace norm, equals the sum of singular values, and the Schatten-\(\infty\) norm is just the maximum singular value, i.e., the spectral norm of the matrix. The Schatten-2 norm is special, as it equals the Euclidean norm of \(X\) seen as a \(de\)-dimensional vector. 

We will constrain ourselves to \(d\times d\) symmetric matrices \(X\). This is without loss of generality, since we can map any matrix \(X\) to the symmetric matrix
\[
\frac12 \begin{pmatrix}
{0} & X\\
X^\top & {0}
\end{pmatrix},
\]
where the \({0}\)'s denote the zero blocks of appropriate dimensions. This map is linear, preserves the Schatten-\(p\) norms, and produces only symmetric matrices in its image. 

For our results for Schatten-\(p\) norms, we will utilize the non-commutative Mazur maps \(M_{p,q}\). For any \(d\times d\) symmetric matrix \(X\), this map is defined by
\[ 
M_{p,q}(X) = X |X|^{\frac{p}{q}-1}.
\]
An important result of Ricard~\cite{R15} establishes that the non-commutative Mazur map satisfies similar continuity estimates as its classical commutative variant.
\begin{lemma}\label{lm:ricard}
For any \(d\times d\) symmetric matrices \(X,Y\) the following holds. If \(1 \le p \le 2\), then 
\begin{align*}
    \|M_{p,2}(X) - M_{p,2}(Y)\|_{C_2} &\lesssim \|X-Y\|_{C_p}^{p/2},\\
    \|M_{2,p}(X) - M_{2,p}(Y)\|_{C_p} &\lesssim \|X-Y\|_{C_2} (\|X\|_{C_2}^{\frac2p -1}+\|Y\|_{C_2}^{\frac2p -1}).
\end{align*}
If \(p \ge 2\), then 
\begin{align*}
    \|M_{p,2}(X) - M_{p,2}(Y)\|_{C_2} &\lesssim p\|X-Y\|_{C_p} (\|X\|_{C_2}^{\frac{p}{2} -1}+\|Y\|_{C_2}^{\frac{p}{2} -1}),\\
    \|M_{2,p}(X) - M_{2,p}(Y)\|_{C_p} &\lesssim \|X-Y\|_{C_2}^{2/p}.
\end{align*}
\end{lemma}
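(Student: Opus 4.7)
The plan is to take this result directly from Ricard's paper~\cite{R15}, which is devoted precisely to establishing H\"older- and Lipschitz-type estimates for the non-commutative Mazur maps on Schatten classes. The four inequalities cluster into two types: the first estimate in the $1\le p\le 2$ case and the last estimate in the $p\ge 2$ case are \emph{H\"older} estimates in which the operator function $X\mapsto X|X|^{\alpha-1}$ is applied with H\"older exponent $\alpha\in(0,1]$; the other two are \emph{Lipschitz-type} estimates with polynomial growth factors, where the exponent $\alpha\ge 1$ makes the scalar function genuinely differentiable and the bound takes the form of a non-commutative mean-value inequality.

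For the H\"older estimates, Ricard's main theorem establishes that for any self-adjoint matrices $X,Y$ and any $0<\alpha\le 1$,
\[
\bigl\|\,X|X|^{\alpha-1}-Y|Y|^{\alpha-1}\,\bigr\|_{C_{q/\alpha}}\lesssim \|X-Y\|_{C_q}^{\alpha}
\]
for an appropriate range of $q$. Specializing to $q=p$ and $\alpha=p/2\in[1/2,1]$ yields the first inequality in the $1\le p\le 2$ case, and specializing to $q=2$ and $\alpha=2/p\in(0,1]$ yields the last inequality in the $p\ge 2$ case.

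For the two Lipschitz-type estimates, the scalar function $f(t)=t|t|^{\alpha-1}$ with $\alpha\ge 1$ is differentiable away from $0$, with divided difference bounded by $\alpha\max\{|s|,|t|\}^{\alpha-1}$. Ricard uses this fact, together with a Birman--Solomyak double operator integral representation
\[
M_{p,2}(X)-M_{p,2}(Y)=\int_{0}^{1}T^{f'_{[\lambda]}}(X-Y)\,d\lambda
\]
along the segment $\lambda X+(1-\lambda)Y$, to reduce matters to the boundedness of the corresponding Schur multiplier on Schatten classes. Combined with a non-commutative H\"older inequality this produces the desired linear factor $\|X-Y\|_{C_p}$ (respectively $\|X-Y\|_{C_2}$) together with the growth factor $\|X\|_{C_2}^{\alpha-1}+\|Y\|_{C_2}^{\alpha-1}$. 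The explicit multiplicative factor of $p$ in the $p\ge 2$ estimate is tracked through the bound on the scalar divided difference.

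The main obstacle, and the reason this theorem is non-trivial, is that $X$ and $Y$ need not commute, so one cannot reduce the problem to a scalar inequality applied eigenvalue-by-eigenvalue; even for $\alpha=1/2$ the naive guess via spectral decomposition fails. Ricard's proof overcomes this by means of delicate non-commutative harmonic analysis estimates on Schur multipliers and boundedness of certain singular integrals on Schatten classes. We do not reproduce these arguments and instead cite~\cite{R15} as a black box.
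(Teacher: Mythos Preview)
Your proposal is correct and matches the paper's own treatment: the paper does not prove this lemma either, but simply attributes it to Ricard~\cite{R15} as a black-box citation. Your sketch of how the four inequalities arise from Ricard's H\"older and Lipschitz-type estimates is accurate and in fact more informative than what the paper provides.
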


\subsubsection{Schatten-\(p\) for \(1 \le p \le 2\)}

We first focus on the case when \(1 \le p \le 2\), for which we can give a data structure with polynomial time pre-processing, and nearly linear space. 
Lemma~\ref{lm:ricard} implies that, for any \(p \in [1,2]\), and any \(d\times d\) symmetric matrices \(X,Y\) such that \(\|X\|_{C_p} = \|Y\|_{C_p}=1\), we have
\begin{equation}\label{eq:nc-mazur-cont}
    \|X-Y\|_{C_p} \lesssim 
    \|M_{p,2}(X) - M_{p,2}(Y)\|_{C_2} \lesssim \|X-Y\|_{C_p}^{p/2}.
\end{equation}
Note, further, that, for any matrix \(X\), \(\|M_{p,2}(X)\|_{C_2} = \|X\|_{C_p}^{p/2}\), and, moreover, for any \(t \ge 0\), \(M_{p,2}(tX) = t^{p/2} M_{p,2}(X)\). Therefore, the extension function \(\tilde{M}_{p,2}(X) = \|X\|_{C_p}^{p/2}M_{p,2}\left(\frac{X}{\|X\|_{C_p}}\right)\) defined in Lemma~\ref{lm:biholder-extension} is equal to \(M_{p,2}(X)\).

The next lemma establishes an efficient method for finding a shift satisfying the conditions in Lemma~\ref{lm:embedding-gen}. 
\begin{lemma}\label{lm:schatten-center}
Suppose that \(P\) is an \(n\)-point set of symmetric \(d\times d\) matrices and that \(p\ge 1\). Let \(q = \frac{p}{2}+1\). Let \(T\) be a matrix that minimizes the convex function \(f:\R^{d \times d} \to \R\) defined by \(f(T) = \frac1n \sum_{X\in P} \|X-T\|_{C_q}^{q}\). Then 
\[
\frac1n \sum_{X\in P}M_{p,2}(X-T) = 0.
\]
\end{lemma}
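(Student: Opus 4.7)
The plan is to interpret the claimed identity $\frac{1}{n}\sum_{X\in P}M_{p,2}(X-T)=0$ as the first-order optimality condition for $T$. Because $q = p/2 + 1 \ge 3/2 > 1$, the scalar function $\phi(t) = |t|^q$ is convex and continuously differentiable; composing the norm $\|\cdot\|_{C_q}$ with the convex nondecreasing map $u \mapsto u^q$ shows that $Y \mapsto \|Y\|_{C_q}^q$ is convex on symmetric matrices, and it is $C^1$ by the spectral calculus for trace functionals. Consequently $f$ is a convex, $C^1$ function of $T$, so any minimizer satisfies $\nabla f(T) = 0$. (A minimizer may be taken symmetric without loss of generality, since $\|X-T\|_{C_q} = \|X-T^\top\|_{C_q}$ and $f$ is convex, so the symmetric part of any minimizer is also optimal.)

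The main step is to compute this gradient. Writing $\|Y\|_{C_q}^q = \mathrm{tr}(\phi(Y))$ via functional calculus, the standard identity for the Frobenius-gradient of a spectral trace function gives $\nabla_Y \mathrm{tr}(\phi(Y)) = \phi'(Y)$, where $\phi'(t) = q\,\mathrm{sign}(t)|t|^{q-1}$ is applied via the spectral decomposition $Y = Q\Lambda Q^\top$, yielding
\[
\phi'(Y) = qQ\,\mathrm{diag}\!\left(\mathrm{sign}(\lambda_i)|\lambda_i|^{q-1}\right)Q^\top.
\]
Since $q - 1 > 0$, this is continuous and well-defined on all symmetric $Y$ (including singular ones). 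By the choice $q - 2 = p/2 - 1$, this matrix equals $q\,Y|Y|^{q-2} = q\,M_{p,2}(Y)$, where the middle expression is understood in the functional-calculus sense and the equality is precisely the definition of $M_{p,2}$ from Section 5.3 of the paper.

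Applying the chain rule to the affine map $T \mapsto X-T$, we then have $\nabla_T\|X-T\|_{C_q}^q = -q\,M_{p,2}(X-T)$, and hence $\nabla f(T) = -\frac{q}{n}\sum_{X\in P}M_{p,2}(X-T)$. Setting the gradient to zero at the minimizer yields the claim. The only delicate point is justifying the spectral-calculus gradient formula when $q$ is as small as $3/2$, so that $\phi'$ is not itself differentiable at zero; however, $\phi'$ is continuous and locally bounded, which suffices for a Daleckii--Krein-type differentiation identity (alternatively, one can verify the gradient formula directly by expanding the eigenvalues of $Y + sH$ to first order in $s$ and noting that only the diagonal entries of $Q^\top H Q$ contribute to the first-order change in $\sum_i \phi(\lambda_i(Y))$). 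Everything else is a routine convexity calculation.
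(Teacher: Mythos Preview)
Your proposal is correct and follows essentially the same route as the paper: establish convexity of $f$, compute $\nabla_Y \|Y\|_{C_q}^q = q\,M_{p,2}(Y)$ via spectral calculus (the paper cites Lewis's theorem on gradients of spectral functions, while you invoke the trace-functional identity $\nabla_Y\mathrm{tr}(\phi(Y))=\phi'(Y)$ and Daleckii--Krein, which amounts to the same computation), and then read off the first-order optimality condition. The only substantive point you omit is existence of a minimizer, which the paper handles by appealing to Lemma~\ref{lm:center}; but since the lemma as stated hypothesizes a minimizer $T$, this is not a gap in your argument.
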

\begin{proof}
The fact that \(f\) is convex follows since \(\|X-T\|_{C_q}^{q}\) is convex in \(T\) (being a convex non-decreasing function of the convex function \(\|X-T\|_{C_q}\)), and the sum of convex functions is convex. We claim that
\begin{equation}\label{eq:grad}
\nabla f(T) = - \frac{q}{n} \sum_{X\in P}M_{p,2}(X-T),
\end{equation}
and then the lemma follows because a convex function is minimized at the set of points that have zero gradient, and this set is non-empty by Lemma~\ref{lm:center}.

To show \eqref{eq:grad}, it is enough to show that the gradient of \(\|X-T\|_{C_q}^{q}\) at \(T\) is \(-q M_{p,2}(X-T)\), for which we just need to show that \(\nabla G(Y) = q M_{p,2}(Y)\) for any \(d\times d\) symmetric matrix \(Y\) and the function \(G(Y) = \|Y\|_{C_q}^{q}\). Notice that, if we use \(\lambda(Y)\) to denote the vector of eigenvalues of \(Y\) in non-increasing order of their absolute values, then \(G(Y) = g(\lambda(Y))\) where \(g:\R^d \to \R\) is defined by \(g(y) = \|y\|_{q}^{q}\). The function \(g\) is continuous and differentiable everywhere, and satisfies \(\frac{\partial g}{\partial y_i}(y) = q \ \sign(y_i)|y_i|^{p/2}\). Let the eigendecomposition of \(Y\) be \(Y = U\Lambda U^\top\). It now follows by Corollary~3.2 in~\cite{Lewis95} that
\[
\nabla G(Y) = U \mathrm{diag}(\nabla g(\lambda(Y))) U^\top = q M_{p,2}(Y),
\]
where \(\mathrm{diag}(z)\) is the diagonal matrix with \(z\) on the diagonal.
\end{proof}

We can now state our main embedding result for Schatten-\(p\) in the range \(1 \le p \le 2\).
\begin{theorem}\label{thm:embedding-Sp}
For any $p\in [1,2]$, and any $n$-point set $P$ of symmetric \(d\times d\) matrices, there exists an embedding \(f\) of the \(\frac{p}{2}\)-snowflake of the Schatten-\(p\) norm into \(\ell_2^{d^2}\) with average distortion \(D \lesssim 1\) with respect to \(P\). If \(\max_{X\in P}\|X\|_{C_p}\le \Delta\), then \(f\) can be computed in time \(\poly(nd\Delta)\), stored in \(\poly(d\Delta)\) bits, and evaluated in \(\poly(d)\) time. 
\end{theorem}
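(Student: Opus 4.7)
The plan is to apply Lemma \ref{lm:embedding-gen} with source space $X = C_p$, target $Y = C_2$, H\"older exponent $\alpha = p/2$, average-distortion exponent $q=1$, and the restriction of the non-commutative Mazur map $f = M_{p,2}$ to the unit sphere $S_{C_p}$. By the first estimate of Lemma \ref{lm:ricard} (equivalently, the sphere version recorded in \eqref{eq:nc-mazur-cont}), this $f$ is H\"older continuous with exponent $p/2$ and constant $K \lesssim 1$. Since $M_{p,2}(tX) = t^{p/2} M_{p,2}(X)$ for all $t \ge 0$, the radial extension $\tilde{M}_{p,2}$ built in Lemma \ref{lm:biholder-extension} coincides with $M_{p,2}$ itself on all of $\R^{d\times d}_{\mathrm{sym}}$, so the candidate embedding will take the form $g(X) = M_{p,2}(X - T)$ for a matrix $T$ still to be chosen.

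To produce $T$, I would invoke Lemma \ref{lm:schatten-center} with the optimization exponent $q_\ast = p/2 + 1$: any minimizer $T$ of the convex function $F(T) = \frac{1}{n}\sum_{X \in P} \|X - T\|_{C_{q_\ast}}^{q_\ast}$ satisfies $\frac{1}{n}\sum_{X \in P} M_{p,2}(X - T) = 0$, which is precisely condition (2) of Lemma \ref{lm:embedding-gen} with $\varepsilon = 0$. That lemma then yields the existential part of the theorem: $g$ is an embedding of the $(p/2)$-snowflake of the Schatten-$p$ space into $C_2 \cong \ell_2^{d^2}$ with average distortion $D \lesssim 1 + K \lesssim 1$.

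For the efficiency claims, I would compute an approximate minimizer of $F$ using a standard polynomial-time convex optimization method, e.g., projected subgradient descent or the ellipsoid method. By the computation in the proof of Lemma \ref{lm:schatten-center}, $\nabla F(T) = -q_\ast \cdot \frac{1}{n}\sum_{X \in P} M_{p,2}(X - T)$, and both $F(T)$ and $\nabla F(T)$ are computable in time $\poly(nd)$ from the SVDs of the shifted matrices $X - T$. The ``moreover'' clause of Lemma \ref{lm:center} confines the minimizer to the Schatten-$p$ ball of radius $O(\Delta)$, bounding the search region. Running the optimizer until $\|\nabla F(T)\|_{C_2}$ falls below the threshold demanded by condition (2) of Lemma \ref{lm:embedding-gen} (namely, a constant multiple of the quantity $\frac{1}{n^2}\sum_{X,Y \in P}\|X - Y\|_{C_p}^{p/2}$) then delivers a matrix $T$ of Schatten-$p$ norm $O(\Delta)$, storable in $\poly(d\Delta)$ bits, and the resulting map $X \mapsto M_{p,2}(X - T)$ is evaluable in $\poly(d)$ time via a single SVD per query.

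The main obstacle I anticipate is the routine but careful bookkeeping of numerical precision: evaluating $M_{p,2}$ requires an SVD followed by a non-rational exponentiation of the singular values, and the accuracy to which $T$ must be computed is governed by the ratio between the gradient threshold above and the Lipschitz/smoothness parameters of $F$ on the search region. Fortunately these quantities are all polynomial in $n$, $d$, and $\Delta$, so standard finite-precision guarantees for SVDs and first-order convex optimization push the argument through to the claimed $\poly(nd\Delta)$ preprocessing and $\poly(d)$ evaluation bounds, with no genuinely new ingredients beyond those used in the $\ell_p$ analog.
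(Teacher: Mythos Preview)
Your proposal is correct and follows essentially the same route as the paper: combine the H\"older estimate \eqref{eq:nc-mazur-cont} for $M_{p,2}$, the centering Lemma~\ref{lm:schatten-center}, and the general embedding Lemma~\ref{lm:embedding-gen}, then compute the shift $T$ by convex optimization (the paper simply cites the ellipsoid method). Your additional discussion of the search-region bound via Lemma~\ref{lm:center} and of numerical precision is elaboration the paper omits, but the underlying argument is identical.
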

\begin{proof}
As remarked above, giving an embedding into \(\ell_2^{d^2}\) is equivalent to giving an embedding into Schatten-2, as the two normed spaces are isometric.
The average distortion result then follows from \eqref{eq:nc-mazur-cont},  Lemma~\ref{lm:schatten-center}, and Lemma~\ref{lm:embedding-gen}. In particular, a shift \(T\) satisfying the second condition of Lemma~\ref{lm:embedding-gen} can be computed in polynomial time by minimizing the convex function from Lemma~\ref{lm:schatten-center} using a polynomial time convex minimization algorithm, e.g., the ellipsoid method~\cite{GLS12}.  
\end{proof}

The next theorem (already stated in the Introduction) gives our data structure for the NNS problem over Schatten-\(p\), \(1 \le p \le 2\).
\schattensmall*
\begin{proof}
Similarly to the proof of Theorem~\ref{thm:ds-ellp}, we can  reduce the \((cr,r)\)-NNS problem over Schatten-\(p\) to the \((c,1)\)-NNS problem for point sets of symmetric matrices whose coordinates are in \([-\Delta, \Delta]\) for \(\Delta \in \poly(d)\). Now observe that the \((c,1)\)-NNS problem over Schatten-\(p\) is equivalent to the \((c^{p/2},1)\)-NNS problem over the \(\frac{p}{2}\)-snowflake of Schatten-\(p\). The theorem now follows from Theorems~\ref{thm:avg2NNS}~and~\ref{thm:embedding-Sp}.
\end{proof}

\subsubsection{Schatten-\(p\) for \(p \ge 2\)}

Next we deal with the remaining case, \(p \ge 2\). Here, Lemma~\ref{lm:ricard} implies that there exists a constant \(C \ge 1\) such that for any \(p \ge 2\) and any \(d\times d\) symmetric matrices \(X,Y\) such that \(\|X\|_{C_p} = \|Y\|_{C_p} = 1\) we have
\begin{equation}\label{eq:nc-mazur-cont2}
    \frac{1}{C^{p/2}} \|X-Y\|_{C_2}^{p/2} \lesssim \|M_{p,2}(X) - M_{p,2}(Y)\|_{C_2} \lesssim p\|X-Y\|_{C_p} .
\end{equation}
These bounds are analogous to the ones in \eqref{eq:mazur-cont}, and suggest an analogue of Theorem~\ref{thm:embedding-ellp} for the Schatten-\(p\) norm. The idea of using coordinate-wise medians is, however, specific to the \(\ell_p\) norm and does not extend to Schatten-\(p\). Lemma~\ref{lm:schatten-center} also does not seem to extend to the case \(p\ge 2\) because it is no longer true that \(\tilde{M}_{p,2}(X) = \|X\|_{C_p}M_{p,2}\left(\frac{X}{\|X\|_{C_p}}\right)\) is equal to \(M_{p,2}(X)\), and we do not know whether \(\tilde{M}_{p,2}(X)\) can be expressed as the gradient of a convex function.  We leave the design of an efficient algorithm for computing a center satisfying the conditions of Lemma~\ref{lm:embedding-gen} for embedding Schatten-\(p\) into Schatten-2 (and, equivalently, \(\ell_2^{d^2}\)) as an open problem. Instead, we just observe that the embedding implied together by Lemma~\ref{lm:embedding-gen} and Lemma \ref{lm:center} can be computed in exponential time, and stored and evaluated efficiently. 
\begin{theorem}\label{thm:embedding-Sp-2}
For any $p\ge 2$, and any $n$-point set $P$ of symmetric \(d\times d\) matrices, there exists an embedding \(f\) of the Schatten-\(p\) norm into \(\ell_2^{d^2}\) with average distortion \(D \lesssim p\) with respect to \(P\). If \(\max_{X\in P}\|X\|_{C_p}\le \Delta\), then \(f\) can be computed in time \(2^{\poly(d\Delta)}\), stored in \(\poly(d\Delta)\) bits, and evaluated in \(\poly(d)\) time. 
\end{theorem}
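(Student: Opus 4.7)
The plan is to apply Theorem~\ref{thm:embegging-gen} with $X=$ Schatten-$p$, $Y=$ Schatten-$2$ (isometric to $\ell_2^{d^2}$), and $f$ being the noncommutative Mazur map $M_{p,2}$ restricted to $S_{C_p}$. First I would verify that the hypotheses of Theorem~\ref{thm:embegging-gen} hold with $\alpha = 1$, $\beta = 2/p$, $K \lesssim p$, and $L$ at most an absolute constant to the power $p$. These bounds are exactly \eqref{eq:nc-mazur-cont2}, which in turn follows from Ricard's estimates in Lemma~\ref{lm:ricard} (the lower bound coming from the companion estimate on $M_{2,p}$ by inverting the map on the sphere). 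Theorem~\ref{thm:embegging-gen} then produces a shift $T$ with $\|T\|_{C_p} \lesssim \max_{X\in P}\|X\|_{C_p} \le \Delta$ (with the hidden constant depending on $p$), such that the map $f(X) = \tilde{M}_{p,2}(X - T)$ embeds Schatten-$p$ (no snowflaking, since $\alpha = 1$) into $\ell_2^{d^2}$ with average distortion $D \lesssim 1+K \lesssim p$, settling the existential part of the theorem.

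For the computational claims, evaluating $f(X) = \|X-T\|_{C_p}^{\,1-p/2}\cdot (X-T)\,|X-T|^{p/2-1}$ reduces to a single eigendecomposition of $X-T$ and a coordinate-wise power on the eigenvalues, taking $\poly(d)$ arithmetic operations. Storage reduces to storing $T$, which is a $d\times d$ symmetric matrix whose entries are bounded in magnitude by $\poly(d\Delta)$ and can be truncated to $\poly(d\Delta)$ bits of precision without affecting the distortion by more than constants (using the Lipschitz estimate \eqref{eq:extension-ub}).

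The main (and only real) obstacle is computing $T$. Lemma~\ref{lm:center}, which underlies the existence claim in Theorem~\ref{thm:embegging-gen}, is proved via the topological degree argument in Lemma~\ref{lm:surjective} and gives no algorithm; unlike the regime $1 \le p \le 2$ treated via the convex objective in Lemma~\ref{lm:schatten-center}, I do not see a convex or otherwise efficiently solvable criterion for $T$ when $p \ge 2$. Because the theorem permits pre-processing time $2^{\poly(d\Delta)}$, the plan is to locate $T$ by brute-force grid search. I would enumerate an $\eta$-net in $C_p$-norm of the ball $B_{C_p}(0,R)$ (with $R$ as in Lemma~\ref{lm:center}) and, for each candidate $T'$, compute $\|\frac1n \sum_{X\in P} \tilde{M}_{p,2}(X-T')\|_{C_2}$; this takes $\poly(nd)$ time per candidate. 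Because $\tilde{M}_{p,2}$ is $O(p)$-Lipschitz on all of $C_p$ by \eqref{eq:extension-ub}, any net point within $C_p$-distance $\eta$ of a true solution $T$ will satisfy $\|\frac1n \sum_X \tilde{M}_{p,2}(X-T')\|_{C_2} \le O(p\eta)$. Picking $\eta = 2^{-\poly(d\Delta)}$ small enough to undercut the approximate-centering tolerance of the second condition of Lemma~\ref{lm:embedding-gen} (which is inverse polynomial in $n$ and in the input bit complexity) then preserves the distortion bound $D \lesssim p$.

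Since $C_p$ and the Frobenius norm are equivalent up to factors that are $\poly(d)$, the $\eta$-net has size at most $(R/\eta)^{d^2} = 2^{\poly(d\Delta)}$, giving overall pre-processing $2^{\poly(d\Delta)}$ (up to polynomial factors in $n$). The subtle point to verify carefully is that the tolerance afforded by Lemma~\ref{lm:embedding-gen} is indeed inverse polynomial in the input parameters, so that a single-exponentially small $\eta$ suffices; this is where assumptions on the input precision implicit in the statement of the theorem are used.
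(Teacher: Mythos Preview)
Your approach is essentially the paper's: invoke Theorem~\ref{thm:embegging-gen} (i.e., Lemmas~\ref{lm:embedding-gen} and~\ref{lm:center}) with the noncommutative Mazur map and the estimates \eqref{eq:nc-mazur-cont2}, and then locate the shift $T$ by brute-force enumeration of an $\eta$-net of the ball in which Lemma~\ref{lm:center} guarantees $T$ lies. The existential part and the storage/evaluation claims are handled exactly as the paper does.

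There is one real gap. You correctly note that the hidden constant in the bound $\|T\|_{C_p}\lesssim \Delta$ depends on $p$, but then assert without justification that the net size is $2^{\poly(d\Delta)}$. Tracing through the proof of Lemma~\ref{lm:center} with $\alpha=1$, $\beta=2/p$, $K\lesssim p$, and $L\lesssim C_0^{p/2}$, the constant $M$ comes out of order roughly $p^{p/2}$, so the search radius $R$ can be exponential in $p$. Since the theorem is stated for \emph{all} $p\ge 2$ with running time $2^{\poly(d\Delta)}$ (no $p$ in the exponent), your enumeration bound is not valid as written when $p$ is large relative to $d$. The paper closes this gap by first reducing to $p\le \ln d$: for $p>\ln d$, the identity map gives a bi-Lipschitz embedding of Schatten-$p$ into Schatten-$(\ln d)$ with distortion at most $e$ (H\"older on the singular values), so one may replace $p$ by $\ln d$ at a constant-factor cost in distortion, after which the search radius is $2^{\poly(\log d)}\cdot\Delta$ and the enumeration genuinely runs in time $2^{\poly(d\Delta)}$. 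You should add this reduction; with it, your argument and the paper's coincide.
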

\begin{proof}
A shift \(T\) satisfying the second condition of Lemma~\ref{lm:embedding-gen} exists by \eqref{eq:nc-mazur-cont2} and Lemma~\ref{lm:center}. Moreover, \(\|T\|_{C_p}\) can be bounded in terms of \(p\), and inspecing the proof of Lemma~\ref{lm:center} shows that the bound is at most exponential in \(\poly(p)\). We can assume, without loss of generality, that \(p \le \ln d\), since for any larger \(p\), Schatten-\(p\) has a bi-Lipschitz embedding into Schatten-\((\ln d)\) with constant distortion\footnote{This is a simple consequence of H\"{o}lder's inequality applied to the vector comprising of the singular values of a matrix in Schatten-$p$, and the all ones vector.}. Therefore, by discretizing a Schatten-\(p\) ball of radius in \(2^{\poly(p)}\) and enumerating the points in the discretization, we can find a shift  \(T\) satisfying the second condition of Lemma~\ref{lm:embedding-gen}  in time \(2^{\poly(d)}\).
The average distortion result then follows from Lemma~\ref{lm:embedding-gen} and the fact that the Schatten-2 norm is isometric to \(\ell_2^{d^2}\). 
\end{proof}

The next theorem (also stated in the Introduction) is an immediate corollary of Theorems~\ref{thm:avg2NNS}~and~\ref{thm:embedding-Sp-2}.
\schattenlarge*

\section{Conclusion and Open Problems}
We have constructed data structures for the $(c,r)$-NNS problem with efficient pre-processing, nearly linear space, and sub-linear query time with approximation $c\lesssim p$ in the case of $\ell_p$ spaces for all $p\geq 1$, and with $c\lesssim 1$ for Schatten-$p$ spaces for $1\leq p\leq 2$. Furthermore, we have laid out a general framework for producing such efficient data structures for general {metrics}: as long as there are (computationally efficient) average distortion embeddings of such metrics into $\ell_1^d$ or $\ell_2^d$, we can produce efficient NNS data structures (Theorem \ref{thm:avg2NNS}). This framework is an analogue of the cutting modulus framework from~\cite{spectral}, but allows efficient pre-processing. 

This connection between NNS data structures and low average distortion embeddings naturally warrants further research into constructing such computationally efficient embeddings of metric spaces into \(\ell_1^d\) or \(\ell_2^d\).  An interesting challenge is presented by the Schatten-$p$ spaces where $p>2$. As noted earlier, the bottleneck in our construction (Theorem~\ref{thm:embedding-Sp-2}) is the design of an efficient algorithm for computing a center $T$ satisfying the conditions of Lemma~\ref{lm:embedding-gen} for embedding Schatten-\(p\) into Schatten-$2$.

\begin{problem}
Given a dataset $P$ of $n$ $d\times d$ symmetric matrices, find a matrix $T$ in $\poly(n,d,1/\varepsilon)$ time such that either $0$ is a $(C,\varepsilon)$-median of $\tilde{M}_{p,2}(T-P)$, or \(\left\|\frac1n \sum_{X\in P}\tilde{M}_{p,2}(T-X)\right\|_{C_2}\le \varepsilon\).
\end{problem}

NNS data structures with approximation \(d^{o(1)}\), small space complexity, and efficient query time, but exponential preprocessing, are known for arbitrary \(d\)-dimensional norms~\cite{daher}. We conjecture that the approximation can be improved, and the preprocessing time can also be made polynomial via our low average distortion embeddings framework. It would suffice to make the main result of \cite{naor2019average} algorithmic, as follows.

\begin{problem}
Given an $n$-point dataset $P$ in a $d$-dimensional Banach space $(X,\|\cdot\|)$, construct an embedding $f:X^\frac12 \to \ell_2^d$ (where $X^{\frac12}$ is the $\frac12$-snowflake of $X$) with $2$-average distortion $\lesssim \sqrt{\log d}$ with respect to $P$ such that \(f\) can be computed from $P$ in time \(\poly(nd)\), stored using \(poly(d)\) bits, and evaluated in time \(\poly(d)\).
\end{problem}
A solution to this problem will imply NNS data structures for any \(d\)-dimensional norm with polynomial time pre-processing, nearly linear space, sub-linear query time, and approximation poly-logarithmic in the dimension, solving also an open problem in~\cite{spectral}. Note that such data structures are not known even with exponential pre-processing, but it was shown in~\cite{spectral} that they do exist in the cell-probe model.

Finally, on a somewhat different note, it would also be very interesting to further optimize the approximation factor $c$ of our NNS data structures, even in the special case of $\ell_p$ spaces. 

\begin{problem}
Establish Theorem~\ref{thm:ds-ellp} with $c\lesssim \frac{\log p}{\varepsilon}$. 
\end{problem}
A solution to this problem would interpolate between the data structures for \(\ell_1^d\) and \(\ell_2^d\) where constant approximation is possible, and Indyk's data structure for \(\ell_\infty^d\) which guarantees an \(O(\log \log d)\) approximation~\cite{I01} and is optimal in several natural models~\cite{ACP08}.

\bibliographystyle{alpha}
\bibliography{bibfile}
\end{document}